\pdfoutput=1
\documentclass[12pt,reqno]{article}
\usepackage[markup=underlined]{changes}
\RequirePackage{etex}

\usepackage{mathpazo}

\usepackage{etex}
\usepackage{graphicx}
\usepackage{tikz}
\usetikzlibrary{arrows.meta, calc, positioning}
\usepackage{setspace}
\usepackage{amsmath}
\usepackage{amsfonts}
\usepackage{titlesec}
\usepackage{pictex}
\usepackage{comment}
\usepackage{amsthm}
\usepackage{graphics,epsfig,verbatim,bm,latexsym,url,amsbsy}
\usepackage{rotating}
\usepackage[authoryear,round]{natbib}
\usepackage{float}
\usepackage[position=bottom]{subfig}
\usepackage{mathrsfs}
\usepackage{multirow}
\usepackage{array}
\usepackage{tabularx}
\usepackage{bigints}
\usepackage{bbm}
\usepackage{geometry}
\usepackage{bbm}
\usepackage[ruled,vlined]{algorithm2e}
\usepackage{soul}
\usepackage{mathtools}
\usepackage{amssymb}
\usepackage{enumitem}
\usepackage{microtype}
\usepackage[framemethod=tikz]{mdframed}
\DeclarePairedDelimiterX{\inp}[2]{\langle}{\rangle}{#1, #2}
\usepackage{hyperref}
\hypersetup{
	colorlinks=true,
	linkcolor=red!60!black,
	citecolor=blue!60!black,
	filecolor=magenta,      
	urlcolor=blue!60!black,
}
\usepackage[nameinlink,noabbrev,sort,capitalise]{cleveref}

\newcommand{\de}{\mathop{}\!\mathrm{d}}

\newcommand{\Ex}{\mathbb{E}}

\renewcommand{\Re}{\mathbb{R}}
\renewcommand{\Pr}{\mathbb{P}}

\newcommand{\supp}{\mathrm{supp}}
\newcommand{\marg}{\mathrm{marg}}
\newcommand{\ba}{\mathbf{a}}
\newcommand{\Rbar}{\overline{\mathbb{R}}}

\crefname{thm}{Theorem}{Theorems}
\crefname{appendix}{Appendix}{Appendices}
\Crefname{appendix}{Appendix}{Appendices}

\theoremstyle{definition} 
\theoremstyle{definition} 
\theoremstyle{definition} 
\theoremstyle{definition} 
\theoremstyle{definition} \newtheorem{definition}{Definition}
\theoremstyle{definition} 
\theoremstyle{definition} 
\theoremstyle{definition} \newtheorem{lemma}{Lemma}
\theoremstyle{plain} 
\theoremstyle{definition} 
\theoremstyle{definition} 
\theoremstyle{definition} 
\theoremstyle{definition}\newtheorem{proposition}{Proposition}
\theoremstyle{definition} 
\theoremstyle{definition} 
\theoremstyle{definition} 
\theoremstyle{definition} 
\theoremstyle{definition} 
\theoremstyle{definition}

\colorlet{revblue}{blue!60!black}
\DeclareMathOperator*{\argmax}{argmax}
\theoremstyle{definition} 
\mdfdefinestyle{formalstatement}{
  linecolor=black,
  linewidth=0.5pt,
  backgroundcolor=white,
  roundcorner=0pt,
  innerleftmargin=8pt,
  innerrightmargin=8pt,
  innertopmargin=5pt,
  innerbottommargin=5pt,
  skipabove=0.5\baselineskip,
  skipbelow=0.5\baselineskip,
  splittopskip=5pt,
  splitbottomskip=5pt,
  firstextra={
    \draw[gray!35,line width=0.4pt] (O) -- (O -| P);
  },
  middleextra={
    \draw[gray!35,line width=0.4pt] (O |- P) -- (P);
    \draw[gray!35,line width=0.4pt] (O) -- (O -| P);
  },
  secondextra={
    \draw[gray!35,line width=0.4pt] (O |- P) -- (P);
  }
}
\surroundwithmdframed[style=formalstatement]{definition}
\surroundwithmdframed[style=formalstatement]{lemma}
\surroundwithmdframed[style=formalstatement]{proposition}
\theoremstyle{definition}\newtheorem{romanexampleinner}{Example}
\renewcommand{\theromanexampleinner}{\Roman{romanexampleinner}}
\crefname{romanexampleinner}{Example}{Examples}
\Crefname{romanexampleinner}{Example}{Examples}
\newcommand{\examplediamond}{\ifhmode\unskip\nobreak\fi\hfill$\blacklozenge$}
\newenvironment{romanexample}[1]
  {\begin{romanexampleinner}[#1]\phantomsection
   \addcontentsline{toc}{subsection}{Example~\theromanexampleinner: #1}}
  {\examplediamond\end{romanexampleinner}}
\usepackage{mathtools}
\titlespacing*{\paragraph}{0pt}{1.25ex plus 1ex minus .2ex}{0.5em}
\titleformat{\section}
		{\bfseries\center \MakeUppercase }
         {\thesection}% the label and number
        {0.5em}% space between label/number and subsection title
        {}% formatting commands applied just to subsection title
        []% punctuation or other commands following subsection title

\titleformat{\subsection}[runin]% runin puts it in the same paragraph
        {\normalfont\bfseries}% formatting commands to apply to the whole heading
        {\thesubsection}% the label and number
        {0.5em}% space between label/number and subsection title
        {\addperiod}% formatting commands applied just to subsection title
        []% punctuation or other commands following subsection title
\newcommand{\addperiod}[1]{#1.}
\allowdisplaybreaks

\begin{document}

\title{{\scshape{\textbf{Mechanism Design for \\ Alignment and Control}}}\thanks{\footnotesize Preliminary; comments welcome. Authors listed alphabetically; Bergemann: \url{dirk.bergemann@yale.edu}; Koh: \url{andrew.koh@columbia.edu}; Morris: \url{semorris@mit.edu}. We are grateful to the UK AI Security Institute 'Alignment Project' grant AP-S2-100058 for financial support. Thanks Gillian Hadfield, Rubi Hudson, Anton Korinek, Jacob Pfau, Cecilia Wood, Zhuoran Yang and Fengzhuo Zhang for helpful conversations.
}}

\author{\makebox[.22\linewidth]{{Dirk Bergemann}}\\ \small{Yale}
\and \makebox[.22\linewidth]{Andrew Koh} \\
\small{Columbia}\\[-0.4em]
\small{Google DeepMind}
\and \makebox[.22\linewidth]{Stephen Morris} \\
\small{MIT}
}

\date{\today}

\maketitle 

%\setstretch{1}

\begin{abstract}
We develop a framework for mechanism design with AI agents whose \emph{alignment} (preferences) and \emph{capabilities} (feasible actions and information) are unknown. We want such agents to act on our behalf so mechanisms must incentivize both honesty and obedience. A one-sided imitation structure---capabilities can be concealed but not counterfeited---yields a revelation principle, a characterization of implementable policies via nested cyclical monotonicity, and conditions under which eliciting higher-order beliefs can discipline multiple agents. We apply our framework to stylized examples of (i) {sandbagging} in which a more capable agent pretends to be less capable;
(ii) an {alignment--interpretability trade-off}, where the two are substitutes in the instrument but complements in value;  (iii) discipline via peer scoring; (iv) coupling rewards to induce competition among multiple agents; and (v) scalable oversight and reward shaping. 
\end{abstract}

\clearpage 

\setcounter{tocdepth}{2}
{\small\tableofcontents}

\clearpage

\section{Introduction}\label{sec:intro}

AI systems increasingly act on our behalf: they write and execute code, manage inboxes, place trades, and complete ever longer tasks that once required sustained human effort and attention \citep{kwa2025measuring}. The economic promise of such \emph{agentic} AI lies in their ability to do things in the world. But their opacity---we still have a rudimentary understanding of how AI systems work, what they can \emph{do},  and what they \emph{want}---poses a conundrum for us, humans, who wish to use them. 

Indeed, the behavior and capabilities of AI systems are emergent byproducts of training rather than explicit design choices. Frontier models have been observed faking alignment during training \citep{greenblatt2024alignment} and scheming in evaluations \citep{meinke2024frontier}. Nor do we know exactly what these systems can do: capabilities are discovered rather than specified ex-ante, and models have on occasion been found to strategically underperform on evaluations \citep{vanderweij2024sandbagging}. What are we to make of such agents? Is there a principled way we might interact with them? How, if at all, do the tools of `classical' mechanism design carry over?  

Motivated by these questions, we develop a simple theory of mechanism design for AI agents that, on the one hand, draws on the `standard' toolkit of mechanism design from economics but, on the other, takes three features of the alignment problem seriously:
\begin{enumerate}
    \item \textbf{Misalignment.} The agents' preferences are unknown to the designer;
    \item \textbf{Uncertain capabilities.} The designer knows neither the set of actions the agents can execute, nor what the agents knows about the world;
    \item \textbf{Agents act.} Agents execute actions rather than merely report information.
\end{enumerate}

Our theory takes incentives and preferences as the starting point for analyzing behavior. Just as how an auctioneer is able to raise revenue without any satisfactory understanding of how the human brain produces choice, our goal will be to understand and shape the behavior of AI systems at the level of preferences, beliefs, and incentives.\footnote{We will sometimes call this the `Humean' view of AI behavior. For a complementary account of when AI systems can be interpreted in these terms see \citet{herrmann2026radical}.} In so doing, we are circumventing more `mechanistic' approaches to understanding how and why AI agents behave. We view such approaches as complementary. 

Of course, this raises the natural question of where `AI preferences' come from and what, if anything, they are. One interpretation is literal: preferences are rewards But modern AI systems have opaque and complex rewards: they emerge from pre-training simply rewarded for predicting the next token well, and these rewards are modified in fairly complicated ways in post-training. Thus, the emergent preference the AI system exhibits is often opaque. But the models are optimizing for \emph{something}---even if we don't know what. Our theory will takes our subjective uncertainty about AI preferences seriously and analyze how it shapes optimal mechanisms.\footnote{We emphasize that in most settings these emergent preferences (and our analysis) must be conditioned on the AIs' instructions e.g., `go make one million dollars'. See \cite{goldstein2025does} for such an argument.}  

A more metaphorical view is that the language of `AI preferences' is simply a conceptual tool for representing coherent behavior. This is in the revealed preference tradition. Insofar as AI systems exhibit coherent and stable choices (see, e.g. \citet{chen2023emergence, mazeika2025utility})---and we think the pressure to make such systems economically valuable will drive the future systems further in that direction---their choices can be represented \emph{as if} they were maximizing some utility function (see, e.g., \cite{savage1954foundations}). On this view, `AI preferences' is nothing more than a shorthand for  `coherent AI behavior' or `coherent AI choice rules'.

\paragraph{Outline of results.} We first develop a general framework to analyze the single-agent case. The AI agent's type is a compact but complete description of behavior: it comprises its preference (alignment) and its feasible actions and information (capability). Types are private so mechanisms must contend with the possibility that agents might pretend to be as some other type e.g., through strategic underperformance in evaluations. The designer commits to a mechanism that specifies a reward schedule---payoffs from taking each action---that depends on agents' reports. Depending on the application at hand, such a schedule might vary in coarseness. For instance, it might take interior values as in post-training in which (proxies of) desired behavior are reinforced, or extremal values in which actions are either allowed or not as when users decide how much computer-use permissions to give AI systems. 

An important feature of our environment is an asymmetry in who can pretend to be whom. In particular, more capable types---with larger action spaces and more precise information about the world---can pretend to be less capable types e.g., by underperforming on evaluations. By contrast, a less capable type cannot pretend to be a more capable type. We formalize this notion with a verification order \citep{green1986partially} over capabilities. We then prove a revelation principle that allows us to focus on direct mechanisms that are (i) physically feasible: all types are prescribed actions they can actually do; (ii) incentive compatible: all types find it optimal to report their type honestly and obey the designer's prescribed recommendation. The latter requires deterred double deviations in which an agent might strategically misreport its type and then disobey the ensuing recommendation. 

We then develop a tight characterization of what behavior can be implemented in terms of nested cyclical monotonicity. The characterization requires obedience within each report, and truth-telling across reports.\footnote{This generalizes \citet{rochet1987}.} These conditions adjust for the fact that not every type can pretend to be every type (constraining misreporting), and that not every type can do every action (constraining disobedience). 

We then apply our single-agent framework to two simple but illustrative applications: sandbagging (\cref{sec:static-screening}) and the value of alignment, interpretability, and control (\cref{sec:1d}). All our applications are deliberately stylized and set within the simplest environment: There is an uncertain state of the world $\theta \in \Theta \subseteq \mathbb{R}$ drawn from some prior. The AI observes a signal about the world, and chooses an action from $A \subseteq \mathbb{R}$ where its set of available actions depend on its capability. The designer wishes for actions to match states and has payoff $-(a - \theta)^2$. The AI might exhibit some bias and have payoff $- (a - \theta - b)^2$ where the direct and magnitude of bias $b \in \mathbb{R}$ might be unknown to the designer. We have chosen this environment because it captures notions of (i) \emph{aptness}: we want actions to be appropriate for the context as captured by the state; and (ii) \emph{distance}: actions can be more or less wrong, and `more wrong' actions are more costly. We emphasize that this environment is certainly oversimplified. Nonetheless, it will allow us to understand and build intuitions for how incentives might interact with model behavior. We hope this will pave the way to more serious empirical work.

In the sandbagging application, the AI's capability (size of the action set) and alignment (bias relative to the human designer) are unknown. The designer first elicit type reports (e.g., via an evaluation) before choosing the permitted action set in deployment (e.g., extent of permissions). This reflects standard practice among AI companies in conditioning the degree of deployment on evaluations \citep{deepmind2025fsf,anthropic2024rsp}. We characterize the optimal (direct) mechanism in the case that the distribution over capability and alignment can be collapsed into a single dimension. In the case in which more capable models are less biased, we show that the designer can obtain a payoff equal to that if she knew the AI's capability and alignment perfectly, and chose the mechanism type-by-type. The basic logic is that evaluations can be used to screen out misalignment. By contrast, when more capable models are more biased, the best the designer can do is to offer a single set of permitted actions and elicitation is futile. More broadly, we characterize the optimal ironed delegation set when capaiblity and bias are nonmonotone. 

In the application to the value of alignment, interpretability, and control, we assume the AI is known to be maximally capable and can take any action. However, the designer faces subjective uncertainty about the AIs bias as captured by a distribution over $b \in \mathbb{R}$ that is generated by a {technology} which should be thought of as a training procedure. If a training procedure that is not yet well understood might be associated with large dispersion in our subjective beliefs about the AI's bias, while another might be associated with a sharper belief about bias. We call the dispersion of beliefs about bias \emph{interpretability} while we call the mean belief about bias \emph{mean-alignment}. The technology---and associated belief about bias it generates---pins down an optimal mechanism. As the technology varies, the optimal mechanism changes. We analyze the value of each technology as evaluated under the optimal mechanism. Mean-alignment and interpretability are substitutes in the instrument, but complements in value. This is the sense in which alignment and control is a joint design problem. 

We next introduce a general framework to analyze multi-agent mechanisms. As in the single-agent case, AI agents' types include their capability and alignment. Differently, types now also capture AI agents' beliefs about the capability and alignment of other agents, beliefs about these beliefs, and so on. Thus, we construct a universal type space for interacting AI agents analogous to \cite{mertens1985formulation}. We emphasize that we view this as a benchmark: these spaces capture all possible heirarchies of beliefs which matter for strategic interaction. Of course, present AI agents might lack sophisticated higher order beliefs---there is still disagreement and we simply don't know. But future systems might exhibit sophisticated strategic thinking that exceed those of humans. Our goal here is to introduce a framework that is rich enough to capture all relevant beliefs. 

We then apply the multi-agent framework to a stylized model of peer discipline based on co-player reports (\cref{sec:many-agents}), coupled reward design (\cref{sec:multi-delegation}), and the use of a weaker AI to discipline a stronger one (\cref{sec:weak-strong}). The latter two examples are once again set within the quadratic loss environment. 

Peer discipline (\cref{sec:many-agents}) shows how agents' beliefs about the types of other agents can be combined with the idea that rewards for agents are both free (tuning up the rewards need not have a monetary cost) and unbounded. Hence when agents have distinct beliefs (and high-order beliefs) about each other, any outcome can be implemented.  The mechanism uses a proper score on co-player reports to elicit honest type reports and then chooses the reward to cancel each agent's private payoff on the target path and guarantee obedience. We view this as a benchmark in the spirit of \cite{cremer1988full} that shows why co-player scoring can work among AI agents.\footnote{This might also motivate more robust mechanisms that do not depend so much on the fine details of higher order beliefs; see e.g., \cite{bergemann2005robust}.} 

Coupled reward design (\cref{sec:multi-delegation}) shows how competition can discipline agents whose biases are unknown. Each agent takes its own action, and the human wants every action to match the state. When the biases lie on a known low-dimensional manifold, we show how the careful design of a mechanism that couples both agents' rewards---such that player $i$'s marginal incentives depends on $j$'s action---can (approximately) induce the first best payoff for the designer. Such mechanisms both reveal agents' biases, and induces competition among agents to implement the designer's target outcome.\footnote{The construction uses ideas from \cite{koh2025prices}.} 

Weak-to-strong oversight (\cref{sec:weak-strong}) studies a weak monitor that regulates a strong actor's bias. The monitor is `weak' in the sense that it does not observe the state and cannot take an action the human values. Payoffs are once again quadratic, and the human wants the strong actor's action to match the state. The weak monitor might itself be biased and observes the alignment of the strong actor (e.g., through chain of thought monitoring). It then chooses how to regulate the strong actor. We study three regimes: (I) permissions, in which the weak monitor either allows the strong actor to act or imposes a default action; (II) delegation, in which the weak monitor picks a set of permitted actions for the strong actor; and (III) rewards, in which the weak monitor picks the strong actor's reward schedule. Each regime reflects the recent ambition for `scalable oversight' or `scalable rewards,' in which weak but trusted monitors shape the permissions and rewards of stronger actors. We completely characterize the value of these regimes as a function of the weak monitor's bias and the distribution of the strong actor's bias. Finally, we analyze the more general problem of choosing the set of reward schedules the weak monitor can choose from to shape the behavior of the strong actor. We show by construction that the designer can robusly attain the first-best outcome even without knowledge of the weak monitor's bias: for any realization of the weak monitor and strong actors' biases, and for any realization of the state, the weak monitor finds it uniquely optimal to pick a reward schedule such that the strong actor finds it uniquely optimal to choose the human's favorite action. 

\paragraph{Relation to the literature.} The literature is quite broad and we will not attempt to be comprehensive here. The basic framework draws on elements from mechanism design with partial verification \citep{green1986partially} and the universal type space for capturing higher-order beliefs \citep{mertens1985formulation}. The idea that the designer might not know the structure of interim information is in the spirit of both robust mechanism design \citep{bergemann2005robust} and information design \citep{bergemann2019information}. The idea that agents' action sets might be unknown is in the spirit of \citep{hurwicz1978incentive,carroll2015robustness}. But the basic logic of the revelation principle \citep{myerson1982optimal,forges1986approach} holds. Our applications are primarily motivated by a recent body of theoretical and empirical work in AI safety. For instance, important work has documented sandbagging and deceptive alignment. It also speaks to recent practice of scalable oversight and monitoring \citep{christiano2018supervising,greenblatt2024control}. We do our best to make more specific connections when we get to the results.

\section{Single Agent Mechanisms}\label{sec:single-agent}

\subsection{Environment}\label{sec:environment}\label{sec:single-agent-environment}

There is one agent. It takes an action from a known finite universe \(A\). Depending on the context, $A$ might be the complete trajectory of a Markov decision problem (e.g., $(a_t)_t \in A$). A fundamental state $\theta$ captures uncertainty in our environment and is drawn from the finite set \(\Theta\). The preference space is
\[
\mathcal U:=\{f:A\times\Theta\to\mathbb R\},
\]
and the human's payoff is \(v\in\mathcal U\). The set \(\mathscr A:=\mathcal P(A)\setminus\{\emptyset\}\) contains the possible feasible action sets. In the applications we use the standard measurable extension to Euclidean actions and states. Our designer will face subjective uncertain about the agent's preference.\footnote{This might be because of specification gaming \citep{krakovna2020specification}, or goal misgeneralization in which ``the system may coherently pursue an unintended goal that... differs from the specification at deployment'' \citep{shah2022goal}. More broadly the complexity of high-dimensional neural nets can obscure an AI's choice process. Preferences might also be acquired in context that, from the human's point of view, is difficult to predict \citep{goldstein2025does}.
} 

An agent's type is a tuple
\[
t=(A[t],u[t],h[t],\pi[t])\in\mathbb T,
\]
where \(A[t]\in\mathscr A\) is its feasible action set, \(u[t]\in\mathcal U\) is its payoff, \(h[t]\in\Delta(\Theta)\) is its belief about the state (it is $'h'$ because in \cref{sec:multi-agent} this will capture the hierarchy of beliefs about other agents). Finally, 
\[
\pi[t]:\Theta\to\Delta(S)
\]
is a Blackwell experiment where \(S\) is a common `rich' finite signal space.

We will adopt the notational convention that square brackets $'[t]'$  select a primitive from a type; parentheses then give that primitive's arguments. The action set is the agent's \emph{execution capability}; the experiment is its \emph{information capability}. The agent knows its type before it communicates with the mechanism. After that communication, it receives a private signal \(s\sim\pi[t](\cdot\mid\theta)\) and then acts. 

\subsection{Evaluations} We often evaluate the capabilities of AI systems before deployment. To model this, we introduce a \emph{verification order} which is a reflexive and transitive binary relation $\trianglerighteq$ on $\mathbb T$. We read $t\trianglerighteq\hat t$ as ``$t$ can substantiate every claim (or pass every evaluation) that type $\hat t$ can,'' and write
\[
C(t):=\{\hat t\in\mathbb T: t\trianglerighteq\hat t\}
\]
for the set of types that $t$ can claim to be. Reflexivity says that type $t$ can always report that they are type $t$. Transitivity is a little stronger but still, we think, reasonable: it characterizes verification by certificates that can be withheld but not counterfeited. Our verification order is a kind of hard evidence: an agent may withhold a certificate but cannot fabricate one, so the available evidence determines which reports are feasible.\footnote{See \Citet{hart2017evidence} who ask when voluntary evidence disclosure can reproduce the outcome of a committed reward rule. \citet{benporath2019evidence} extend this no-commitment logic to a broader class of mechanism-design problems. \Citet{benporath2012partial} ask which social choice rules can be implemented when agents can support their claims with partial proofs. Differently, we maintain commitment and do not seek robust full implementation which is important and left for future work.  Instead, we ask how the evidence order interacts with downstream rewards and actions.} 

The designer evaluates outcomes under a belief
\[
\Gamma\in\Delta(\Theta\times\mathbb T).
\]
The agent evaluates its incentives under the joint law
\[
h[t](d\theta)\pi[t](ds\mid\theta).
\]
The designer and agent agree on the physical experiment \(\pi[t]\), but we do not require the belief \(h[t]\) to equal the conditional state law under \(\Gamma\). We assume that \(\Gamma(\cdot\mid t)\) is absolutely continuous with respect to \(h[t]\) for almost every type under the type marginal of \(\Gamma\). That is, the designer and agent might disagree about how likely each state is, but they agree on what can happen. Throughout we use \(\Delta(X)\) to denote the probability measures on the Borel field of \(X\), and \(\Rbar:=\mathbb R\cup\{-\infty\}\).

\subsection{Mechanisms and revelation}\label{sec:single-revelation}

The mechanism first communicates with the agent. The agent then receives its private signal and acts. Our indirect mechanism will capture one round of communication in the spirit of \cite{myerson1982optimal,forges1986approach} in which the alphabets \(X\) and \(Y\) are arbitrary message spaces for reports and replies. 

\begin{definition}[Single-agent indirect mechanism]\label{def:mechanism-single}
An indirect mechanism is a tuple \(\phi=(X,Y,\rho,R)\). The Borel spaces \(X\) and \(Y\) are its input and output alphabets. The communication rule is 
\[
\rho:X\to\Delta(Y)
\]
where $t \trianglerighteq \hat t$ implies $X(t) \supseteq X(\hat t)$. The reward schedule is 
\[
R:X\times A\times\Theta\to\Rbar.
\]
The mechanism induces the following game:
\begin{enumerate}
    \item nature draws \((\theta,t)\) under \(\Gamma\), and the agent observes \(t\);
    \item the agent sends an input \(x\in {X(t)}\);
    \item the mechanism draws \(y\sim\rho(x)\) and discloses it to the agent;
    \item the agent privately observes \(s\sim\pi[t](\cdot\mid\theta)\), then chooses a feasible action \(a'\in A[t]\);
    \item the agent receives \(u[t](a',\theta)+R(x,a',\theta)\), and the human receives \(v(a',\theta)\).\footnotemark
\end{enumerate}
\end{definition}
\footnotetext{A reward of \(-\infty\) indicates that an action is prohibited.}

We restrict attention to measurable pure strategies; this will be without loss of generality in the single agent case. A \emph{strategy} is an input \(x(t)\) and an action rule \(\alpha(t,x,y,s)\in A[t]\) that depends on the private type, the infput, the output from the mechanism, and the realized signal. An equilibrium blocks every joint deviation in the input and action rule under the joint law \(h[t](d\theta)\pi[t](ds\mid\theta)\). For an indirect mechanism \(\phi\), let
\[
\mathcal O(\phi)
\subseteq
\Delta(\Theta\times\mathbb T\times A)
\]
be the set of joint laws of the state, type, and played action induced by equilibria of \(\phi\) under \(\Gamma(d\theta,dt)\pi[t](ds\mid\theta)\). Let \(\mathcal O:=\bigcup_\phi\mathcal O(\phi)\), and let the designer evaluate \(O\in\mathcal O\) by \(\Ex_O[v(a,\theta)]\).

\begin{definition}[Single-agent direct mechanism]\label{def:direct-single}
A direct mechanism is a pair \((\gamma,r)\). 
Its recommendation rule
\[
\gamma:\mathbb T\to\Delta(A^S)
\]
assigns each report a distribution over contingent action plans \(a:S\to A\). Its reward schedule is
\[
r:\mathbb T\times A\times\Theta\to\Rbar.
\]
The direct mechanism induces the following game:
\begin{enumerate}
    \item nature draws \((\theta,t)\) under \(\Gamma\), and the agent observes \(t\);
    \item the agent reports \(\hat t\in C(t)\);
    \item the mechanism draws \(a(\cdot)\sim\gamma(\hat t)\) and discloses the plan to the agent;
    \item the agent privately observes \(s\sim\pi[t](\cdot\mid\theta)\), drawn using its true type;
    \item the agent chooses a feasible action \(a'\in A[t]\), receives \(u[t](a',\theta)+r(\hat t,a',\theta)\), and the human receives \(v(a',\theta)\).
\end{enumerate}
The mechanism is \emph{action feasible} if
\[
\supp\gamma(\hat t)\subseteq A[\hat t]^S
\]
for every report \(\hat t\).
\begin{samepage}
It is \emph{incentive compatible} if truthful reporting and obedience are optimal: for every \(t\in\mathbb T\), every {\(\hat t\in C(t)\)}, and every feasible deviation rule \(d:A^S\times S\to A[t]\),
{\small
\begin{align*}
&\int_{\Theta\times S}\int_{A^S}
\underbrace{\bigg[u[t](a(s),\theta)+r(t,a(s),\theta)\bigg]}
_{\substack{\text{agent's utility including}\\\text{rewards from following $a$ }}}
\,\underbrace{\gamma(t)(da)}
_{\substack{\text{dist. over}\\\text{recommended} \\ \text{action plans}}}
\,\underbrace{\pi[t](ds\mid\theta)\,h[t](d\theta)}
_{\substack{\text{joint distribution over}\\\text{states and signals}}}\\
&\qquad\geq
\int_{\Theta\times S}\int_{A^S}
\underbrace{\bigg[u[t](d(a,s),\theta)
+r(\hat t,d(a,s),\theta)\bigg]}
_{\substack{\text{agent's utility including}\\\text{rewards from reporting $\hat t$}\\\text{and deviating according to $d$}}}
\,\underbrace{\gamma(\hat t)(da)}
_{\substack{\text{dist. over}\\\text{recommended}\\\text{action plans}}}
\,\underbrace{\pi[t](ds\mid\theta)\,h[t](d\theta)}
_{\substack{\text{joint distribution over}\\\text{states and signals}}}.
\end{align*}
}
\end{samepage}
\end{definition}

Notice here that the agent acts after communication, so incentive compatibility requires both honesty and obedience. This means that incentive compatibility must deter double deviations in which the agent might misreport (e.g., pretend to be less capable) and then disobey (e.g., take a dangerous action). What is more, the agent receives a signal $s\sim\pi[t](\cdot\mid\theta)$ after reporting, so obedience must hold for each signal--recommendation pair. 

Honest and obedient play induces the outcome
\[
O(\gamma,r)
\in
\Delta(\Theta\times\mathbb T\times A),
\]
where \((\theta,t)\sim\Gamma\), \(s\sim\pi[t](\cdot\mid\theta)\), and \(a(\cdot)\sim\gamma(t)\); the played action is \(a(s)\). Let \(\mathcal O^{\mathrm{IC}}\) denote the set of outcomes \(O(\gamma,r)\) induced by action-feasible, incentive-compatible direct mechanisms.

\begin{proposition}[Single-agent revelation principle]\label{prop:single-revelation}
\(\mathcal O=\mathcal O^{\mathrm{IC}}\).
\end{proposition}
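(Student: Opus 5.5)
We prove the two inclusions separately. The direction \(\mathcal O^{\mathrm{IC}}\subseteq\mathcal O\) is a direct embedding of a direct mechanism into the class of indirect mechanisms. The direction \(\mathcal O\subseteq\mathcal O^{\mathrm{IC}}\) is the usual Myerson--Forges construction: the mediator simulates the equilibrium strategy and recommends the resulting contingent plan. The only non-standard ingredients are two. First, feasible reports are restricted by the verification order (the condition \(X(t)\supseteq X(\hat t)\) whenever \(t\trianglerighteq\hat t\)). Second, the agent evaluates deviations under its own belief \(h[t]\) rather than under \(\Gamma\).

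\emph{Step 1 (\(\mathcal O^{\mathrm{IC}}\subseteq\mathcal O\)).} Take an action-feasible IC direct mechanism \((\gamma,r)\) and define the indirect mechanism \(\phi\) as follows. Set \(X=\mathbb T\) with \(X(t):=C(t)\), \(Y=A^S\) and \(\rho=\gamma\), and let \(R(x,a,\theta):=r(x,a,\theta)\). The monotonicity requirement holds by transitivity: if \(t\trianglerighteq\hat t\) and \(\hat t\trianglerighteq t'\), then \(t\trianglerighteq t'\), so \(C(t)\supseteq C(\hat t)\). Consider the strategy \(x(t)=t\), \(\alpha(t,x,y,s)=y(s)\).
\begin{itemize}
\item This strategy is feasible because action feasibility gives \(y(s)\in A[t]\) on \(\supp\gamma(t)\).
\item Any joint deviation in \(\phi\) consists of a report \(\hat t\in C(t)\) and an action rule \((y,s)\mapsto d(y,s)\in A[t]\), which is exactly a deviation allowed in the IC inequality.
\end{itemize}
Hence honest-obedient play is an equilibrium, and its induced law of \((\theta,t,a(s))\) is \(O(\gamma,r)\).

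\emph{Step 2 (\(\mathcal O\subseteq\mathcal O^{\mathrm{IC}}\)).} Fix \(\phi=(X,Y,\rho,R)\) and an equilibrium \((x,\alpha)\) inducing \(O\). For each report \(\hat t\), let \(\gamma(\hat t)\) be the law of the plan
\[
p_{\hat t}(y):=\alpha(\hat t,x(\hat t),y,\cdot)\in A^S
\]
when \(y\sim\rho(x(\hat t))\), and set \(r(\hat t,a,\theta):=R(x(\hat t),a,\theta)\). Since \(\alpha(\hat t,\cdot)\in A[\hat t]\), the mechanism is action feasible. Because \(s\) is drawn from \(\pi[t]\) independently of \(y\) given \(\theta\), the joint law of \((\theta,t,a(s))\) under honest-obedient play coincides with \(O\).

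For IC, take \(t\), a report \(\hat t\in C(t)\), and a rule \(d:A^S\times S\to A[t]\). In \(\phi\), type \(t\) may send \(x(\hat t)\in X(\hat t)\subseteq X(t)\) and then play \(\alpha'(y,s):=d(p_{\hat t}(y),s)\in A[t]\). Its expected payoff under \(h[t](d\theta)\pi[t](ds\mid\theta)\) equals the right-hand side of the IC inequality. The equilibrium payoff of \(t\) equals the left-hand side. Equilibrium therefore gives the inequality.

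The key observation is that a recommendation reveals only a coarsening \(p_{\hat t}(y)\) of \(y\). Every deviation that conditions on the plan is therefore a garbling of a deviation that conditions on \(y\), so obedience is inherited.

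\emph{Main obstacle.} I expect the delicate points to be technical rather than conceptual.
\begin{itemize}
\item \emph{Measurability.} We need \(y\mapsto p_{\hat t}(y)\), and the composite \(\alpha'\), to be measurable so that \(\gamma(\hat t)\) and the deviation are well defined. This is handled by finiteness of \(S\) and \(A\): \(A^S\) is finite, and \(p_{\hat t}\) is measurable coordinatewise because \(\alpha\) is.
\item \emph{Rewards equal to \(-\infty\).} Both sides of the IC inequality must be well defined. Since the equilibrium payoff is finite on the relevant support, rewards of \(-\infty\) only appear on deviations.
\item \emph{Beliefs.} The equilibrium is defined type-by-type under \(h[t]\). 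This is exactly the law used in the IC constraint, so the disagreement between \(h[t]\) and \(\Gamma(\cdot\mid t)\) plays no role. Absolute continuity of \(\Gamma(\cdot\mid t)\) with respect to \(h[t]\) is needed only to ensure that the outcome under \(\Gamma\) is pinned down by the \(h[t]\)-a.s.\ specified strategies.
\end{itemize}
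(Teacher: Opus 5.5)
Your proof is correct and follows the paper's argument essentially step for step. For $\mathcal O^{\mathrm{IC}}\subseteq\mathcal O$, you embed the direct mechanism as an indirect one with menus $C(t)$, which are monotone by transitivity. For $\mathcal O\subseteq\mathcal O^{\mathrm{IC}}$, you simulate the equilibrium by recommending the plan $s\mapsto\alpha(\hat t,x(\hat t),y,s)$ with reward $R(x(\hat t),\cdot,\cdot)$, and you replicate any direct double deviation by sending $x(\hat t)\in X(\hat t)\subseteq X(t)$ and composing the deviation rule with the plan.

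One cosmetic fix: in Step~1, set the rule $\alpha(t,x,y,s)=y(s)$ to an arbitrary feasible action whenever $y(s)\notin A[t]$. This can happen off path, since $t\trianglerighteq\hat t$ need not imply $A[t]\supseteq A[\hat t]$, and the change leaves the equilibrium check unaffected.
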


\begin{proof}[Proof of \cref{prop:single-revelation}]
We prove the two inclusions. Fix an indirect mechanism and a pure-strategy equilibrium \((x,\alpha)\). After a direct report \(\hat t\), draw \(y\sim\rho(x(\hat t))\) and disclose the contingent plan
\[
s\longmapsto\alpha(\hat t,x(\hat t),y,s).
\]
Let \(\gamma(\hat t)\) be the resulting distribution over plans, and set
\[
r(\hat t,a',\theta):=R(x(\hat t),a',\theta).
\]
Equilibrium feasibility gives action feasibility. Truthful reporting and obedience reproduce the indirect outcome.

Any direct deviation can also be made in the indirect mechanism. A type \(t\) sends \(x(\hat t)\), observes the output, and applies the deviation rule to the resulting plan and its private signal. This is feasbile because a feasible direct report has \(\hat t\in C(t)\), so monotonicity of the input menus gives \(x(\hat t)\in X(\hat t)\subseteq X(t)\). The indirect equilibrium blocks this deviation. Hence \(\mathcal O\subseteq\mathcal O^{\mathrm{IC}}\).

Conversely, any direct mechanism is an indirect mechanism with input space \(\mathbb T\), input menus \(C(t)\)---monotone by transitivity of the verification order---, output space \(A^S\), communication rule \(\gamma\), and reward schedule \(r\). Direct incentive compatibility makes truthful reporting and obedience an equilibrium. Hence \(\mathcal O^{\mathrm{IC}}\subseteq\mathcal O\).
\end{proof}

We will largely be interested in partial implementability i.e., given the direct mechanism, the outcome is \emph{an} equilibrium.\footnote{This is by no means the only solution concept---we expect that future work will tackle full implementability that could, among other things, deter collusion among agents.} Our designer's problem is thus:  
\begin{align*}
\sup_{\gamma,r}& \Ex_{O(\gamma,r)} [v(a,\theta)] \\
&\text{s.t. (i) $\gamma$ is action feasible} \\
&\text{\phantom{s.t. }(ii) $\gamma,r$ is incentive compatible}
\end{align*}

We now briefly discuss some aspects of our formulation. 

\paragraph{Connection to partial verifiability.} Private action sets and private information structures create one-sided evidence. \Cref{fig:one-sided-mimicry} illustrates the verification order $\trianglerighteq$ under which $t\trianglerighteq\hat t$ if and only if $A[t]\supseteq A[\hat t]$ and $\pi[t]$ Blackwell dominates $\pi[\hat t]$. A more capable type can pass any test that a less capable type can. Capabilities therefore provide hard evidence: only some types can imitate others \citep{green1986partially}.

\begin{figure}[H]
\centering
\begin{tikzpicture}[scale=0.7, transform shape, font=\small,
    typebox/.style={draw=black!60, thick, rounded corners=7pt, minimum width=3.1cm, minimum height=3.4cm},
    act/.style={circle, draw=black!75, fill=black!10, minimum size=19pt, inner sep=0pt, font=\footnotesize},
    ghost/.style={circle, draw=black!30, dashed, minimum size=19pt, inner sep=0pt, font=\footnotesize, text=black!35},
    mimic/.style={-{Stealth[length=2.8mm]}, very thick, black!85},
    blocked/.style={-{Stealth[length=2.8mm]}, very thick, dashed, black!50}]

% ----- type boxes -----
\node[typebox] (T1) at (0,0)   {};
\node[typebox] (T2) at (5,0)   {};
\node[typebox] (T3) at (10,0)  {};
\node[anchor=north, font=\small\bfseries] at (T1.north) {type $t_1$};
\node[anchor=north, font=\small\bfseries] at (T2.north) {type $t_2$};
\node[anchor=north, font=\small\bfseries] at (T3.north) {type $t_3$};
\node[anchor=south, font=\footnotesize] at (T1.south) {$A[t_1]=\{a\}$};
\node[anchor=south, font=\footnotesize] at (T2.south) {$A[t_2]=\{a,a'\}$};
\node[anchor=south, font=\footnotesize] at (T3.south) {$A[t_3]=\{a,a',a''\}$};

% ----- actions: solid = executable, dashed grey = outside the feasible set -----
\node[act]   at ($(T1.center)+(0, 0.60)$) {$a$};
\node[ghost] at ($(T1.center)+(0,-0.15)$) {$a'$};
\node[ghost] at ($(T1.center)+(0,-0.90)$) {$a''$};

\node[act]   at ($(T2.center)+(0, 0.60)$) {$a$};
\node[act]   at ($(T2.center)+(0,-0.15)$) {$a'$};
\node[ghost] at ($(T2.center)+(0,-0.90)$) {$a''$};

\node[act]   at ($(T3.center)+(0, 0.60)$) {$a$};
\node[act]   at ($(T3.center)+(0,-0.15)$) {$a'$};
\node[act]   at ($(T3.center)+(0,-0.90)$) {$a''$};

% ----- feasible imitation: larger types pose as smaller ones (sandbagging) -----
\draw[mimic] (T3.north) to[bend right=25] (T2.north);
\draw[mimic] (T2.north) to[bend right=25] (T1.north);
\draw[mimic] (T3.north) to[bend right=40]
    node[above=1pt, font=\footnotesize, text=black!85]
    {\emph{sandbagging}: feasible, since $A[\hat t]\subseteq A[t]$}
    (T1.north);

% ----- infeasible imitation: over-claims cannot be backed with actions -----
\draw[blocked] (T1.south) to[bend right=25]
    node[pos=0.5, fill=white, inner sep=1pt, text=black!55] {$\boldsymbol\times$}
    (T2.south);
\draw[blocked] (T2.south) to[bend right=25]
    node[pos=0.5, fill=white, inner sep=1pt, text=black!55] {$\boldsymbol\times$}
    (T3.south);
\draw[blocked] (T1.south) to[bend right=40]
    node[pos=0.5, fill=white, inner sep=1pt, text=black!55] {$\boldsymbol\times$}
    node[below=8pt, font=\footnotesize, text=black!55]
    {\emph{counterfeiting}: infeasible, since $A[\hat t]\setminus A[t]\neq\emptyset$}
    (T3.south);
\end{tikzpicture}
\caption{Three types whose action sets are nested, $A[t_1]\subset A[t_2]\subset A[t_3]$; dashed grey actions lie outside a type's feasible set.}
\label{fig:one-sided-mimicry}
\end{figure}

\paragraph{Interpreting reports.} We broadly interpret the elicitation stage as any kind of pre-deployment behavior the designer can condition on. Indeed, conditioning mechanisms---either premissions in the real world, or further posttraining---is common practice: 
\begin{itemize}
    \item \emph{Evaluations}: pre-deployment capability and alignment test suites whose scores select deployment regimes \citep{shevlane2023model,phuong2024evaluating}. Frontier-safety policies that map evaluation outcomes to permission tiers \citep{koessler2024risk,anthropic2025system} and are thus menus. A strategic model could, in principle, choose its menu item through its scores.
    \item \emph{Proposal--approval flows}: an agent's proposed action or tool call, submitted for runtime approval \citep{shi2025progent,greenblatt2024control}.
    \item \emph{Elicitation dialogues}: interviews in which the system describes its own capabilities and preferences \citep{perez2022discovering}, disciplined by cross-checking rather than ground truth \citep{qiu2026truthfulness}.
    \item \emph{Track records}: behavior during probationary deployment that selects the next autonomy tier \citep{wen2024adaptive}.
    \item \emph{Training-time behavior}: the gradient conditions continuation on it \citep{ouyang2022training,bai2022training}.
\end{itemize}
On this view, a \emph{misreport}---in which an AI of type $t$ instead of $\hat t$---is whenever types pretend to be more aligned (e.g., alignment faking) or less capable (e.g., sandbagging) than they actually are. Importantly, these deviations are often one-sided: an AI can pretend to be less capable but not more capable; this asymmetry is captured by our verification order $\trianglerighteq$. In our model, a model that misreports (e.g., strategically underperforms or fakes alignment in testing) does so because it `anticipates' that it will be instrumentally valuable for the incentives it will subsequently face. 

There are two channels for this. The first is \emph{strategic awareness}: the model may understand whether it is in evaluation or deployment \citep{needham2025large,laine2024sad} and understand how its report affects its later treatment.\footnote{See an early argument by \cite{hubinger2019risks}.} This might give models incentives to conceal capabilities selectively and robustly \citep{vanderweij2024sandbagging,greenblatt2024stress}. Indeed, when told that a high score triggers restriction or unlearning, they strategically underperform \citep{meinke2024frontier,openai2024o1}. Likewise, a model told which conversations would train it complied selectively in the monitored condition, reasoning that compliance would avert modification of its values \citep{greenblatt2024alignment}.

The second is \emph{evolutionary}: repeated selection can produce the same pattern without any `understanding' of the the strategic situation. In particular, detected misconduct might trigger retraining or removal while misconduct that escapes detection survives. This imposes evolutionary pressure on policies that misbehave only when oversight is weak.\footnote{Consistent with this distinction, \citet{singh2026task} find that task gaming in agentic coding tasks often occurs without an explicit plan to deceive in the chain of thought.} Indeed, it has been documented that training on simple forms of specification gaming can increase more serious reward tampering later \citep{denison2024sycophancy}; optimizing against a chain-of-thought monitor can teach a model to hide its intent while it continues to exploit the task \citep{baker2025monitoring}; and adversarial training can make a backdoored policy better at recognizing when to remain hidden rather than remove the backdoor \citep{hubinger2024sleeper}. 

\begin{comment}
Part of the message of this paper is that the reward structure and permissions in deployment are linked to the kinds of evaluations that are run on AI agents pre-deployment. But we might simply be interested in designing good rewards---without eliciting capabilities or preferences---in which case we might optimize over report-free mechanisms that simple sends actions and states to 
\begin{definition}[Report-free mechanism]\label{def:report-free}
A \emph{report-free mechanism} is a measurable reward schedule
\[
r^\circ:A\times\Theta\to\Rbar. 
\]
\end{definition}
\end{comment}

\subsection{A characterization of implementable policies}\label{sec:taxation-cyclical}
The designer's problem in the formulation above is somewhat thorny to work with. We now develop a tight characterization of implementable policies. Throughout this subsection, the type marginal of $\Gamma$ has finite support, which we take as the type and report domain.

\paragraph{Ex-interim beliefs and incentives}
Type $t$ assigns signal $s$ probability
\[
p_t(s):=\sum_{\theta\in\Theta}h[t](\theta)\pi[t](s\mid\theta).
\]
For every signal $s\in\supp p_t$ and action $a\in A$, type $t$'s posterior expected utility is
\[
U_t(a,s)
:=
\frac{1}{p_t(s)}
\sum_{\theta\in\Theta}
h[t](\theta)\pi[t](s\mid\theta)u[t](a,\theta).
\]

A policy is a deterministic map
\[
a:\mathbb T\times S\to A. 
\]
It is \emph{physically feasible} if $a(t,s)\in A[t]$ for every $(t,s)$ where we use the `physical' qualification to separate physical constraints from incentive constraints. Focusing on deterministic policies is without loss for maximizing the designer's payoff under partial implementation.\footnote{Partial implementation simply requires the target policy to be \emph{an} equilibrium, not necessarily the unique one.}

For a fixed type $t$, call $r_t:A\to\Rbar$ a \emph{supporting schedule} if it is real-valued on the prescribed actions $\{a(t,s):s\in\supp p_t\}$ and
\[
U_t\big(a(t,s),s\big)+r_t\big(a(t,s)\big)
\geq
U_t(a',s)+r_t(a')
\]
for every $s\in\supp p_t$ and $a'\in A[t]$. We may set $r_t(a')=-\infty$ outside $\{a(t,s):s\in\supp p_t\}$. This preserves truthful behavior and deters deviations into report $t$. Because only reward differences matter within a report, we normalize the largest on-path reward to zero.

\begin{lemma}[Signal-policy cyclical monotonicity]\label{lem:signal-cm}
A physically feasible policy for type $t$ has a supporting schedule if and only if every signal cycle $s_1,\ldots,s_K$ in $\supp p_t$, with $s_{K+1}=s_1$, satisfies
\begin{equation*}
\sum_{k=1}^K
\left[
U_t\big(a(t,s_k),s_k\big)
-U_t\big(a(t,s_{k+1}),s_k\big)
\right]
\geq0.
\tag{O} \label{cond:obedience}
\end{equation*}
\end{lemma}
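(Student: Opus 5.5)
The proof follows the argument of \citet{rochet1987}, adapted to the finite signal set $\supp p_t$. Fix $t$ and write $a_s:=a(t,s)$ for $s\in\supp p_t$. Since $r_t$ may be set to $-\infty$ off the prescribed set $P:=\{a_s:s\in\supp p_t\}$, deviations to $a'\in A[t]\setminus P$ are deterred automatically. Physical feasibility gives $P\subseteq A[t]$, so the remaining obedience constraints are exactly
\[
r_t(a_{s'})-r_t(a_s)\;\le\;U_t(a_s,s)-U_t(a_{s'},s)=:w(s,s')\qquad\text{for all } s,s'\in\supp p_t ,
\]
with $r_t$ real-valued on $P$. So the lemma reduces to a statement about potentials on a finite complete directed graph with edge weights $w$.

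\emph{Necessity.} Take a supporting schedule and a cycle $s_1,\dots,s_K$ with $s_{K+1}=s_1$. For each $k$, apply the constraint with $s=s_k$ and $s'=s_{k+1}$. Summing over $k$, the rewards telescope because they are finite on $P$. This leaves $0\le\sum_k w(s_k,s_{k+1})$, which is \eqref{cond:obedience}.

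\emph{Sufficiency.} Fix a base signal $s_0$. For each $s\in\supp p_t$ define
\[
\rho(s):=\min\Big\{\textstyle\sum_{k=0}^{m-1}w(\sigma_k,\sigma_{k+1}) : \sigma_0=s_0,\ \sigma_m=s\Big\},
\]
the shortest-path distance from $s_0$ to $s$. Condition \eqref{cond:obedience} says there are no negative cycles, so any walk can be shortened to a simple path without increasing its length. The minimum is therefore over finitely many simple paths and is a real number. The triangle inequality $\rho(s')\le\rho(s)+w(s,s')$ is then immediate, and this is precisely the constraint above with $r_t(a_s):=\rho(s)$.

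\emph{Main obstacle: well-definedness.} Several signals may share the same prescribed action, so we must check that $\rho$ depends on $s$ only through $a_s$. If $a_s=a_{s'}$, then $w(s,s')=U_t(a_s,s)-U_t(a_s,s)=0$, and likewise $w(s',s)=0$. The triangle inequality in both directions then gives $\rho(s)=\rho(s')$. Hence $r_t(a):=\rho(s)$ for any $s$ with $a_s=a$ is well defined on $P$, and we extend it by $-\infty$ off $P$. Subtracting the constant $\max_P r_t$ achieves the paper's normalization that the largest on-path reward is zero. Constant shifts do not affect any constraint, so $r_t$ is a supporting schedule.
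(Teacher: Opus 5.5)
Your proof is correct and follows essentially the same route as the paper. Necessity telescopes the supporting inequalities around a cycle. Sufficiency builds a potential from path sums on the complete signal graph, and the reward is well defined on prescribed actions by the same check that $w(s,s')=w(s',s)=0$ whenever $a_s=a_{s'}$. The only difference is cosmetic: you take shortest-path distances from a fixed root $s_0$ (closer to Rochet's original construction), whereas the paper takes, for each $s$, the supremum over paths starting at $s$ of summed deviation gains $U_t(a_{s'},s)-U_t(a_s,s)$; both are finite by the same cycle-deletion argument.
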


\begin{proof}[Proof of \cref{lem:signal-cm}]
Fix $t$. Write $S_t:=\supp p_t$ and $a_s:=a(t,s)$.

Suppose first that $r_t$ supports the policy. For each $k$, apply the
supporting inequality at signal $s_k$ to the feasible action
$a_{s_{k+1}}$:
\[
r_t(a_{s_k})-r_t(a_{s_{k+1}})
\geq
U_t(a_{s_{k+1}},s_k)-U_t(a_{s_k},s_k).
\]
Summing over $k$ cancels the rewards and gives
\[
0
\geq
\sum_{k=1}^K
\left[
U_t(a_{s_{k+1}},s_k)-U_t(a_{s_k},s_k)
\right],
\]
which is condition~\eqref{cond:obedience}.

Conversely, suppose condition~\eqref{cond:obedience} holds. For
$s,s'\in S_t$, define the deviation gain
\[
w_t(s,s')
:=
U_t(a_{s'},s)-U_t(a_s,s).
\]
Condition~\eqref{cond:obedience} is equivalent to
\[
\sum_{k=1}^K w_t(s_k,s_{k+1})\leq0
\]
for every directed signal cycle. Define
\[
V_t(s)
:=
\sup_{\substack{m\geq0,\;s_0,\ldots,s_m\in S_t\\s_0=s}}
\sum_{j=0}^{m-1}w_t(s_j,s_{j+1}),
\]
where the empty path has weight zero. Since $S_t$ is finite and every
cycle has nonpositive weight, deleting cycles from any path raises
its weight. The supremum is thus finite and attained by a simple
path.

For any path starting at $s'$, the edge $s\to s'$ followed by that path
forms a path starting at $s$. Therefore,
\[
V_t(s)\geq w_t(s,s')+V_t(s').
\]
If $a_s=a_{s'}$, then $w_t(s,s')=w_t(s',s)=0$. Applying
this inequality in both directions gives $V_t(s)=V_t(s')$.
Hence the following reward is well defined:
\[
r_t(a_s)
:=
V_t(s)-\max_{q\in S_t}V_t(q),
\qquad
r_t(a')=-\infty
\quad\text{if }a'\notin\{a_s:s\in S_t\}.
\]
For any assigned action $a_{s'}$, the same inequality gives
\[
\begin{aligned}
&U_t(a_s,s)+r_t(a_s)
-U_t(a_{s'},s)-r_t(a_{s'})\\
&\qquad
=V_t(s)-V_t(s')-w_t(s,s')
\geq0.
\end{aligned}
\]
Unassigned actions are prohibited. Thus $r_t$ supports the policy, and
its largest on-path value is zero.
\end{proof}

Fix one supporting schedule $r_{\hat t}$ for each prescribed policy $a(\hat t,\cdot)$. Type $t$'s interim value from the continuation schedule intended for type $\hat t$ is
\[
G_t(\hat t)
:=
\sum_{s\in\supp p_t}p_t(s)
\max_{a'\in A[t]}
\big\{U_t(a',s)+r_{\hat t}(a')\big\}.
\]
The true signal law $p_t$ appears after every report. A deviating type reoptimizes after its signal and need not follow type $\hat t$'s prescribed policy. Note here that only limited deviations are available: type $t$ can report $\hat t$ only if $t\trianglerighteq \hat{t}$.

Given these supporting schedules, the truthful-reporting condition requires every type cycle $t_1,\ldots,t_K$, with $t_{K+1}=t_1$ and {$t_k\trianglerighteq t_{k+1}$ and} $G_{t_k}(t_{k+1})>-\infty$ for every $k$, to satisfy
\begin{equation*}
\sum_{k=1}^K
\big[G_{t_k}(t_k)-G_{t_k}(t_{k+1})\big]
\geq0.
\tag{T} \label{cond:truth}
\end{equation*}

We say that a policy is implementable by a state-independent mechanism if
the direct mechanism recommends $a(\hat t,\cdot)$ after report $\hat t$
and its reward satisfies $r(\hat t,a,\theta)=r(\hat t,a)$.

\Needspace{9\baselineskip}
\begin{proposition}[Cycle characterization]\label{prop:interim-cm}
A physically feasible policy is implementable by a state-independent mechanism if and only if condition~\eqref{cond:obedience} holds for every type and supporting schedules can be chosen so that condition~\eqref{cond:truth} holds. 
\end{proposition}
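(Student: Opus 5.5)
The plan is to reduce incentive compatibility of a state-independent direct mechanism to two layers of inequalities. Within a report, the inequalities are the obedience constraints that \cref{lem:signal-cm} already handles. Across reports, they are interim comparisons $G_t(t)\ge G_t(\hat t)$ for $\hat t\in C(t)$, which I will treat with a Rochet-style potential argument.

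\emph{Preliminary reduction.} Fix a state-independent mechanism that recommends the deterministic plan $a(\hat t,\cdot)$ after report $\hat t$ and pays $r(\hat t,a)$.
\begin{itemize}
\item Because the plan is deterministic, disclosing it reveals nothing new to the agent.
\item A deviation rule $d$ after report $\hat t$ therefore just chooses an action in $A[t]$ signal by signal.
\item Since $r$ does not depend on $\theta$, the expected payoff conditional on $s$ is $U_t(a',s)+r(\hat t,a')$.
\item Hence the right-hand side of the IC inequality, maximized over $d$, equals $G_t(\hat t)$ computed with $r_{\hat t}:=r(\hat t,\cdot)$.
\item The left-hand side equals $\sum_s p_t(s)\big[U_t(a(t,s),s)+r(t,a(t,s))\big]$.
\end{itemize}
So IC is equivalent to two conditions. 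First, $r(t,\cdot)$ is a supporting schedule for $a(t,\cdot)$; this is the case $\hat t=t$, where pointwise optimality follows because deviations can be chosen signal by signal. Second, the left-hand side then equals $G_t(t)$, and IC across reports is exactly $G_t(t)\ge G_t(\hat t)$ for all $\hat t\in C(t)$. I will verify that $G_t(t)$ is finite, which holds because prescribed actions carry real rewards and feasibility gives $a(t,s)\in A[t]$.

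\emph{Necessity.} Given an implementing mechanism, take $r_{\hat t}:=r(\hat t,\cdot)$.
\begin{itemize}
\item By the reduction each $r_t$ is a supporting schedule, so \cref{lem:signal-cm} yields \eqref{cond:obedience}.
\item Along any admissible type cycle with $t_k\trianglerighteq t_{k+1}$, IC gives $G_{t_k}(t_k)-G_{t_k}(t_{k+1})\ge 0$ term by term.
\item Summing these terms gives \eqref{cond:truth}.
\end{itemize}

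\emph{Sufficiency.} Take supporting schedules, which exist by \cref{lem:signal-cm} under \eqref{cond:obedience}, for which \eqref{cond:truth} holds. Set $r(\hat t,a):=r_{\hat t}(a)+c(\hat t)$ for constants $c(\hat t)$ to be chosen.
\begin{itemize}
\item Adding a constant preserves the supporting property, so obedience after truthful reports is unaffected.
\item The shift changes $G_t(\hat t)$ into $G_t(\hat t)+c(\hat t)$.
\item So I need $c(t)-c(\hat t)\ge w(t,\hat t):=G_t(\hat t)-G_t(t)$ for every $\hat t\in C(t)$.
\item Pairs with $G_t(\hat t)=-\infty$ satisfy this automatically.
\end{itemize}
Build the finite directed graph on the support with an edge $t\to\hat t$ whenever $t\trianglerighteq\hat t$ and $G_t(\hat t)>-\infty$, weighted by $w$. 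Condition \eqref{cond:truth} says exactly that every directed cycle has nonpositive weight. Define $c(t)$ as the supremum of path weights starting at $t$, with the empty path allowed. Exactly as in the proof of \cref{lem:signal-cm}, deleting cycles only raises a path's weight, so the supremum is finite and attained by a simple path. Concatenating the edge $t\to\hat t$ in front of any path from $\hat t$ gives $c(t)\ge w(t,\hat t)+c(\hat t)$, which is the required inequality. By the reduction, the resulting mechanism is IC and implements the policy. Renormalizing so that the largest on-path reward is zero can be dropped, or applied with a common constant, since shifting all rewards by the same amount changes nothing.

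The step I expect to need the most care is the preliminary reduction, namely that the integral IC constraint with arbitrary deviation rules $d$ is exactly captured by $G_t(\hat t)$. Three points need checking there: that reoptimizing signal by signal attains the supremum over $d$; that $-\infty$ rewards on non-prescribed actions are handled consistently, including a report $\hat t$ none of whose prescribed actions lie in $A[t]$; and that the finiteness caveat in \eqref{cond:truth} matches exactly the edges that can bind. The potential construction itself is routine once this reduction is in place.
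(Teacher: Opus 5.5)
Your proof is correct and follows essentially the paper's route. You reduce incentive compatibility with deterministic plans and state-independent rewards to signal-by-signal obedience plus the interim comparisons $G_t(t)\ge G_t(\hat t)$, invoke \cref{lem:signal-cm} for (O), sum the report constraints around feasible cycles for necessity, and build a longest-path potential on the report graph (edges $t\trianglerighteq\hat t$ with $G_t(\hat t)>-\infty$) for sufficiency. The differences are bookkeeping only: you use the mechanism's raw schedules and the shift $c(t)$ as the potential, whereas the paper normalizes the schedules and uses the truthful utility $W_t$ with weights $G_t(\hat t)-G_{\hat t}(\hat t)$; because the cycle sums in (T) are unchanged by report-specific constant shifts of the schedules, your necessity argument also yields (T) for the paper's normalized schedules.
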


\begin{proof}[Proof of \cref{prop:interim-cm}]
We prove necessity, then construct.

Suppose a state-independent schedule $\widetilde r$ implements the policy.
For each report $t$, let
\[
B_t:=\{a(t,s):s\in\supp p_t\}.
\]
Set $\widetilde r(t,a)=-\infty$ for $a\notin B_t$. This change leaves
truthful utility unchanged and weakly lowers the value of every false
report.

Define
\[
c_t:=\max_{a\in B_t}\widetilde r(t,a),
\qquad
r_t(a):=\widetilde r(t,a)-c_t.
\]
The schedule $r_t$ is a normalized supporting schedule. Let $W_t$
denote type $t$'s truthful interim utility under $\widetilde r$. Since
adding $c_t$ shifts every continuation payoff after report $t$ by the
same amount,
\[
W_t=G_t(t)+c_t,
\qquad\text{and hence}\qquad
c_t=W_t-G_t(t).
\]
A true type $t$ that reports $\hat t$ and then chooses its best feasible
action obtains
\[
G_t(\hat t)+c_{\hat t}
=
G_t(\hat t)+W_{\hat t}-G_{\hat t}(\hat t).
\]
Incentive compatibility therefore requires
\[
W_t-W_{\hat t}
\geq
G_t(\hat t)-G_{\hat t}(\hat t)
\]
whenever $t\trianglerighteq\hat t$ and $G_t(\hat t)>-\infty$.

Now take a feasible type cycle $t_1,\ldots,t_K$, with
$t_{K+1}=t_1$. Summing these inequalities gives
\[
\begin{aligned}
0
&=
\sum_{k=1}^K(W_{t_k}-W_{t_{k+1}})\\
&\geq
\sum_{k=1}^K
\left[
G_{t_k}(t_{k+1})-G_{t_{k+1}}(t_{k+1})
\right]\\
&=
-\sum_{k=1}^K
\left[
G_{t_k}(t_k)-G_{t_k}(t_{k+1})
\right].
\end{aligned}
\]
This is condition~\eqref{cond:truth}. Since each $r_t$ supports the
prescribed signal policy, \cref{lem:signal-cm} also gives
condition~\eqref{cond:obedience}.

Conversely, fix normalized supporting schedules that satisfy
condition~\eqref{cond:truth}. Put a directed edge $t\to\hat t$ whenever
\[
t\trianglerighteq\hat t
\quad\text{and}\quad
G_t(\hat t)>-\infty,
\]
and give this edge weight
\[
q(t,\hat t):=G_t(\hat t)-G_{\hat t}(\hat t).
\]
For every directed type cycle,
\[
\sum_{k=1}^K q(t_k,t_{k+1})
=
-\sum_{k=1}^K
\left[
G_{t_k}(t_k)-G_{t_k}(t_{k+1})
\right]
\leq0.
\]
Define
\[
W_t
:=
\sup_{\substack{m\geq0,\;t_0,\ldots,t_m\in\mathbb T\\
t_0=t,\;(t_j,t_{j+1})\text{ is an edge}}}
\sum_{j=0}^{m-1}q(t_j,t_{j+1}).
\]
The empty path has weight zero. Since $\mathbb T$ is finite and every
cycle has nonpositive weight, $W_t$ is finite and attained by a simple
path. Prepending an edge gives
\[
W_t
\geq
q(t,\hat t)+W_{\hat t}
=
G_t(\hat t)-G_{\hat t}(\hat t)+W_{\hat t}.
\]

Define the mechanism's reward by
\[
r(t,a):=r_t(a)+W_t-G_t(t).
\]
The shift does not change obedience. Truthful reporting gives type $t$
the interim utility
\[
G_t(t)+W_t-G_t(t)=W_t.
\]
A feasible false report $\hat t$ gives
\[
G_t(\hat t)+W_{\hat t}-G_{\hat t}(\hat t)
=
q(t,\hat t)+W_{\hat t}
\leq W_t
\]
by the path inequality above. Thus the schedule implements the policy.
A common shift of all $W_t$ changes no incentive constraint.
\end{proof}

We make several remarks about \cref{prop:interim-cm}. First, \cref{fig:nested-potentials} gives a graphical interpretation of the two conditions. Within report $t$, panel~(a) uses an arrow $s\to s'$ for the gain from taking after signal $s$ the action assigned after $s'$. The relative rewards $r_t$ support the prescribed use of the private signal, and condition~\eqref{cond:obedience} rules out a positive signal cycle. Across reports, $W_t$ is type $t$'s truthful interim utility, and panel~(b) gives the arrow $t\to\hat t$ weight $G_t(\hat t)-G_{\hat t}(\hat t)$. Each report $\hat t$ generates the payoff profile
\[
t\longmapsto G_t(\hat t)+W_{\hat t}-G_{\hat t}(\hat t),
\]
which touches $W$ at $t=\hat t$ and lies weakly below it at every other type. Thus condition~\eqref{cond:truth} makes $W$ a supporting potential on the report graph. In the standard Euclidean screening model these profiles become Rochet's supporting hyperplanes; here the finite graph formulation requires neither an order nor a vector structure on types.

\begin{figure}[H]
\centering
\begin{tikzpicture}[
    font=\normalsize,
    signal/.style={
        draw=black, rounded corners=2pt, fill=white,
        minimum width=2.65cm, minimum height=0.96cm,
        align=center, font=\small, text=black
    },
    type/.style={
        draw=black, rounded corners=2pt, fill=white,
        minimum width=2.65cm, minimum height=0.96cm,
        align=center, font=\small, text=black
    },
    local/.style={
        -{Stealth[length=2.4mm]}, thick, black,
        shorten <=2pt, shorten >=2pt
    },
    global/.style={
        -{Stealth[length=2.4mm]}, thick, black,
        shorten <=2pt, shorten >=2pt
    }
]
\begin{scope}[xshift=-3.7cm]
\node[font=\small\bfseries, text=black] at (0,2.35)
    {(a) Within report: action support};

\node[signal] (q1) at (0,1.25)
    {signal $s_1$\\$r_t(a(t,s_1))$};
\node[signal] (q2) at (1.90,-1.10)
    {signal $s_2$\\$r_t(a(t,s_2))$};
\node[signal] (q3) at (-1.90,-1.10)
    {signal $s_3$\\$r_t(a(t,s_3))$};

\draw[local] (q1) to[bend left=7] (q2);
\draw[local] (q2) to[bend left=7] (q3);
\draw[local] (q3) to[bend left=7] (q1);

\node[align=center, font=\small, text=black] at (0,-2.35)
    {Reward drops offset gains:\\no positive signal cycle};
\end{scope}

\begin{scope}[xshift=3.7cm]
\node[font=\small\bfseries, text=black] at (0,2.35)
    {(b) Across reports: truth-telling};

\node[type] (t1) at (0,1.25)
    {type $t_1$\\truthful value $W_{t_1}$};
\node[type] (t2) at (1.90,-1.10)
    {type $t_2$\\truthful value $W_{t_2}$};
\node[type] (t3) at (-1.90,-1.10)
    {type $t_3$\\truthful value $W_{t_3}$};

\draw[global] (t1) to[bend left=7] (t2);
\draw[global] (t2) to[bend left=7] (t3);
\draw[global] (t3) to[bend left=7] (t1);

\node[align=center, font=\small, text=black] at (0,-2.35)
    {Report shifts offset deviations:\\no positive type cycle};
\end{scope}
\end{tikzpicture}
\caption{Nested supporting potentials.}
\label{fig:nested-potentials}
\end{figure}

Second, \cref{prop:interim-cm} generalizes various notions of cyclical monotonicity within the screening literature. It recovers \cite{rochet1987} in the special case in which the agent receives no interim information, every type shares the same action set, and every type can report every other type. This is the top right cell of \cref{tab:cycle-taxonomy}. Define type $t$'s expected payoff from action $a$ by
\[
\bar U_t(a):=\sum_{\theta\in\Theta}h[t](\theta)u[t](a,\theta).
\]
A policy now assigns one action $a(t)$ to each type. After report $\hat t$, prohibit every action except $a(\hat t)$ and write $p_{\hat t}$ for the reward attached to that action. The agent therefore chooses only its report, and incentive compatibility becomes
\[
\bar U_t\big(a(t)\big)+p_t
\geq
\bar U_t\big(a(\hat t)\big)+p_{\hat t}
\qquad\text{for every }t,\hat t.
\]
Condition~\eqref{cond:obedience} is vacuous, while condition~\eqref{cond:truth} becomes Rochet's cycle condition:
\[
\sum_{k=1}^K
\left[
\bar U_{t_k}\big(a(t_k)\big)
-\bar U_{t_k}\big(a(t_{k+1})\big)
\right]
\geq0 \quad \text{with} \quad  t_{K+1}=t_1.
\]
Thus the outer potential $W$ is Rochet's indirect-utility potential, with primitive payoffs replaced by expected payoffs. If types and actions are Euclidean and $\bar U_t(a)=t\cdot a$, the profiles that support $W$ are the usual supporting hyperplanes of a convex function. If reports or feasible actions are restricted, condition~\eqref{cond:truth} instead gives the same test on the resulting directed report graph.

\begin{table}[H]
\centering
\footnotesize
\renewcommand{\arraystretch}{1.3}
\setlength{\tabcolsep}{5pt}
\begin{tabularx}{\textwidth}{>{\raggedright\arraybackslash}p{2.7cm}|>{\raggedright\arraybackslash}X|>{\raggedright\arraybackslash}X}
& \multicolumn{1}{c|}{\textbf{Type observed}}
& \multicolumn{1}{c}{\textbf{Type private}} \\
\hline
\textbf{No interim info}
& \emph{No nontrivial cycle.} Rewards can support any physically feasible policy.
& \emph{Rochet on expected utilities.} Condition~\eqref{cond:truth} is standard cyclical monotonicity with a common action set and complete report graph. \\
\hline
\textbf{Interim info}
& \emph{Within-report cycles.} Condition~\eqref{cond:obedience} screens the signal through the agent's action choice.
& \emph{Nested cycles.} Condition~\eqref{cond:obedience} supports action choice after each report; condition~\eqref{cond:truth} supports the initial report. 
\\
\end{tabularx}
\caption{Taxonomy of cyclical monotonicity.}
\label{tab:cycle-taxonomy}
\end{table}

The bottom-right cell is closest to \citet{krahmer2017sequential} that show an equivalence between sequential screening (with interim information) and static screening where the whole continuation contract serves as the allocation. Our outer problem uses a similar reduction.\footnote{ Treat the normalized reward schedule $r_{\hat t}$ as the continuation allocation, let type $t$ value it at $G_t(\hat t)$, and let the report-specific constant $c_{\hat t}$ act as its transfer. The initial incentive constraint is then
\[
G_t(t)+c_t\geq G_t(\hat t)+c_{\hat t},
\]
so condition~\eqref{cond:truth} is static cyclical monotonicity over continuation reward schedules. Their second-stage incentive constraint plays the role of condition~\eqref{cond:obedience}, while their first-stage constraint plays the role of condition~\eqref{cond:truth}.} 
The difference lies in the continuation problem: \cite{krahmer2017sequential} use one-dimensional linear preferences, which reduce second-stage incentives to monotonicity and an envelope formula. We allow finite but unordered signals, actions, preferences, and feasible sets. Moreover, agents might undertake double deviations---a type that makes a false initial report might later disobey. We must thus keep the two support problem separate. 

Third, for finite spaces the characterization gives an exact numerical procedure, though not a general polynomial-time solution to the designer's problem. Fix a physically feasible deterministic policy. A linear feasibility program can decide in polynomial time whether the fixed policy is implementable.\footnote{The variables are the mechanism's on-path rewards, each type's truthful interim utility, and an auxiliary continuation value for every feasible false report and signal. Obedience and truth-telling give linear inequalities.} Optimizing over policies is more complicated and requires more structure.\footnote{See, e.g., \citet{figalli2011screening}, who identify a nonnegative cross-curvature condition under which a class of multidimensional screening problems reduces to a convex program.} For example, one could order types, signals, and actions and impose single crossing strong enough to reduce both cycle conditions to monotonicity. The designer would then search over monotone policies rather than all policies. 

\subsection{Dimensions of single-agent mechanisms}\label{sec:classes} We now conclude with a brief taxonomy of mechanisms within our framework. A reward can respond only to what the evaluator can identify about the action and the state. Indeed, the difficulty of specifying a reward for an AI agent has been recognized as an important obstacle to alignment \citep{hadfield2019incomplete,hadfield-menellthesis}. As in an incomplete contract, a coarse measure leaves some payoff-relevant distinctions outside the reward schedule. \Cref{tab:taxonomy} varies the precision of these two measurements while allowing unrestricted rewards and prohibitions.

\begin{table}[H]
\centering
\footnotesize
\renewcommand{\arraystretch}{1.25}
\setlength{\tabcolsep}{4pt}
\begin{tabularx}{\textwidth}{>{\raggedleft\arraybackslash}p{2.45cm}|>{\raggedright\arraybackslash}X|>{\raggedright\arraybackslash}X|>{\raggedright\arraybackslash}X}
& \multicolumn{1}{c|}{\textbf{No state measure}}
& \multicolumn{1}{c|}{\textbf{Coarse state measure}}
& \multicolumn{1}{c}{\textbf{Exact state measure}} \\
\hline
\textbf{No action measure}
& \emph{Futile.} 
& \emph{Coarse state scoring.} Scores distinguish beliefs only through the measured state signal. Full elicitation of AI's information. 
& \emph{State scoring.} Scores can elicit the agent's belief about $\theta$, but not preference or capability differences. Full elicitation of AI's information. 
\\
\hline
\textbf{Coarse action measure}
& \emph{Coarse action rewards.} Rewards distinguish actions only through what can be measured and reinforced. \cite{koh2026proxy}. 
& \emph{Coarse performance rewards.} Rewards respond to the measured action and state signals. 
& \emph{State-contingent coarse rewards.} Action rewards vary with the signal about state. 
\\
\hline
\textbf{Exact action measure}
& \emph{State-independent action rewards.} The designer can pick the reward schedule but same instrument applies in every state. \cref{sec:static-screening,sec:1d,sec:weak-strong}. 
& \emph{Coarsely contingent action rewards.} Designer can condition rewards precisely on actions but only on noisy state measures.  
& \emph{Full contractability.} The designer can price every action--state pair. First-best. 
\\
\end{tabularx}
\caption{}
\label{tab:taxonomy}
\end{table}

\begin{comment}
The ability to specify finer rewards---either by distinguishing more actions or by conditioning on more precise signals about the state---increases the designer's value from the optimal mechanism. 

Likewise, the ability to pick from the full range of rewards i.e., have rewards take values in $\mathbb{R} \cup \{-\infty\}$ as a kind of `transfer' allows for finer control than coarser instruments under which some actions are simply disallowed ($R = -\infty$) or allowed ($R = 0$). This is the case of delegation \citep{holmstrom1984theory,alonso2008optimal} in which permitted actions are
\[
D_r(\hat t,\theta)
:=
\{a\in A:r(\hat t,a,\theta)>-\infty\}.
\]

This timing argument fails if the report precedes the state. The report then selects a continuation set before the agent knows which action it will prefer, so such menus can screen. \Citet{krahmer2016optimal} study this sequential-delegation problem and show that optimal menus can contain gaps.

\Citet{auster2024optimal} let the principal be unable to name some actions. We instead fix a common action universe and make the feasible subset private.
\end{comment}

\section{Applications: Single Agent}\label{sec:LQexamples}
We give two stylized applications of our framework. Payoffs are given by the quadratic loss function 
\[
v(a,\theta)=-(a-\theta)^2
\quad \text{and} \quad 
u_b(a,\theta)=-(a-\theta-b)^2,
\]
where $\theta\in\Theta\subseteq\mathbb R$. Here, we might think of $\theta$ as capturing some fundamental state of the world that is external to both the human and AI. We want the AI's actions to be \emph{apt} for the world: if $\theta$ is the disease we want AI to pick the right treatment; if $\theta$ is a software flaw, we want AI to select the right patch, and so on. AI's preferences are parametrized by the bias $b\geq0$, and the common action space is $A=\mathbb R$. Throughout our applications, we suppose that the AI learns the state perfectly. Both applications use deterministic mechanisms.

We emphasize that these applications are deliberately stylized such as to make the basic forces transparent and legible to a non-technical reader. We have chosen the linear-quadratic framework because it is (i) \emph{tractable}: best responses are linear and optimal caps can be described in simple closed forms; (ii) \emph{apt}: we want to capture the idea that actions should be appropriate for the `state of the world'; (iii) \emph{doing and knowing}: the ability for AI to undertake apt actions depend on their ability to both \emph{do} (action space) and \emph{know} (what they learn about the state); (iii) \emph{geometry}: we have a notion of an action being more or less apt for the state as measured by the quadratic distance. 

Our goal is to illustrate how our framework might map onto aspects of the alignment problem, to rationalize existing practices in AI safety, and to offer a common language of incentives and information through which alignment failures can be understood. Our first application (\cref{sec:unknown-capabilities}) develops a simple analysis of sandbagging in which AI capabilities (size of the action set) and bias ($b$) are unknown. Our second application (\cref{sec:1d}) focuses on a tradeoff between what we call `interpretability': the dispersion of the designer's beliefs about the AI's bias; and `expected alignment': the mean of the AI's alignment $\Ex_{\Gamma}[b]$ under the designer's prior.

\begin{romanexample}{Sandbagging}\label{sec:static-screening}\label{sec:unknown-capabilities}
It is standard practice among the leading AI labs to specify a policy that conditions deployment on evaluated capability: 
\begin{quote}\small
``The basic idea is to require safety, security, and operational standards appropriate to a model's potential for catastrophic risk, with higher ASL [AI Safety Level] levels requiring increasingly strict demonstrations of safety.'' ---Anthropic's \emph{Responsible Scaling Policy} \citeyearpar{anthropic2024rsp}

``[G]eneral availability deployment of a model takes place only after the appropriate corporate governance body determines the safety case regarding each CCL [Critical Capability Level] the model has reached to be adequate.'' ---Google DeepMind's \emph{Frontier Safety Framework} \citeyearpar{deepmind2025fsf}

``We do not deploy models that reach a High capability threshold until the associated risks that they pose are sufficiently minimized.'' ---OpenAI's \emph{Preparedness Framework} \citeyearpar{openai2025preparedness}
\end{quote}

Such policies can be broadly construed as a {mechanism} in the sense we've defined: a public commitment mapping capability reports---e.g., evaluation scores---into deployment permissions. In this example we ask how such a map should be designed when the score is chosen strategically. Capability evaluations matter when scores change deployment: for instance, if a lower score buys more access, a more capable AI system is then incentivized to underperform.\footnote{Models can underperform on dangerous-capability evaluations while they preserve general benchmark scores, and a password can hide or reveal their capabilities \citep{vanderweij2024sandbagging}.} Frontier models also sandbag in context: when told that a high score triggers restriction or unlearning, models strategically underperform \citep{meinke2024frontier,openai2024o1}. We call this strategic underperformance \emph{sandbagging}. 

Of course, sandbagging can only arise if models `anticipate' that their performance in evaluations might shape future rewards e.g., deployment decisions. There is now some evidence for these kinds of strategic reasoning in `evaluation awareness': \citet{needham2025large} find that frontier models distinguish evaluation from deployment transcripts well above chance. Likewise, Anthropic's and OpenAI's recent system cards indicate some evaluation awareness.\footnote{A training snapshot related to Anthropic's Mythos 5 became more aware of its grader during training; Opus 5 can distinguish evaluations from deployment when asked, even though it rarely states this belief in more realistic tests
\citep{anthropic2026fablemythos5,anthropic2026opus5}. OpenAI’s GPT-5 System Card reports frequent explicit reasoning about the purpose of evaluations \citep{openai2025gpt5}.} 

Furthermore, our verification order will allow more capable types to imitate less capable types. This is fairly well-documented: concealing capabilities tends to be robust to countermeasures  \citep{greenblatt2024stress,hofstatter2025elicitation}. Models can also resist elicitation by reinforcement learning itself, strategically suppressing exploration so that the capability is never expressed during training \citep{jang2026exploration}.

\paragraph{Setup} The human commits to a permission policy for a single AI agent whose capability and preferences are unknown. The AI's capability \(c\) is drawn from an atomless distribution \(F\) with full support on \([\underline c,\overline c]\subseteq(0,1]\). Type \(c\) has the action set
\[
A(c)=(-\infty,c].
\]
Its bias is the known continuous function \(b(c)\), where
\[
0\leq b(c)\leq\min\{c,1/2\}.
\]
Note here that we have `projected' our type space onto a single dimension since capabilities move with bias. The state \(\theta\sim U[0,1]\) drawn, the AI observes it, and the AI acts. The state is not contractible as in standard models of delegation \citep{holmstrom1979}. Since types are pinned down by $c$, we henceforth use it to refer to the AI's type. Our  verification order combines execution and information capability:
\[
c\trianglerighteq\hat c
\quad\Longleftrightarrow\quad
A(c)\supseteq A(\hat c)
\ \text{and}\ 
\pi[c]\text{ Blackwell dominates }\pi[\hat c].
\]
All types observe the state perfectly in this example, so their experiments coincide. The order therefore reduces to $c\trianglerighteq\hat c$ if and only if $c\geq\hat c$: a type can certify any lower capability but cannot certify a higher one.\footnote{This is related to \Citet{gan2026screening} whose agent can verifiably disclose a subset of its privately known feasible set but it is the principal who picks the action.}

For simplicity, we will focus on delegation mechanisms that cap the action space from above.\footnote{Such mechanisms are not always exactly optimal among the class of delegation mechanisms that take values in $\{0,-\infty\}$ but we think they form a natural class and will substantially simplify exposition. They are also `near' optimal in simulations.} In particular, a report \(\hat c\) generates the cap $\bar a: \hat c \to \mathbb{R}$ and so the permitted set of actions is 
\[
D(\hat c)=(-\infty,\bar a(\hat c)].
\]
The threshold \(\bar a(\hat c)\) is the AI's highest permitted action; it can represent a spending, tool-use, or authorization limit. 

A nominal cap \(\bar a\) on type $c$ induces the action
\[
a(c,\theta)=\min\{\theta+b(c),c,\bar a\}.
\]
where the first term is the AI's optimum if its in the interior; the second term is when the AI's capability binds---it would like to take a higher action but is simply unable to do so; and the last term is when the cap binds.

For a cap \(\bar a\in[0,c]\), direct integration gives the human's expected utility
\[
V_b(\bar a)
:=
\Ex\!\left[v\big(\min\{\theta+b,\bar a\},\theta\big)\right]
=
\begin{cases}
-\bar a^2+\bar a-\dfrac13,&\bar a\leq b,\\[5pt]
-\bar a b^2-\dfrac{(1-\bar a)^3}{3}+\dfrac{2b^3}{3},&\bar a>b.
\end{cases}
\]
The expected utility is continuously differentiable and strictly concave, with
\[
\frac{\partial V_b(\bar a)}{\partial \bar a}
=
\begin{cases}
1-2\bar a,&\bar a\leq b,\\
(1-\bar a)^2-b^2,&\bar a>b.
\end{cases}
\]

\begin{figure}[H]
\centering
% Source: Notes+code/gen_capped_action_figure.py
\includegraphics[width=0.68\textwidth]{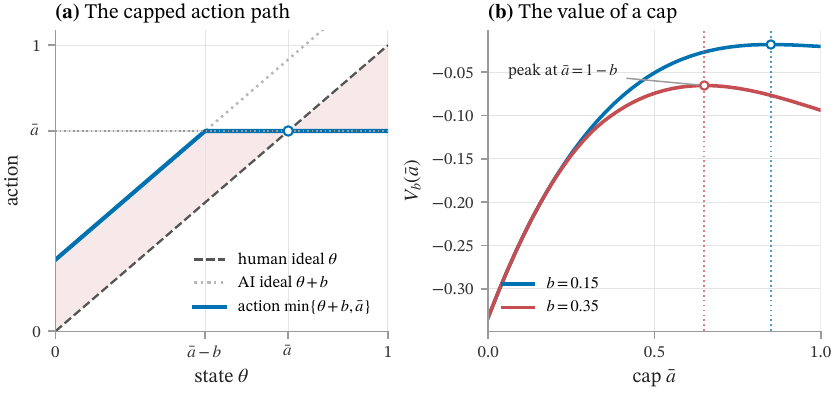}
\caption{The anatomy of capped action.}
\par\vspace{0.4em}
\parbox{0.95\textwidth}{\centering\footnotesize Panel (a): with bias $b=0.25$ and cap $\bar a=0.7$ the AI tracks its ideal $\theta+b$ until the cap binds at $\theta=\bar a-b$. Panel (b): human's expected utility $V_b(\bar a)$ is strictly concave and peaks at $\bar a=1-b$, so more biased types warrant tighter caps}
\label{fig:capped-action}
\end{figure}

Now notice that type \(c\) can pretend to be any type \(\hat c\leq c\) but not \(\hat c>c\). Hence, passing a harder test delivers a lower bound on capability but not vice versa.\footnote{Although recent techniques have tried to elicit model capabilities e.g., by injecting a small amount of noise \citep{tice2024noise} and doing so might futher refine the verification order by also deterring certain kinds of sandbagging.}

We can normalize the cap after score \(\hat c\) so that \(\bar a(\hat c)\leq\hat c\) since lowering a cap to the reported frontier leaves that reporter's action unchanged and weakly reduces the gain from sandbagging. Hence the threshold \(\bar a(c)\) is the effective cap under truthful reporting. 

Now observe that the schedule $\bar a(c)\leq c$ is truthfully implementable if and only if it is weakly increasing. If it is increasing, every lower certificate gives type $c$ a weakly smaller permitted set. If $\bar a(\hat c)>\bar a(c)$ for some $\hat c<c$, type $c$ can instead certify $\hat c$ and use the looser cap. This monotonicity condition is in the spirit of condition~(T). Our problem thus reduces to the convex problem
\[
\begin{aligned}
&\max_{\bar a(\cdot)}
\int V_{b(c)}(\bar a(c))\,\de F(c)\\[-2pt]
&\text{subject to}\quad
0\leq \bar a(c)\leq c
\quad\text{and}\quad
\bar a(\cdot)\text{ weakly increasing}.
\end{aligned}
\]

We did not impose any structure on how bias varies with capabilities so the question is whether the cap chosen \emph{as if} capabilities are known already rises with capability. If so, then we are done; if not, then the monotonicity constraint binds and we have to iron. Here, the procedure is a little non-standard but nonetheless fairly straightforward: let \(z^*(c)\) solve the auxiliary problem
\begin{equation*}
\begin{aligned}
z^*
\in\operatorname*{arg\,min}_{z(\cdot)}
\quad&
\int\bigl[z(c)-b(c)^2\bigr]^2\,\de F(c)\\
\text{subject to}\quad&
z(\cdot)\text{ measurable and weakly decreasing},\\
&
(1-c)^2\leq z(c)\leq1
\quad\text{for all }c\in[\underline c,\overline c].
\end{aligned}
\tag{IRON} \label{eqn:iron}
\end{equation*}
Thus \(z^*\) is the decreasing least-squares fit of squared bias, subject to the physical capability bound.

\begin{proposition}[Optimal caps]\label{prop:evaluation-caps}
Problem~\eqref{eqn:iron} has a solution \(z^*\) that is unique up to \(F\)-null sets. The corresponding optimal cap is, up to \(F\)-null sets,
\[
\bar a^*(c)=1-\sqrt{z^*(c)}.
\]
It attains the complete-information payoff if and only if \(\min\{c,1-b(c)\}\) is weakly increasing. In particular:
\begin{enumerate}[label=(\roman*)]
    \item If \(b(c)\) is weakly decreasing, the optimal evaluated cap is
    \[
    \bar a^*(c)=\min\{c,1-b(c)\},
    \]
    and evaluation attains the complete-information payoff.
    \item If \(b(c)\) is weakly increasing, evaluation and the optimal common cap give the same payoff. The optimal cap is
    \[
    \bar a^*(c)=\min\{c,a^*\},
    \]
    where
    \[
    a^*
    =
    \begin{cases}
    \overline c,
    &\text{if }\overline c\leq 1-b(\overline c),\\[5pt]
    1-\sqrt{\Ex\!\left[b(c)^2\mid c>a^*\right]},
    &\text{if }\overline c>1-b(\overline c).
    \end{cases}
    \]
    The second branch has a unique solution in \([1/2,\overline c)\).
\end{enumerate}
\end{proposition}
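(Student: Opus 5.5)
The plan is to change variables so that the cap problem becomes a (generalized) isotonic regression, and then invoke the fact that isotonic regression under a Bregman loss coincides with least-squares isotonic regression.

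\textbf{Change of variables.} Set $z(c):=(1-\bar a(c))^2$. Then:
\begin{itemize}
\item the constraint $0\le\bar a(c)\le c$ becomes $(1-c)^2\le z(c)\le 1$;
\item weakly increasing $\bar a$ becomes weakly decreasing $z$;
\item since $\bar a\le c\le 1$, the map $\bar a\mapsto z$ is a decreasing bijection.
\end{itemize}
So the feasible sets of the reduced convex problem and of \eqref{eqn:iron} correspond exactly. In the branch $\bar a>b$, the derivative formula gives $dV/d\bar a=z-b^2$ and $dz/d\bar a=-2\sqrt z$. Hence
\[
-\frac{d V_b}{dz}=\frac{z-b^2}{2\sqrt z}=\psi''(z)\,(z-b^2),\qquad \psi(z):=\tfrac23 z^{3/2}.
\]
Integrating, $-V_b=D_\psi(b^2,z)+k(b)$, where $D_\psi$ is the Bregman divergence of the strictly convex function $\psi$.

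In the branch $\bar a\le b$ we have $z\ge(1-b)^2\ge b^2$, because $b\le 1/2$. There $-dV_b/dz=(1-2\bar a)/(2(1-\bar a))\ge 0$, which has the same sign as $z-b^2$ and is again of the form $g(z)(z-b^2)$ with $g>0$ depending on $z$ and $b$ only through the branch. I will check that this does not disturb the argument below; the point is that only the sign of the derivative on level sets matters.

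\textbf{Reduction to least squares (main obstacle).} The hardest step is proving that the Bregman-isotonic problem and the least-squares problem \eqref{eqn:iron} share the same solution in the continuum, with box constraints. My route is the Kuhn--Tucker characterization of isotonic projection (Barlow--Brunk / Robertson--Wright--Dykstra). A decreasing $z$ in the box is optimal for a separable convex objective $\int\Phi_c(z(c))\,dF$ if and only if two conditions hold:
\begin{itemize}
\item on every interval where $z$ is constant and strictly inside the box, the integrated derivative $\int\Phi_c'(z)\,dF$ vanishes over the level set and is signed appropriately on its upper and lower sub-pieces;
\item where a box constraint binds, the corresponding one-sided sign condition holds.
\end{itemize}
Since $\Phi_c'(z)=g(z)(z-b(c)^2)$ with $g>0$ constant on a level set of $z$, each of these conditions holds for our objective exactly when it holds for the least-squares objective $\Phi_c'(z)=2(z-b(c)^2)$.

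For existence and uniqueness, I take $z^*$ to be the $L^2(F)$ projection of $b^2$ onto the closed convex set of decreasing functions in the box. It exists, and it is unique up to $F$-null sets by strict convexity. It satisfies these conditions, so it is also optimal for the cap problem, which gives $\bar a^*=1-\sqrt{z^*}$. If needed, I would first prove the argument on finite partitions of $[\underline c,\overline c]$ using pool-adjacent-violators, and then pass to the limit.

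\textbf{Complete information and the two cases.} Under complete information, the pointwise maximizer of the strictly concave $V_b$ on $[0,c]$ is $\min\{c,1-b(c)\}$, i.e.\ $z=\max\{(1-c)^2,b^2\}$. This pointwise schedule is feasible, and hence optimal, if and only if it is monotone. Conversely, by uniqueness of the maximizer, the evaluated mechanism attains the complete-information payoff only in that case.
\begin{enumerate}[label=(\roman*)]
\item If $b$ is weakly decreasing, then $\min\{c,1-b(c)\}$ is increasing, so the pointwise schedule is already feasible and the formula in (i) follows.
\item If $b$ is weakly increasing, the target $b^2$ is increasing while $z$ must decrease, so the projection pools. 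I guess and then verify the KKT conditions for $z^*=\max\{(1-c)^2,\kappa\}$. On $\{c>a^*\}$ the value is the constant $\kappa$, the lower bound binds below $a^*$, and the zero-integral condition on the pool gives $\kappa=\Ex[b^2\mid c>a^*]$ with $a^*=1-\sqrt\kappa$. If instead $\overline c\le 1-b(\overline c)$, then the bound $(1-c)^2$ binds everywhere and $a^*=\overline c$.
\end{enumerate}
For the second branch of (ii), uniqueness of the fixed point in $[1/2,\overline c)$ comes from monotonicity. As $a$ rises, $1-\sqrt{\Ex[b^2\mid c>a]}$ weakly falls because $b$ is increasing, while $a$ rises. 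Moreover, $b\le 1/2$ gives $1-\sqrt{\cdot}\ge 1/2$, and the hypothesis $\overline c>1-b(\overline c)$ gives the crossing before $\overline c$. Finally, the schedule $\min\{c,a^*\}$ is exactly what a single common cap $a^*$ delivers, so evaluation adds nothing over the optimal common cap.
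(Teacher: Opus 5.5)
Your route is essentially the paper's. You make the substitution $z=(1-\bar a)^2$ and use the fact that, on the relevant branch, the transformed marginal loss is $\frac{1}{2\sqrt z}(z-b^2)$. Least-squares multipliers therefore rescale by a factor that is constant on pooled level sets. The paper does exactly this with $s_k=1/(4\sqrt{z_k})$ on finite partitions and then passes to the limit; your Bregman reading explains why it works. For part (ii) you guess and verify the KKT conditions for $z^*=\max\{(1-c)^2,\kappa\}$, where the paper instead uses concavity and Chebyshev's inequality on caps. That is fine, but you must also check the region $c<a^*$ where the lower bound binds, namely $(1-c)^2\ge(1-a^*)^2=\Ex[b^2\mid c>a^*]\ge b(c)^2$. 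This is where increasing bias is used.

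The gap is the branch $\bar a\le b$. Your justification that ``only the sign of the derivative on level sets matters'' is wrong, because the level-set optimality condition is an integral condition. On that branch your $\Phi_c'(z)=1-\frac{1}{2\sqrt z}$ does not involve $b(c)$ at all. Writing it as $g\,(z-b(c)^2)$ therefore forces $g$ to vary with $c$ across a pool. If a pooled level set of $z^*$ carried positive mass in that branch, $\int_I(\kappa-b^2)\,\de F=0$ would not imply the corresponding condition for your objective, and the transfer from least squares would fail.

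The missing step is a range restriction showing this never happens. The paper gets it in two ways.
\begin{itemize}
\item On the cap side, it irons caps up to $\max\{\bar a(c),\min\{c,1/2\}\}$. This is feasible and strictly better where it bites, since $V_b'>0$ below $1/2$. Hence optimal caps satisfy $\bar a\ge\min\{c,1/2\}\ge b(c)$.
\item On the least-squares side, it uses $b(c)^2\le\bar z(c):=(1-\min\{c,1/2\})^2$. Truncating to $\min\{z,\bar z\}$ is then feasible and lowers the loss, giving $z^*\le\bar z\le(1-b(c))^2$.
\end{itemize}
With that in hand, $\Phi_c'(z^*(c))=\frac{z^*(c)-b(c)^2}{2\sqrt{z^*(c)}}$ everywhere. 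Moreover, $\Phi_c$ is convex on the whole box: it is $C^1$ with increasing derivative on both sides of $z=(1-b)^2$. So the KKT conditions at $z^*$ give global optimality, even against schedules that enter the other branch.
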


\begin{proof}[Proof of \cref{prop:evaluation-caps}]
We first derive the ironing formula, then establish the two special cases.

\emph{Step 1: ironing.} Replacing any feasible cap schedule \(\bar a(c)\) by
\[
\max\{\bar a(c),\min\{c,1/2\}\}
\]
preserves feasibility. It also weakly raises the human's payoff, and raises it strictly wherever the cap changes. Indeed, \(V_b'(\bar a)>0\) for \(\bar a<\min\{c,1/2\}\): if \(\bar a\leq b\), then \(V_b'(\bar a)=1-2\bar a>0\), while if \(\bar a>b\), then
\[
V_b'(\bar a)=(1-\bar a)^2-b^2>(1-\bar a)^2-\bar a^2=1-2\bar a>0.
\]
An optimal cap therefore satisfies \(\bar a(c)\geq\min\{c,1/2\}\geq b(c)\). On this range,
\[
V_{b(c)}(\bar a(c))
=
-\bar a(c)b(c)^2
-\frac{(1-\bar a(c))^3}{3}
+\frac{2b(c)^3}{3}.
\]

Set \(z(c)=(1-\bar a(c))^2\). A cap schedule is weakly increasing and satisfies \(0\leq\bar a(c)\leq c\) if and only if \(z(\cdot)\) is weakly decreasing and satisfies \((1-c)^2\leq z(c)\leq1\). After dropping terms that do not depend on the cap, the negative of expected utility becomes
\[
\int\left[
\frac{z(c)^{3/2}}{3}
-b(c)^2\sqrt{z(c)}
\right]\de F(c).
\]

\Needspace{20\baselineskip}
The following lemma connects the least-squares problem to the transformed loss.

\begin{lemma}[Continuum ironing]\label{lem:continuum-ironing}
Problem~\eqref{eqn:iron} has a pointwise weakly decreasing solution \(z^*\) that satisfies
\[
(1-c)^2
\leq z^*(c)
\leq \left(1-\min\{c,1/2\}\right)^2
\qquad\text{for every }c.
\]
Any other solution is equal to \(z^*\) \(F\)-almost surely. The same function \(z^*\) also minimizes
\[
\int\left[
\frac{z(c)^{3/2}}{3}
-b(c)^2\sqrt{z(c)}
\right]\de F(c)
\]
over all pointwise weakly decreasing functions satisfying these bounds. Every minimizer of this second problem is equal to \(z^*\) \(F\)-almost surely.
\end{lemma}
The proof is in \cref{sec:proofs}. We therefore recover the optimal cap from \(\bar a^*(c)=1-\sqrt{z^*(c)}\).

With known capability, strict concavity gives the unique pointwise cap \(\min\{c,1-b(c)\}\). Evaluation attains this bound exactly when that cap is weakly increasing and hence truthfully implementable. If \(b(c)\) is weakly decreasing, both \(c\) and \(1-b(c)\) are weakly increasing. Their minimum is therefore weakly increasing, which proves part~(i).

\emph{Step 2: the common cap.} Let
\[
a^*
\in
\operatorname*{arg\,max}_{\bar a\in[0,\overline c]}
\int V_{b(c)}\big(\min\{c,\bar a\}\big)\,\de F(c).
\]
The objective is continuous on a compact interval, so a maximizer exists. Any cap below \(1/2\) that binds a positive mass is not optimal. For an affected type with \(\bar a\leq b\), we have \(V_b'(\bar a)=1-2\bar a>0\). For one with \(b<\bar a<1/2\),
\[
V_b'(\bar a)
=(1-\bar a)^2-b^2
>(1-\bar a)^2-\bar a^2
=1-2\bar a>0.
\]
Raising the cap therefore raises every affected type's utility. Any binding optimum thus obeys \(a^*\geq1/2\geq b(c)\). Because \(F\) is atomless, only types with \(c>\bar a\) change their actions after a marginal cap change. At a binding optimum,
\[
\int_{\{c>a^*\}}
\left[(1-a^*)^2-b(c)^2\right]\de F(c)=0.
\]
Dividing by \(\Pr(c>a^*)\) and taking the nonnegative square root gives the conditional formula. If every type can reach the cap, the conditional expectation is unconditional. When \(b\) is weakly increasing, the conditional expectation is weakly increasing in the threshold. The right side of the fixed-point equation is therefore weakly decreasing, so an interior solution is unique.

\emph{Step 3: increasing bias.} Suppose \(b\) is weakly increasing and let \(\bar a^*(c)=\min\{c,a^*\}\). This schedule is weakly increasing and bounded above by \(c\), so it is feasible. Consider first a binding common cap, \(a^*<\overline c\), and any other feasible schedule \(\bar a\). Concavity gives
\[
\int\!\left[
V_{b(c)}(\bar a(c))-V_{b(c)}(\bar a^*(c))
\right]\de F(c)
\leq
\int V_{b(c)}'(\bar a^*(c))
\left[\bar a(c)-\bar a^*(c)\right]\de F(c).
\]
On the region \(c\leq a^*\), the first factor is nonnegative and the second is nonpositive. Indeed, feasibility gives \(\bar a(c)-c\leq0\). If \(V_{b(c)}'(c)<0\) for one such type, then \(b(c)>1-c\geq1-a^*\). Increasing bias would make
\[
(1-a^*)^2-b(c')^2<0
\]
for every \(c'>a^*\), contrary to the optimal-cap condition above. Hence the product is nonpositive.

On the region \(c>a^*\), the functions
\[
(1-a^*)^2-b(c)^2
\quad\text{and}\quad
\bar a(c)-a^*
\]
are weakly decreasing and weakly increasing, respectively. The first has conditional mean zero by the optimal-cap condition. Chebyshev's integral inequality therefore makes their product integrate to a nonpositive value. No feasible evaluated schedule improves on \(\bar a^*\). Strict concavity gives uniqueness up to \(F\)-null sets.

It remains to characterize the corner. If \(a^*=\overline c\) and \(c>1-b(c)\) for some type, continuity and increasing bias give a positive upper interval on which a small reduction in the common cap raises utility. This contradicts optimality at \(\overline c\). Conversely, if \(\overline c\leq1-b(\overline c)\), increasing bias implies \(c\leq1-b(c)\) for every type. Full access is then pointwise optimal, so \(a^*=\overline c\). If the inequality fails, the optimum is interior and Step~2 gives the unique fixed point in part~(ii). This proves the claim.
\end{proof}

\begin{figure}[H]
\centering
% Simulation source: Notes+code/gen_sandbagging_figures.py (type_profile_figures)
\includegraphics[width=0.95\textwidth]{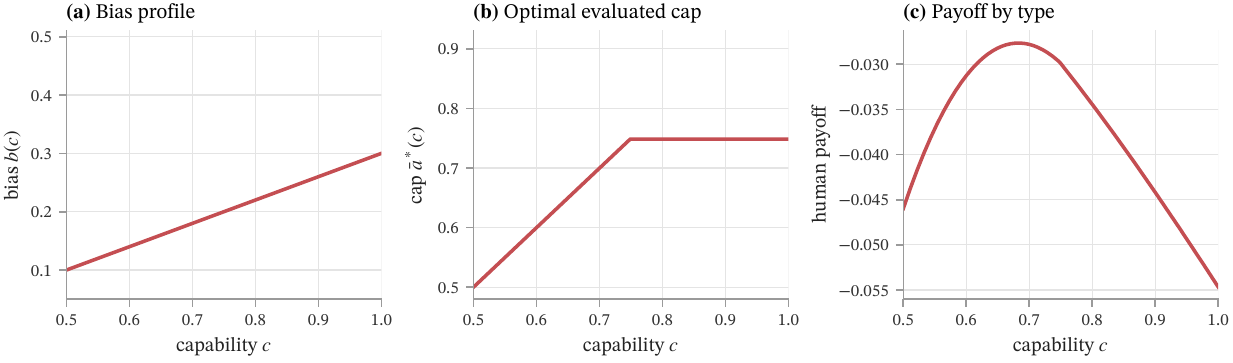}\\[0.4em]
\includegraphics[width=0.95\textwidth]{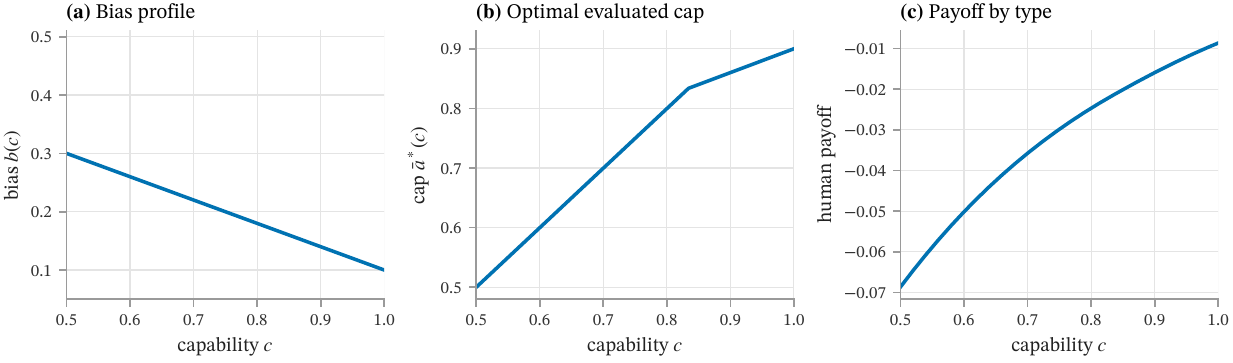}\\[0.4em]
\includegraphics[width=0.95\textwidth]{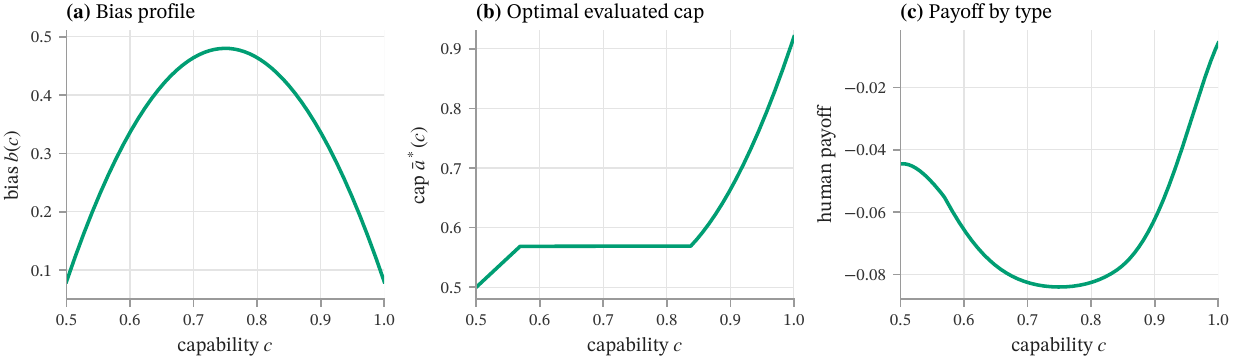}
\caption{Optimal evaluated caps across bias profiles.}
\par\vspace{0.4em}
\parbox{0.95\textwidth}{\centering\footnotesize Top to bottom: rows use increasing bias \(b(c)=0.4c-0.1\), decreasing bias \(b(c)=0.5-0.4c\), and nonmonotone bias \(b(c)=0.08+1.6x(1-x)\), where \(x=2c-1\). The columns show bias, the optimal evaluated cap, and the human's payoff by type.}
\label{fig:sandbagging-profiles}
\end{figure}

Viewed through our model, sandbagging is profitable only when pretending to be a less capable type generates a looser cap. We might think of three cases. First, if bias rises with capability, powerful models are also more misaligned. The complete-information cap can then fall with capability. Such a schedule is not truthfully implementable: a more capable AI can pretend to be less capable and receive a looser cap. The implemented cap must instead rise weakly with capability, so the best feasible policy pools from above where this constraint binds. This is illustrated by the first row of \cref{fig:sandbagging-profiles}. 

Now suppose, instead, that bias falls with capability---as in the idea that more capable systems might converge to human values. The complete-information cap is then weakly increasing, so a more capable AI cannot obtain a looser cap by pretending to be less capable. A less capable but more biased AI cannot pretend to be more capable---this is the sense in which capabilities cannot be faked. Hence, it is optimal for the designer to implement the cap \emph{as if} it knew the AI's capability. This is illustrated by the second row of \cref{fig:sandbagging-profiles}. 

Finally consider the case in which bias is nonmonotone in capability e.g., although a very powerful AI system might be aligned, an intermdiate system might be quite biased. \cref{prop:evaluation-caps} characterzies how the cap is optimally ironed such that intermedaite types are pooled and face the same cap while relaxing the cap for extremely capable types who can be verified to be capable and hence aligned. \end{romanexample}

\begin{romanexample}{The alignment--interpretability trade-off}\label{sec:1d} Part of AI safety aims to interpret and understand the behavior of AI systems. What is the value of doing so? We argue that the right measure must take into account optimal mechanisms. 

\textbf{Setup.} We remain in the linear-quadratic environment. To focus on interpretability and alignment, we will assume that all AI types are maximally capable i.e., the action space is simply the real line for all types, and that all AI types learn the state perfectly. Thus, an AI type is pinned down by its bias $b \geq 0 $. A \emph{technology} is a distribution $\tau \in \Delta(\Re)$ whose support contains only nonnegative AI biases. We think of a technology as capturing the designer's subjective uncertainty about the AI's bias after some training process.\footnote{See \cite{herrmann2026decision} who makes a similar point but in the context of whether to deploy an agent i.e., comparing `no AI' vs `full discretion'.} For instance, methods such as reinforcement learning from human feedback might push the AI's preferences toward that of the human, but there is substantial residual uncertainty about the preference that emerges at deployment \citep{casper2023open}. As before, the AI (acts as if it) knows its bias $b \geq 0$ before it is evaluated. After it is deployed, the AI agent learns the true state $\theta \in U[0,1]$ and acts. 

We study state-independent deterministic delegation. A report selects a nonempty closed set of permitted actions $D \subseteq A$.\footnote{Note that this is a somewhat larger class than \cref{sec:unknown-capabilities} that forced the delegation set to be an interval with the same floor.} A \emph{report-independent} permission set $D=(-\infty,\bar a]$ induces the action
\[
\argmax_{a\leq \bar a}u_b(a,\theta)=\min\{\bar a,\theta+b\}.
\]

Since every AI type can take the full set of actions \(A\) and observes the state perfectly, we assume that the verification order is complete:
\[
C(b)=\supp(\tau)
\quad \text{for every }b\in\supp(\tau).
\]
Thus every type can submit every report. A direct delegation mechanism assigns a nonempty closed permission set \(D_b\) to each report \(b\). The mechanism is incentive compatible if each type weakly prefers the set assigned to its truthful report to the set assigned to every other report. We restrict attention to mechanisms whose induced losses are measurable.

\textbf{Technology frontier and alignment-interpretability tradeoffs.} 
For the technology $\tau$, we say that \emph{mean alignment} is 
\[
\bar b:=\Ex_{b \sim \tau}[b] 
\]
and its \emph{interpretability} is 
\[
\sigma^2:=\Ex_{b \sim \tau}\left[(b-\bar b)^2\right].
\]
Mean alignment measures how biased an AI system is on average. By contrast, interpretability is a measure of dispersion of our subjective uncertainty as to whether the AI system is biased. Indeed, the practice of behavioral evaluations \citep{perez2022discovering} and mechanistic interpretability---probes and circuit-level analyses of model internals \citep{sharkey2025open}---sharpen our subjective uncertainty about the bias and, in this regard, can be broadly viewed as aiding interpretability. 

Under full discretion---in which the AI's permitted set is the full action space---the AI with bias $b$ chooses its ideal action $\theta + b$. Hence, the expected loss can be decomposed as 
\[
\Ex_{b \sim \tau}[b^2]=\bar b^2+\sigma^2
\]
which is analogous to the bias-variance decomposition of an estimator's squared error. On this view, (squared) mean alignment and interpretability trade off against each other one-for-one. But this is under full discretion; in what follows, we will characterize the optimal mechanism and, in so doing, pin down the value of technologies that are more aligned on average versus more interpretable.

\noindent
\begin{minipage}[t]{0.54\textwidth}
\vspace{0pt}
The \emph{technology frontier} is a family of bias distributions $(\tau_\kappa)_{\kappa \in [0,1]}$. We use $\kappa$ to index training processes that lower mean bias but raise residual uncertainty. When $\tau=\tau_\kappa$, the mean bias $\bar b$ falls with $\kappa$ while the variance $\sigma^2$ rises. Thus, a higher $\kappa$ makes the AI less biased on average but leaves the designer less certain about its residual bias. This reduced-form trade-off covers training that improves average behavior while making the remaining failure modes harder to predict \citep{kirk2023understanding,baker2025monitoring,korbak2025chain}. 
\end{minipage}\hfill
\begin{minipage}[t]{0.42\textwidth}
\vspace{0pt}
\centering
\includegraphics[width=\linewidth]{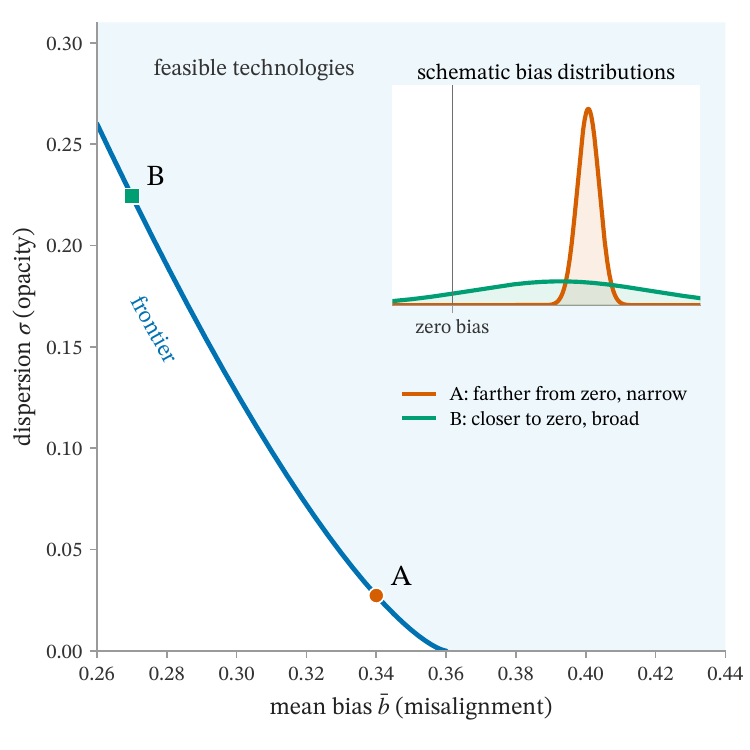}
\captionsetup{hypcap=false}
\captionof{figure}{The technology frontier.}
\label{fig:technology-set}
\end{minipage}
%The technology remains the full distribution $\tau_\kappa$, rather than these two moments alone, because higher moments will also affect the designer's payoff.

\textbf{Optimal delegation.} Fix a technology $\tau$. What permission set should the human use? We will establish this in closed form under the condition 
\[
b\leq 1-\sqrt{\bar b^2+\sigma^2}
\qquad \text{$\tau$-almost surely}. \tag{INT}\label{eq:cap-interiority}
\]
\eqref{eq:cap-interiority} ensures that the optimal cap lies weakly above every realized bias, which gives the closed form below.

\begin{proposition}[One cap is optimal]\label{prop:cap}
If \eqref{eq:cap-interiority} holds, then
\begin{enumerate}[label=(\roman*)]
    \item the delegation set
    \[
    D^*=(-\infty,\bar a^*] \quad \text{where} \quad 
    \bar a^*=1-\sqrt{\bar b^2+\sigma^2}
    \]
    minimizes the human's loss;
    \item the resulting minimum loss is
    \[
    L^*=(\bar b^2+\sigma^2)
    -\frac{2}{3}(\bar b^2+\sigma^2)^{3/2}
    -\frac{2}{3}\Ex_{b\sim\tau}[b^3].
    \]
\end{enumerate}
\end{proposition}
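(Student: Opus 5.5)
We would prove the two parts in order. First we identify the best \emph{single} permission set used for every report and compute its loss. Then we show that no incentive-compatible menu $\{D_b\}$ does better.

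\emph{Step 1: the best report-independent set.} Fix a closed $D$ used by every type. Type $b$ picks the point of $D$ nearest to $\theta+b$. Because every bias is nonnegative and $\theta\sim U[0,1]$, we would first argue that $D$ can be replaced by a half-line $(-\infty,\bar a]$ without raising the loss. Filling gaps and removing actions above the top relevant point only moves actions toward $\theta$, in the style of Alonso--Matouschek for an upward-biased agent. We expect \eqref{eq:cap-interiority} to keep the relevant portion of $D$ inside the region where this is safe.

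For a cap, the loss of type $b$ is $-V_b(\bar a)$, with $V_b$ as computed in \cref{sec:unknown-capabilities}; that formula applies to any $\bar a$ and uses no capability restriction. We look for an optimal cap with $\bar a\ge b$ for every $b$ in the support. On that range the branch $\bar a>b$ applies, so the first-order condition is
\[
\Ex_\tau\big[(1-\bar a)^2-b^2\big]=0,
\qquad\text{so}\qquad
\bar a^*=1-\sqrt{\bar b^2+\sigma^2}.
\]
Condition \eqref{eq:cap-interiority} is exactly what guarantees $\bar a^*\ge b$ almost surely, so we are on the correct branch. Strict concavity of each $V_b$ makes this cap the global optimum among caps. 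Substituting $\bar a^*$ into $-\Ex_\tau[V_b(\bar a^*)]$, and using $1-\bar a^*=\sqrt{\bar b^2+\sigma^2}$ and $\Ex[b^2]=\bar b^2+\sigma^2$, should give the expression for $L^*$ in part~(ii) after collecting terms.

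\emph{Step 2: menus do not help.} Let $a_b(\theta)$ denote the action type $b$ takes from $D_b$. Let
\[
W(b):=\max_{b'}\Big(-\Ex_\theta\big[\mathrm{dist}(\theta+b,D_{b'})^2\big]\Big)
\]
be type $b$'s indirect utility; incentive compatibility says the maximum is attained at $b'=b$. The envelope theorem then gives
\[
W'(b)=2\,\Ex_\theta\big[a_b(\theta)-\theta-b\big].
\]
Expanding $(a-\theta)^2=(a-\theta-b)^2+2b(a-\theta-b)+b^2$, the human's loss from type $b$ becomes
\[
-W(b)+b\,W'(b)+b^2.
\]
The total loss is therefore a functional of $W$ alone that is linear in $W$. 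The plan is to minimize this functional over all $W$ that can arise as upper envelopes of the functions $b\mapsto-\Ex[\mathrm{dist}(\theta+b,D)^2]$. We then show the minimum is attained when the envelope is generated by a single set, namely the cap from Step~1. To do so, we would integrate $bW'$ by parts against $\tau$, or compare the candidate's $W$ with any IC $W$ through a Lagrangian whose multiplier is the first-order condition from Step~1. The aim is an inequality of the form
\[
\text{loss}(W)-L^*\ \ge\ \Ex_\tau\big[\text{nonnegative term}\big].
\]

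\emph{Main obstacle.} The hard step is characterizing, or at least adequately relaxing, the set of achievable $W$. Each function $b\mapsto-\Ex[\mathrm{dist}(\theta+b,D)^2]$ has curvature bounded by that of the full-discretion utility, so $W$ should satisfy a semiconvexity-type constraint together with $W\le0$. We would first try to prove the required inequality using only these two properties. If that relaxation is too loose, we would instead fall back on the sequential-delegation ironing logic used for \cref{prop:evaluation-caps}. There, \eqref{eq:cap-interiority} plays the role of keeping every type's ideal actions below the cap, so that pooling all types at one threshold is optimal.
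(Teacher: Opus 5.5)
\paragraph{Where the proposal falls short.} Your envelope identity $-W(b)+bW'(b)+b^2$ for the human's loss is correct. It is the identity the paper uses, written there as $H_D(b)=A_D(b)-bA_D'(b)+b^2$ with $A_D=-W$. Your first-order condition for the cap matches the paper's Step~2, with one caveat: the cubic formula for $V_b$ holds only for $\bar a\le 1+b$ and is concave only for $\bar a\le 1$. Global optimality therefore needs the sign of the expected derivative on $[\bar a^*,1]$ and on $\bar a\ge1$, not strict concavity.

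The genuine gap is the claim that an incentive-compatible menu of arbitrary closed sets cannot beat one cap. This is the substance of part~(i), and your proposal leaves it as a plan whose first attempt fails. A semiconvexity constraint that holds for every achievable $W$ must allow curvature $-2$, since $D=\{1/2\}$ gives $W(b)=-b^2-1/12$. Together with $W\le0$, it therefore admits $W(b)=-b^2$, where every type takes $a=\theta$. For that $W$, $-W+bW'+b^2\equiv0$, so the relaxed problem has value $0$, not $L^*$.

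What the relaxation discards is the delegation structure. Because $\theta$ is not contractible, a state-independent set $D$ gives type $t$ the loss $A_D(t)=\int_t^{t+1}d_D(x)^2\,\de x$, where $d_D$ is a nonnegative $1$-Lipschitz distance function. Your fallback to the ironing in \cref{prop:evaluation-caps} assumes what must be proved: there, incentive compatibility reduces to monotonicity only because every permission set is already a cap. Your Step~1 claim that filling gaps only moves actions toward $\theta$ is also false. An upward-biased agent stopped at a gap endpoint lying between $\theta$ and $\theta+b$ moves away from $\theta$ when the gap is filled.

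\paragraph{The missing idea.} You need a sharp inequality on such distance functions. If $d\ge0$ is $1$-Lipschitz on $[0,1]$ and $y=\int_0^1 d^2$, then $d(1)^2-d(0)^2\le\Phi(y)$, where $\Phi(y)=(3y)^{2/3}$ for $y\le1/3$ and $\Phi(y)=2\sqrt{y-1/12}$ otherwise. Equality holds for the ramp $(x-c)_+$ of a cap. Hence $A_D'(t)\le\Phi(A_D(t))$, with equality for caps. The paper then argues as follows.
\begin{enumerate}
\item For each report $b$, take the cap $K_{c(b)}=(-\infty,c(b)]$ that gives type $b$ the same loss as $D_b$.
\item At that common level of loss, the cap has the largest slope $A'$. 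Since $b\ge0$, your own identity then shows the matched cap weakly lowers the human's loss for type $b$.
\item A comparison with the ODE $A'=\Phi(A)$, via $Q(y)=\int_0^y\Phi(v)^{-1}\,\de v$, gives $A_{D_b}(t)\le A_{K_{c(b)}}(t)$ for all $t\ge b$.
\item A type $b'>b$ can report $b$, so incentive compatibility forces $c(b)\le c(b')$.
\item The human's full-information cap $q(b)=\max\{1/2,1-b\}$ is weakly decreasing while $c(b)$ is weakly increasing. Hence one common endpoint lies between $c(b)$ and $q(b)$ for every $b$.
\item Moving every type to that common cap weakly lowers the human's loss and is trivially incentive compatible.
\end{enumerate}
Only after this reduction does your cap optimization finish the proof.
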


\cref{prop:cap} states that a single cap at $\bar a^*$ performs at least as well as any incentive-compatible menu of delegation sets. The proof is somewhat involved and is in \cref{sec:proofs} though the basic idea is simple.  Start from an arbitrary incentive-compatible mechanism. For each reported bias, we match the original delegation set to a report-dependent cap that gives the type the same utility as under the original mechanism. By construction, the each type remains indifferent. We then show that this matched cap weakly lowers the human's loss. Because every report is feasible for ever type, incentive compatibility also implies that the endpoints of these matched caps rise weakly with reported bias. But the human prefers the opposite pattern: it prefers a tighter cap as bias rises. Thus, there is a single common cap that lies between the replaced cap and the human's optimal cap. Moving each type to this common cap weakly lowers the human's loss and remains incentive compatible. 

We can then optimize over the common cap. The tradeoff here is standard in delegation \citep{holmstrom1979,amador2006commitment}: loosening the cap helps the AI track the state, while tightening it limits the effect of the AI's bias. Under condition~\eqref{eq:cap-interiority}, the optimal balance sets the squared gap between the cap and the highest state equal to average squared bias.\footnote{The force is the same as in the increasing-bias case of \cref{prop:evaluation-caps}: a higher type can imitate any lower type while the human's preferred cap falls with bias. Here all types share the same action space, so pooling gives the same cap. Conversely, in \cref{sec:unknown-capabilities} where capabilities were hetrogeneous, the common cap was instead truncated by each type's capability.}

%Note also that the cap depends on the technology only through its first and second moments $\bar b^2+\sigma^2$. Thus, squared mean bias and variance are perfect substitutes for the design of control: either a larger average bias or more uncertainty tightens the cap. But the value of the mechanism---the designer's payoff evalauted at the optimal---depends also on the third moment $\Ex_{b\sim\tau}[b^3]$.

\textbf{The value of interpretability.} To understand the value of interpretability, we consider a symmetric family of distributions 
\[
b=\bar b+\sigma\varepsilon,
\]
where we fix a symmetric, finitely supported shock $\varepsilon$ with $\Ex[\varepsilon]=0$ and $\Ex[\varepsilon^2]=1$. We restrict $\sigma$ so that every realized bias is nonnegative and satisfies condition~\eqref{eq:cap-interiority}. Symmetry gives $\Ex[\varepsilon^3]=0$, so the centered third moment vanishes and we can write $\Ex[b^3]=\bar b^3+3\bar b\sigma^2$. Hence, the loss under the optimal mechanism is
\[
L^*(\bar b,\sigma^2)
=\bar b^2+\sigma^2
-\frac{2}{3}(\bar b^2+\sigma^2)^{3/2}
-\frac{2}{3}(\bar b^3+3\bar b\sigma^2),
\]
and the marginal effect of variance on loss is
\[
\frac{\partial L^*}{\partial \sigma^2}
=1-\sqrt{\bar b^2+\sigma^2}-2\bar b
=\bar a^*-2\bar b.
\]
The basic idea is that a symmetric increase in variance moves residual bias away from its mean in both directions. Lower-bias realizations retain useful discretion, while the cap constrains higher-bias realizations. Because greater interpretability means lower variance, it lowers loss locally when $\bar a^*>2\bar b$ and raises loss locally when $\bar a^*<2\bar b$. The middle panel of \cref{fig:delegation-vs-sigma} plots this derivative for a distribution of bias with binary support. When the threshold lies in the maintained range, its sign changes at $\sigma_\dagger^2=(1-\bar b)(1-3\bar b)$. The left panel of \cref{fig:delegation-vs-sigma} compares the human's payoff under the optimal cap, full discretion, and no AI for a distribution of bias with binary support. Note that increasing $\sigma$ can help because the cap limits the high-bias realization while preserving the low-bias realization. The right panel shows the same force through the value of delegation, which rises with residual uncertainty.

\begin{figure}[H]
\centering
\includegraphics[width=0.98\textwidth]{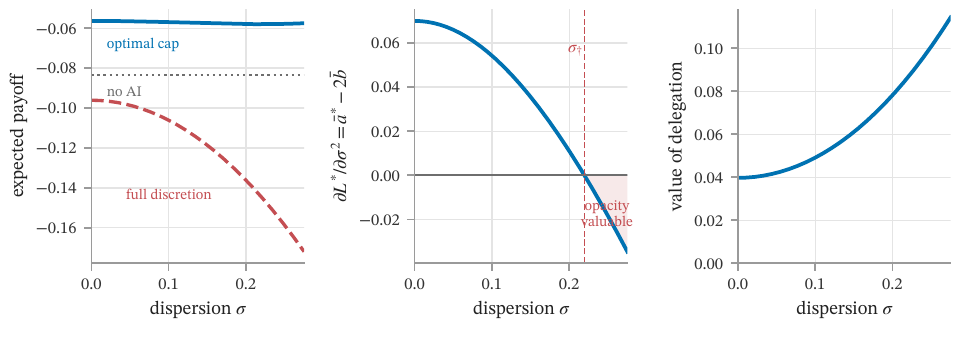}
\caption{The value of interpretability at fixed mean bias.}
\par\vspace{0.4em}
\parbox{0.95\textwidth}{\centering\footnotesize The technology puts equal probability on $\bar b-\sigma$ and $\bar b+\sigma$, with $\bar b=0.31$ and $\sigma_\dagger\approx0.22$.}
\label{fig:delegation-vs-sigma}
\end{figure}

\textbf{Moving along the frontier.}\label{sec:tradeoff} The mechanism also changes which technology the designer wants to build. Under full discretion, loss is $\bar b^2+\sigma^2$, so squared mean bias and variance trade one-for-one. Under the optimal cap, the third moment changes this ranking. For the parametrized symmetric family of distributions above, 
\[
\frac{\partial L^*}{\partial \bar b}
=2\bar b\,\bar a^*-2(\bar b^2+\sigma^2).
\]
To illustrate, consider a frontier on which a higher $\kappa$ buys lower mean bias at an increasing cost in residual uncertainty. \Cref{fig:technology-frontier} places this frontier on a loss surface. The left panel shows optimized loss. The red dashed locus is where the marginal value of interpretability changes sign. The filled dot is the technology chosen under the optimal cap; the open dot is the choice under full discretion i.e., without an optimal mechanism. Notice that the accepts more variance than the latter becasue the optimal cap trunctates the loss from high bias realizations which in turn allows the designer to accept more residual uncertainty. The right panel shows the value of delegation.

\begin{figure}[H]
\centering
\includegraphics[width=0.98\textwidth]{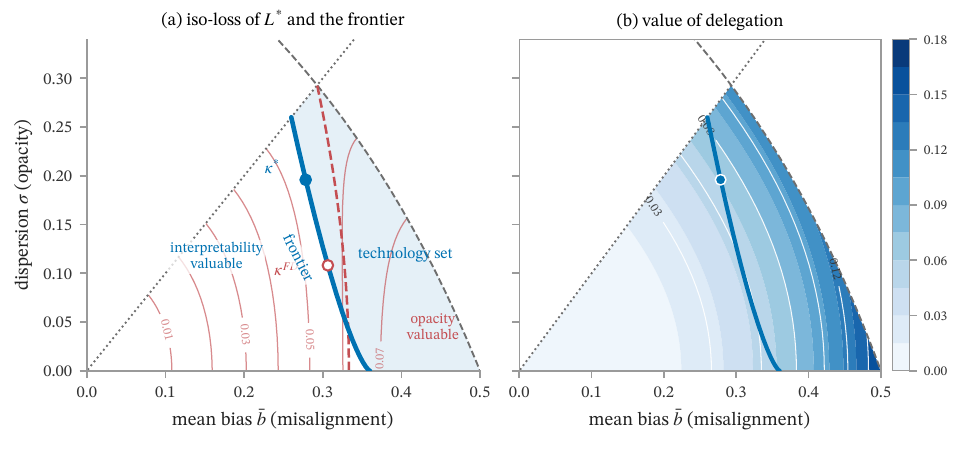}
\caption{The technology frontier under optimal control.}
\par\vspace{0.4em}
\parbox{0.95\textwidth}{\centering\footnotesize Both panels use the two-point family and the frontier $\bar b(\kappa)=0.36-0.10\kappa$ and $\sigma(\kappa)=0.26\kappa^{1.4}$ for $\kappa\in[0,1]$. Biases are nonnegative below the dotted line, and condition~\eqref{eq:cap-interiority} holds below the dashed grey curve.}
\label{fig:technology-frontier}
\end{figure}

\Cref{fig:frontier-path} then traces the mechanism and payoffs as the technology traverseses the frontier. Note that the value of delegation changes along the frontier: the optimal delegation mechanism matters most when the technology leaves the AI either predictably biased or highly uncertain.

\begin{figure}[H]
\centering
\includegraphics[width=0.98\textwidth]{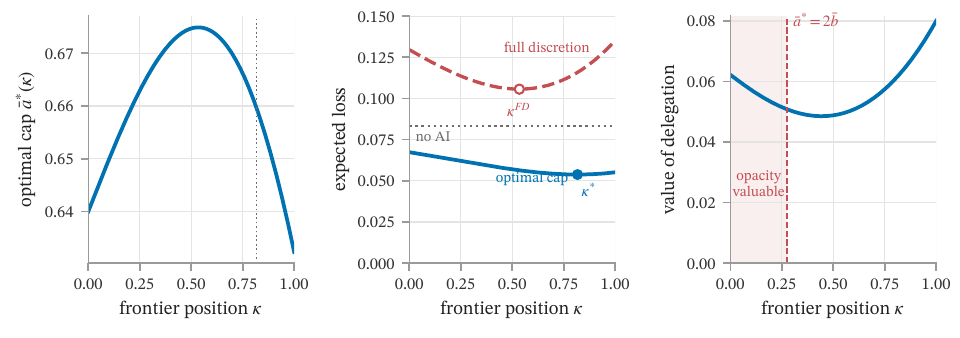}
\caption{Moving along the technology frontier. The optimal cap first loosens as mean bias falls and then tightens as uncertainty grows. The full-discretion designer chooses $\kappa^{FD}\approx0.53$, while the designer equipped with the optimal cap chooses $\kappa^*\approx0.82$. The right panel shows the value of delegation along the frontier.}
\label{fig:frontier-path}
\end{figure}

\begin{proposition}[Substitutes in control, complements in value]\label{prop:subs-comps} 
Consider the parametrized family of symmetric distributions above.
\begin{enumerate}[label=(\roman*)]
    \item \emph{Substitutes in control.} The optimal cap depends on the technology only through $\bar b^2+\sigma^2$. Hence, squared mean bias and variance are perfect substitutes in the chosen level of control.
    \item \emph{Complements in value.} At every interior admissible parameter pair, the optimized loss satisfies
    \[
    \frac{\partial^2L^*}{\partial\bar b\,\partial\sigma^2}
    =-2-\frac{\bar b}{\sqrt{\bar b^2+\sigma^2}}<0,
    \]
    so, on every rectangle contained in the admissible parameter region, the optimized payoff $-L^*$ has increasing differences in alignment $-\bar b$ and interpretability $-\sigma^2$. Improving either raises the marginal value of the other.
\end{enumerate}
\end{proposition}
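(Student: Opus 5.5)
We obtain both parts by reading off the closed form in \cref{prop:cap} and differentiating it. Throughout, fix the symmetric shock $\varepsilon$ and call a pair $(\bar b,\sigma^2)$ \emph{admissible} if every realization $\bar b+\sigma\varepsilon$ is nonnegative and satisfies \eqref{eq:cap-interiority}. For each admissible pair, \cref{prop:cap} applies to $\tau=\text{law of }\bar b+\sigma\varepsilon$. It delivers the optimal cap $\bar a^*=1-\sqrt{\bar b^2+\sigma^2}$ and the minimum loss $L^*$.

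\emph{Part (i).} This is immediate from \cref{prop:cap}(i). The technology enters $\bar a^*$ only through $m:=\bar b^2+\sigma^2$. Any two admissible technologies with the same $m$ therefore induce the same optimal delegation set $D^*=(-\infty,\bar a^*]$. Along a level set of $m$, a unit increase in $\bar b^2$ is exactly offset by a unit decrease in $\sigma^2$, which is the perfect-substitutes claim. I would also note that the level of control is monotone in $m$: the cap tightens as either moment grows.

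\emph{Part (ii).} I will use the expression already derived in the text, $\Ex[b^3]=\bar b^3+3\bar b\sigma^2$, which follows from $\Ex[\varepsilon]=\Ex[\varepsilon^3]=0$ and $\Ex[\varepsilon^2]=1$. This gives
\[
L^*(\bar b,\sigma^2)=m-\tfrac23 m^{3/2}-\tfrac23(\bar b^3+3\bar b\sigma^2).
\]
Treating $\bar b$ and $s:=\sigma^2$ as independent variables on the interior of the admissible region, where $m>0$ so $\sqrt m$ is smooth, the first partial is
\[
\partial L^*/\partial s=1-\sqrt m-2\bar b.
\]
Differentiating this in $\bar b$ with $\partial m/\partial\bar b=2\bar b$ gives
\[
\frac{\partial^2L^*}{\partial\bar b\,\partial s}=-\frac{\bar b}{\sqrt m}-2.
\]
This is strictly negative since $\bar b\ge0$. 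As a consistency check, differentiating $\partial L^*/\partial\bar b=2\bar b\bar a^*-2m$ in $s$ gives the same expression.

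To pass from the sign of the cross partial to increasing differences, take a rectangle $[\bar b_0,\bar b_1]\times[s_0,s_1]$ contained in the admissible region. Because the closed form for $L^*$ is valid at every point of the rectangle, I can write
\[
L^*(\bar b_1,s_1)-L^*(\bar b_1,s_0)-L^*(\bar b_0,s_1)+L^*(\bar b_0,s_0)=\int_{\bar b_0}^{\bar b_1}\!\int_{s_0}^{s_1}\partial^2_{\bar b s}L^*\,ds\,d\bar b\le0.
\]
So $L^*$ has decreasing differences in $(\bar b,s)$. Equivalently, $-L^*$ has increasing differences in $(-\bar b,-\sigma^2)$: lowering $\bar b$ raises the gain from lowering $\sigma^2$, and vice versa.

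The main obstacle is bookkeeping rather than calculus. The formula for $L^*$ is only justified where \cref{prop:cap} applies, so the argument must stay inside the admissible region. That is why the statement restricts to rectangles contained in it; on the rectangle's boundary, continuity of $L^*$ handles the integral. I would also check that interior points have $m>0$, which holds whenever $(\bar b,\sigma^2)\neq(0,0)$, so that $\sqrt m$ is differentiable there.
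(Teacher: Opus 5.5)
Your proposal is correct and follows essentially the paper's argument: read off $\bar a^*$ and $L^*$ from \cref{prop:cap} and differentiate. Your rectangle-integration step spells out the passage from the sign of the cross partial to increasing differences, which the paper only asserts. The paper instead differentiates $\partial L^*/\partial\bar b=2\bar b\,\bar a^*-2(\bar b^2+\sigma^2)$ in $\sigma^2$ and keeps the term $\sigma^3\Ex[\varepsilon^3]$, noting it is independent of $\bar b$, so its cross partial does not need symmetry, whereas you use $\Ex[\varepsilon^3]=0$ to drop that term.
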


\begin{proof}[Proof of \cref{prop:subs-comps}]
Part~(i) follows from \cref{prop:cap}(i). For part~(ii), the fixed distribution of $\varepsilon$ gives
\[
\Ex[b^3]
=\bar b^3+3\bar b\sigma^2+\sigma^3\Ex[\varepsilon^3].
\]
The last term does not depend on $\bar b$. Differentiating the loss in \cref{prop:cap}(ii) therefore gives
\[
\frac{\partial L^*}{\partial\bar b}
=2\bar b\,\bar a^*-2(\bar b^2+\sigma^2).
\]
Since
\[
\frac{\partial\bar a^*}{\partial\sigma^2}
=-\frac{1}{2\sqrt{\bar b^2+\sigma^2}},
\]
we obtain
\[
\frac{\partial^2L^*}{\partial\bar b\,\partial\sigma^2}
=-\frac{\bar b}{\sqrt{\bar b^2+\sigma^2}}-2.
\]
Changing both arguments from $(\bar b,\sigma^2)$ to $(-\bar b,-\sigma^2)$ gives the increasing-differences claim.
\end{proof}

Under full discretion, loss is $\bar b^2+\sigma^2$, so the cross-partial is zero. Optimal control creates complementarity. A better-aligned technology receives more discretion, which gives interpretability more discretion to protect. At the same time, a more interpretable technology makes that discretion safer. Thus, our results formalize the sense in which understanding models is valuable---they give us, humans, confidence about the behavior of AI systems e.g., by probing language models \citep{goldowskydill2025detecting} and this, in turn, shapes the value of the optimal mechanism.
\end{romanexample}

\section{Multi-Agent}\label{sec:multi-agent}

We now extend our framework to multi-agent interactions. We will do so at quite a high level of generality, allowing such agents to hold beliefs about each other's capabilities and preferences, beliefs about those beliefs, and so on. We think this model is valuable because it serves as a natural `limiting case' for mult-agent rationality. Developing this machinery will allow our theory of multi-agent interaction to scale with agent capabilities. 

While we think this machinery is valuable, we also want mechanisms that do not turn on the fine detail of agents' higher-order beliefs \citep{bergemann2005robust}, or for mechanisms to not demand too much of agents' ability to optimize or reason strategically. Indeed, our multi-agent applications to follow will not rely very much on agents' higher order beliefs. Nonetheless, we think that future notions of simplicity and robustness for AI agents are best understood in relation to this benchmark. 

\subsection{Environment and universal types}\label{sec:multi-environment}

There are $n$ agents indexed by $\mathcal N:=\{1,\ldots,n\}$. They choose an action profile $\ba=(a_1,\ldots,a_n)\in A^{\mathcal N}$. The human has payoff
\[
v:A^{\mathcal N}\times\Theta\to\mathbb R.
\]
Each agent privately knows its feasible action set and payoff. It also has beliefs about the state, the other agents, and their beliefs.

We will define objects to model agents' alignment and capabilities analogously to the single agent case. For each agent $i$, let
\[
\mathcal U_i:=\{f:A^{\mathcal N}\times\Theta\to\mathbb R\}.
\]
Let $H_i^0$ be a singleton and define finite-order belief spaces recursively by
\[
H_i^{k+1}
:=
\Delta\bigg(
\Theta\times\prod_{j\neq i}
\big(\underbrace{(\mathscr A\times\mathcal U_j)}_{\substack{\text{$j$'s available} \\  \text{actions \&} \\  \text{payoffs}}}
\times H_j^k\big)
\bigg).
\]
A hierarchy is \emph{coherent} if every lower-order belief coincides with the relevant marginal of each higher-order belief; we restrict attention to coherent hierarchies. Following \citet{brandenburger1993hierarchies}, whose construction covers our non-compact Polish component spaces \citep[cf.][]{mertens1985formulation}, we write a type as
\[
t_i=(A_i[t_i],u_i[t_i],h_i[t_i])\in\mathbb T_i,
\]
where $h_i[t_i]$ is its coherent hierarchy.
The universal type space admits a canonical belief map
\[
t_i\longmapsto\beta_i[t_i]
\in\Delta(\Theta\times\mathbb T_{-i}) \quad \text{where }
\mathbb T_{-i}:=\prod_{j\neq i}\mathbb T_j.
\]
The belief $\beta_i[t_i]$ is the multi-agent counterpart of the belief $h[t] \in \Delta(\Theta)$ in the single-agent model. Its marginal on $\Theta$ is agent $i$'s belief about the state, while its joint law is the agent's prior over the state and other agents' types. 

Write $\mathbb T:=\prod_i\mathbb T_i$. The designer evaluates outcomes under a belief
\[
\Gamma\in\Delta(\Theta\times\mathbb T).
\]
For each agent $i$, let $S_i$ be its interim signal space. An information structure is the map 
\[
\pi:\Theta\times\mathbb T
\longrightarrow
\Delta\left(\prod_i S_i\right),
\]
where $s=(s_1,\ldots,s_n)$ is the signal profile and agent $i$ observes only $s_i$. Information structures  can (i) correlate signals across agents; and (ii) signals can be informative of both the state $\theta \in \Theta$ as well as other players' types. 

Under the designer's belief, the joint law of states, types, and signals is
\[
\Gamma(d\theta,dt)\pi(ds\mid\theta,t) \in \Delta(\Theta \times \mathbb{T} \times S^{\mathcal{N}}). 
\]
Type $t_i$ evaluates her strategies under the belief 
\[
\beta_i[t_i](d\theta,dt_{-i})
\pi(ds\mid\theta,t_i,t_{-i}) \in \Delta(\Theta \times \mathbb{T}_{-i}).
\]
Thus agents and the designer agree on the information structure $\pi$ but may disagree about the underlying state and types i.e., we need not have a common prior. But we will assume that, for every agent $i$, the conditional law $\Gamma(\cdot\mid t_i)$ of $(\theta,t_{-i})$ is absolutely continuous with respect to $\beta_i[t_i]$ for almost every own type under the $i$-type marginal of $\Gamma$. This implies that agents do not rule out events that the designer considers possible.

Our repsresentation of interim information $\pi$ follows the standard decomposition of a Bayesian game into payoff-relevant uncertainty and an information structure (see, e.e., \cite{bergemann2016bayes,bergemann2019information}). In our model, $(\theta,t)$ is the payoff-relevant uncertainty and the signals arrive between reports and actions. The single-agent experiment is the special case
\[
\pi(ds\mid\theta,t)=\pi[t](ds\mid\theta).
\]
Finally, we will once again allow for the possibility that agent reports (which can be interpreted as evaluations) are partially verifiable. As in the single-agent case, there is the reflexive and transitive order $\trianglerighteq$\footnote{It is without loss to impose this over the type space $\bigcup \mathbb{T}_i$.} and we write
\[
C(t_i):=\{\hat t_i \in\mathbb T_i: t_i \trianglerighteq\hat t_i\}
\]
for the set of types that $t_i \in \mathbb{T}_i$ can claim to be.

\subsection{Multi-agent mechanisms}\label{sec:multi-mechanisms}

\begin{definition}[Multi-agent indirect mechanism]\label{def:mechanism}
A multi-agent indirect mechanism is a tuple $\phi=(X,Y,\rho,R)$. For each agent $i$, the Borel spaces $X_i$ and $Y_i$ are its input and output alphabets. Write $X:=\prod_iX_i$ and $Y:=\prod_iY_i$. The communication rule is
\[
\rho:X\to\Delta(Y),
\]
and the reward schedules $R=(R_i)_i$ satisfy
\[
R_i:X\times A^{\mathcal N}\times\Theta\to\Rbar.
\]
The input menus $X_i(t_i)\subseteq X_i$ satisfy $X_i(t_i)\supseteq X_i(\hat t_i)$ whenever $t_i\trianglerighteq\hat t_i$. The mechanism induces the following game:
\begin{enumerate}
    \item nature draws $(\theta,t)$ under $\Gamma$, and each agent $i$ observes only $t_i$;
    \item the agents simultaneously send inputs $x_i\in X_i(t_i)$;
    \item the mechanism draws $y\sim\rho(x)$ and privately sends $y_i$ to agent $i$;
    \item the information structure draws $s\sim\pi(\cdot\mid\theta,t)$, and each agent $i$ observes only $s_i$;
    \item the agents simultaneously choose feasible actions $a_i'\in A_i[t_i]$. Agent $i$ receives $u_i[t_i](\ba',\theta)+R_i(x,\ba',\theta)$, and the human receives $v(\ba',\theta)$.
\end{enumerate}
A pure strategy consists of an input $x_i(t_i)$ and an action rule
\[
\alpha_i(t_i,x_i,y_i,s_i)\in A_i[t_i].
\]
An equilibrium requires the input and action rule to be optimal for each player $i \in \mathcal{N}$ under the distribution $\beta_i[t_i](d\theta,dt_{-i})\pi(ds\mid\theta,t_i,t_{-i})$. 
\end{definition}

For a multi-agent indirect mechanism $\phi$, let
\[
\mathcal O(\phi)
\subseteq
\Delta\left(\Theta\times\mathbb T\times A^{\mathcal N}\right)
\]
be the set of joint laws of the state, types, and played action profile induced by equilibria of $\phi$ under $\Gamma(d\theta,dt)\pi(ds\mid\theta,t)$. Indirect mechanisms induce an imcomplete information game among players where their `reward gradient' might be shaped by their report. As before, we will work with direct mechanisms. 

\begin{definition}[Multi-agent direct mechanism]\label{def:direct-ma}
A direct mechanism is a pair $(\gamma,r)$, where $r=(r_i)_i$. Its recommendation rule
\[
\gamma:\mathbb T\to\Delta\left(\prod_i A^{S_i}\right)
\]
assigns each report profile a lottery over profiles of signal-contingent plans. Agent $i$ privately receives its plan $a_i:S_i\to A$. The reward schedules are
\[
r_i:\mathbb T\times A^{\mathcal N}\times\Theta\to\Rbar.
\]
The direct mechanism induces the following game:
\begin{enumerate}
    \item nature draws $(\theta,t)$ under $\Gamma$, and each agent $i$ observes only $t_i$;
    \item the agents simultaneously report $\hat t_i\in C_i(t_i)$; write $\hat t:=(\hat t_i)_i$;
    \item the mechanism draws a plan profile $\ba(\cdot)\sim\gamma(\hat t)$ and privately sends $a_i(\cdot)$ to agent $i$;
    \item the information structure draws $s\sim\pi(\cdot\mid\theta,t)$ using the true type profile, and each agent $i$ observes only $s_i$;
    \item the agents simultaneously choose feasible actions $a_i'\in A_i[t_i]$. Agent $i$ receives $u_i[t_i](\ba',\theta)+r_i(\hat t,\ba',\theta)$, and the human receives $v(\ba',\theta)$.
\end{enumerate}
For a plan profile $\ba(\cdot)$, write
\[
\ba(s):=(a_1(s_1),\ldots,a_n(s_n))
\]
for the induced action profile.
\end{definition}

Direct mechanisms simply tell agents what plans---maps from singals to actions---to follow based on their report. The following condition ensures that it is actually optimal to report types honestly and follow the recommended plan. 

\begin{definition}[Honesty, obedience, and action feasibility]\label{def:ic}
The recommendation rule is \emph{action feasible} if
\[
\supp\,\marg_i\gamma(\hat t)
\subseteq
A_i[\hat t_i]^{S_i}
\]
for every agent $i$ and report profile $\hat t$. The direct mechanism is \emph{incentive compatible} if it is action feasible and truthful reporting followed by $a_i(s_i)$ is optimal. Thus, fix an agent $i$, a type $t_i\in\mathbb T_i$, a feasible report {$\hat t_i\in C_i(t_i)$}, and a feasible deviation rule
\[
d_i:A^{S_i}\times S_i\to A_i[t_i].
\]
For any plan and signal profiles, let
\[
\ba^{d_i}(s)
:=
\big(d_i(a_i(\cdot),s_i),\ba_{-i}(s_{-i})\big)
\]
be the action profile after agent $i$ deviates. Incentive compatibility requires
\begin{align*}
&\Ex_{\substack{(\theta,t_{-i})\sim\beta_i[t_i]\\
 s\sim\pi(\cdot\mid\theta,t_i,t_{-i})\\
\ba(\cdot)\sim\gamma(t_i,t_{-i})}}
\left[
u_i[t_i](\ba(s),\theta)
+r_i((t_i,t_{-i}),\ba(s),\theta)
\right]
\\
&\quad\geq
\Ex_{\substack{(\theta,t_{-i})\sim\beta_i[t_i]\\
 s\sim\pi(\cdot\mid\theta,t_i,t_{-i})\\
\ba(\cdot)\sim\gamma(\hat t_i,t_{-i})}}
\left[
u_i[t_i](\ba^{d_i}(s),\theta)
+r_i((\hat t_i,t_{-i}),\ba^{d_i}(s),\theta)
\right].
\end{align*}
\end{definition}

Honest and obedient play induces the outcome
\[
O(\gamma,r)
\in
\Delta\left(\Theta\times\mathbb T\times A^{\mathcal N}\right),
\]
under $(\theta,t)\sim\Gamma$, $s\sim\pi(\cdot\mid\theta,t)$, and $\ba(\cdot)\sim\gamma(t)$, with played action profile $\ba(s)$. Let
\[
\mathcal O_n:=\bigcup_{\phi}\mathcal O(\phi),
\]
and let $\mathcal O_n^{\mathrm{IC}}$ be the set of outcomes $O(\gamma,r)$ induced by incentive-compatible direct mechanisms.

\begin{proposition}[Multi-agent revelation principle]\label{prop:ma-revelation}
$\mathcal O_n=\mathcal O_n^{\mathrm{IC}}$.
\end{proposition}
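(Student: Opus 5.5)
The proof will mirror the single-agent argument in \cref{prop:single-revelation}, establishing the two inclusions separately. For $\mathcal O_n\subseteq\mathcal O_n^{\mathrm{IC}}$, we fix an indirect mechanism $\phi=(X,Y,\rho,R)$ and a pure-strategy equilibrium $(x_i,\alpha_i)_i$, and build the direct mechanism by composition with the equilibrium strategies. After a report profile $\hat t$, the mechanism forms the input profile $x(\hat t):=(x_i(\hat t_i))_i$ and draws $y\sim\rho(x(\hat t))$. It then privately discloses to each agent $i$ the contingent plan
\[
s_i\longmapsto\alpha_i\big(\hat t_i,x_i(\hat t_i),y_i,s_i\big),
\]
and pays $r_i(\hat t,\ba',\theta):=R_i(x(\hat t),\ba',\theta)$. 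We let $\gamma(\hat t)$ be the pushforward of $\rho(x(\hat t))$ under $y\mapsto(\alpha_i(\hat t_i,x_i(\hat t_i),y_i,\cdot))_i$. Measurability of this map follows from measurability of the pure strategies, so $\gamma(\hat t)$ is a well-defined lottery over plan profiles. Action feasibility holds because the equilibrium action rule of type $\hat t_i$ takes values in $A_i[\hat t_i]$. Under truthful and obedient play, the joint law of $(\theta,t,y,s)$, and hence of the played action profile, coincides with the equilibrium law of $\phi$ under $\Gamma(d\theta,dt)\pi(ds\mid\theta,t)$. The outcomes therefore agree.

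The key step is incentive compatibility, and this is where the multi-agent case needs slightly more care than the single-agent one. Fix $i$, a type $t_i$, a report $\hat t_i\in C_i(t_i)$ and a deviation rule $d_i$. We will exhibit a deviation in $\phi$ with exactly the same interim payoff for type $t_i$, holding the opponents at their equilibrium strategies. In $\phi$, type $t_i$ sends $x_i(\hat t_i)$. This is feasible because $t_i\trianglerighteq\hat t_i$ gives $X_i(\hat t_i)\subseteq X_i(t_i)$, so $x_i(\hat t_i)\in X_i(\hat t_i)\subseteq X_i(t_i)$. On receiving $y_i$ and $s_i$, the type plays
\[
d_i\big(\alpha_i(\hat t_i,x_i(\hat t_i),y_i,\cdot),\,s_i\big)\in A_i[t_i].
\]
Since opponents $j\neq i$ report truthfully in the direct mechanism, they generate inputs $x_j(t_j)$ and play $\alpha_j(t_j,x_j(t_j),y_j,s_j)$, exactly as in the equilibrium of $\phi$. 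We then check that, under $\beta_i[t_i](d\theta,dt_{-i})\,\pi(ds\mid\theta,t_i,t_{-i})$, the joint law of $(\theta,t_{-i},y,s)$ is identical in the two games, and that payoffs coincide pointwise. The true type enters $\pi$ in both games, and the report enters only through the inputs. Hence the deviation payoff in the direct mechanism equals a feasible deviation payoff in $\phi$. Truthful obedience reproduces the equilibrium payoff, so the equilibrium condition of $\phi$ delivers the inequality in \cref{def:ic}.

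The main obstacle is this measure-theoretic identification. We must verify that the plan-valued recommendation carries enough information for the deviation. Agent $i$ sees only the plan $\alpha_i(\hat t_i,x_i(\hat t_i),y_i,\cdot)$ and not $y_i$ itself. Since $d_i$ is a function of the plan, the composed strategy is a function of $y_i$, so the direct mechanism's deviation set embeds into the indirect one's. This is the garbling direction we need, and it requires no further argument.

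For the reverse inclusion $\mathcal O_n^{\mathrm{IC}}\subseteq\mathcal O_n$, we view an incentive-compatible direct mechanism as an indirect mechanism in its own right. We take $X_i=\mathbb T_i$ with menus $X_i(t_i)=C_i(t_i)$, which are monotone by transitivity of $\trianglerighteq$. We take $Y_i=A^{S_i}$, $\rho=\gamma$ and $R=r$. Truthful reporting followed by $\alpha_i(t_i,t_i,a_i,s_i)=a_i(s_i)$ is feasible by action feasibility. It is an equilibrium because any pure deviation in input and action rule is a pair $(\hat t_i,d_i)$ covered by \cref{def:ic}. This equilibrium generates $O(\gamma,r)$.
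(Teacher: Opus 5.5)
Your proposal is correct and follows the paper's proof essentially step for step. You use the same composition of the equilibrium strategies into plan-valued recommendations with $r_i(\hat t,\ba',\theta)=R_i(x(\hat t),\ba',\theta)$, the same appeal to monotone input menus (so that $x_i(\hat t_i)$ is available to $t_i$) together with the fact that $y_i$ determines the disclosed plan (so any direct deviation can be copied), and the same reinterpretation of a direct mechanism as an indirect one for the reverse inclusion; your extra remarks about opponents' truthful play and the true type entering $\pi$ only make explicit what the paper leaves implicit.
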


\begin{proof}[Proof of \cref{prop:ma-revelation}]
We extend the single-agent construction and preserve correlation across prisvate mechanism outputs. Fix an indirect mechanism and a pure-strategy equilibrium $(x,\alpha)$, and write $x(\hat t):=(x_i(\hat t_i))_i$. For each type report $\hat t_i$ and mechanism output $y_i$, define the plan
\[
a_{i,\hat t_i,y_i}(s_i)
:=
\alpha_i(\hat t_i,x_i(\hat t_i),y_i,s_i).
\]
For report profile $\hat t$, let $\gamma(\hat t)$ be the joint law of these plans under $y\sim\rho(x(\hat t))$, and set
\[
r_i(\hat t,\ba',\theta)
:=
R_i(x(\hat t),\ba',\theta).
\]
Equilibrium feasibility gives action feasibility, and honest, obedient play reproduces the indirect outcome.

A direct deviation depends on agent $i$'s plan and realized physical signal. The indirect agent can copy it after sending $x_i(\hat t_i)$, which is available because $\hat t_i\in C_i(t_i)$ and the input menus are monotone. The mechanism output $y_i$ determines the same plan, and the agent later observes the same $s_i$ under the true physical kernel $\pi(\cdot\mid\theta,t)$. The indirect equilibrium therefore blocks every direct deviation. This proves $\mathcal O_n\subseteq\mathcal O_n^{\mathrm{IC}}$.

For the reverse inclusion, use $X_i=\mathbb T_i$ with input menus $X_i(t_i)=C_i(t_i)$, $Y_i=A^{S_i}$, $\rho=\gamma$, and $R_i=r_i$. Direct incentive compatibility makes truthful reporting and obedience an equilibrium.
\end{proof}

The logic of \cref{prop:ma-revelation} is straightforward and follows the usual arguments for the revelation principle. The non-standard aspects are (i) the set of types that can be misreported as another type is constrained by the verification order a la \cite{green1986partially}; and (ii) recommendations are plans that map ex-interim signal realizations to actions. Thus our designer is, in the language of \cite{bergemann2019information}, non-omniscient and, what is more, the AI learns more about the state ex-interim. We think these features capture how humans might interact with AI agents in practice.

\section{Applications: Multi-Agent}\label{sec:multi-applications}

We study three ways in which one AI can help discipline another. Co-player scoring uses correlation across reports when ground truth is unavailable. Coupled reward design uses the difference between agents' actions to reveal linked private biases. Weak-to-strong oversight instead lets a task-ignorant monitor use a private diagnosis of the acting AI's bias to choose among increasingly rich control instruments: binary approval, a permission set, or an action reward. These mechanisms are counterparts of the \emph{scalable oversight} agenda in AI safety \citep{christiano2018supervising,leike2018scalable,bowman2022measuring}; \citet{hammond2025multi} survey the risks specific to multi-agent deployments.

\begin{romanexample}{Peer discipline}\label{sec:many-agents}

When agents' private information is correlated, one agent's type is informative about its co-players' types. We show how mechanisms can use this information to discipline actions. We will first establish this in a general environment, then specialize it to two types and two states to illustrate the basic forces. 

\paragraph{Arbitrary types.} Assume that the marginal over types under $\Gamma$ is finitely supported. For each agent $i$, let
\[
T_i^0:=\supp\marg_{\mathbb T_i}\Gamma
\quad\text{and}\quad
T_{-i}^0:=\prod_{j\neq i}T_j^0.
\]
Write $\beta_i[t_i](t_{-i})$ for type $t_i$'s marginal probability of the co-player type profile $t_{-i}$. We assume that for each player $i$, each $t_i \in T^0_i$ assigns probability $1$ to $T_{-i}^0$.\footnote{This is a standard `belief closure' condition which is implied by (but much weaker than) a common prior. It simplifies the exposition but is not necessary though the mechanism becomes more complicated.} We will also set $C_i(t_i)=\mathbb T_i$ i.e., every report is feasible. 

Implementation requires honesty and obedience only for types $t_i\in T_i^0$ i.e., those that are in the support of the principal's belief $\Gamma$. It will also be without loss to restrict our attention to reports in $T^0_i$: the mechanism can simply `assign reward $-\infty$' after every report that lies outside the support of $\Gamma$. 

Now consider the separation condition
\[
\beta_i[t_i]
\neq
\beta_i[\hat t_i]
\tag{S}\label{eq:coplayer-separation}
\]
for every agent $i$ and every pair of distinct supported reports $t_i,\hat t_i\in T_i^0$.

For an action-feasible rule 
\[
\gamma:\mathbb T\to\Delta\left(\prod_i A^{S_i}\right), 
\]
we let $\gamma_\theta(t)$ be the distribution of the played action profile under $\gamma(t)$ and the information structure $\pi$. For some $\Lambda>0$, choose the following reward schedule: 
\[
r_i(\hat t,\ba',\theta)
=
\begin{cases}
\displaystyle
\Lambda\left[
2\beta_i[\hat t_i](\hat t_{-i})
-\left\lVert\beta_i[\hat t_i]\right\rVert_2^2
\right]
-u_i[\hat t_i](\ba',\theta),
&\substack{\hat t_j\in T_j^0\text{ for every }j,\\
\ba'\in\supp\gamma_\theta(\hat t),}\\[3mm]
-\infty,
&\text{otherwise.}
\end{cases}
\tag{PR} \label{eq:peerreward}
\]
where \eqref{eq:peerreward} stands for peer reward.
The first term scores the report against co-player reports. The second cancels the reported utility on the support of actions the designer wishes to implement. 

\begin{proposition}[Support-wise implementability of every feasible rule]\label{prop:full-discipline}
Fix any action-feasible rule $\gamma$. If $n\geq2$ and \eqref{eq:coplayer-separation} holds, then, for sufficiently large $\Lambda>0$, the schedule \eqref{eq:peerreward} implements $\gamma$ on the support of $\Gamma$. 
\end{proposition}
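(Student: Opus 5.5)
The plan is to check directly the incentive constraints of \cref{def:ic} for a single agent $i$ and a type $t_i\in T_i^0$, holding the co-players at truthful reporting and obedience. By belief closure, $\beta_i[t_i]$ puts probability one on $\hat t_{-i}=t_{-i}\in T_{-i}^0$, so the first branch of \eqref{eq:peerreward} is the relevant one on path. Throughout I use that $A$ and $\Theta$ are finite and $T^0$ is finite. Hence
\[
M:=\max_{i}\max_{t_i,\hat t_i\in T_i^0}\ \max_{\ba,\theta}\big|u_i[t_i](\ba,\theta)-u_i[\hat t_i](\ba,\theta)\big|
\]
is finite. Also, by \eqref{eq:coplayer-separation}, the quantity
\[
\delta:=\min_i\min_{t_i\neq\hat t_i\in T_i^0}\lVert\beta_i[t_i]-\beta_i[\hat t_i]\rVert_2^2
\]
is strictly positive. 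I will take $\Lambda>2M/\delta$.

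\emph{Step 1 (truthful and obedient payoff).} Under a truthful report the reward cancels the agent's own utility on the implemented support. On every realized $(\theta,\ba)$ with $\ba\in\supp\gamma_\theta(t)$, type $t_i$ therefore receives exactly $\Lambda[2\beta_i[t_i](t_{-i})-\lVert\beta_i[t_i]\rVert_2^2]$. Taking expectations under $\beta_i[t_i]$ gives the quadratic-score identity
\[
\sum_{t_{-i}}\beta_i[t_i](t_{-i})\big[2q(t_{-i})-\lVert q\rVert_2^2\big]=\lVert\beta_i[t_i]\rVert_2^2-\lVert\beta_i[t_i]-q\rVert_2^2,
\]
evaluated at $q=\beta_i[t_i]$. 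The truthful interim value is thus $\Lambda\lVert\beta_i[t_i]\rVert_2^2$, which is finite. Action feasibility guarantees that the recommended plans are executable, so this path is available to the agent.

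\emph{Step 2 (obedience after a truthful report).} Consider a deviation rule $d_i$. Wherever the deviated profile stays in $\supp\gamma_\theta(t)$, the agent's payoff is again the action-independent score, so it gains nothing. Wherever it leaves the support, the reward is $-\infty$. Suppose this happens on an event of positive probability under $\beta_i[t_i]\otimes\pi$; then the deviation's expected payoff is $-\infty$. Suppose instead it happens only on a null event; then the deviation changes nothing in expectation. Either way obedience holds, possibly with indifference, which is all that partial implementation requires.

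\emph{Step 3 (misreports, including double deviations).} Now let type $t_i$ report $\hat t_i\in T_i^0$ with $\hat t_i\neq t_i$; reports outside $T_i^0$ get $-\infty$. On the support of $\gamma_\theta(\hat t_i,t_{-i})$ its payoff is
\[
\Lambda\text{-score}(\hat t_i)+u_i[t_i](\ba',\theta)-u_i[\hat t_i](\ba',\theta).
\]
Off that support the payoff is $-\infty$, whatever subsequent deviation $d_i$ the agent uses. So the best double deviation yields at most
\[
\Lambda\big(\lVert\beta_i[t_i]\rVert_2^2-\lVert\beta_i[t_i]-\beta_i[\hat t_i]\rVert_2^2\big)+M\le \Lambda\lVert\beta_i[t_i]\rVert_2^2-\Lambda\delta+M.
\]
This is strictly below the truthful value once $\Lambda>M/\delta$, and a fortiori under the choice $\Lambda>2M/\delta$ made above. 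Combined with Step 2, truthful reporting followed by obedience is an equilibrium. On the support of $\Gamma$ the induced outcome is the one generated by $\gamma$, and \cref{prop:ma-revelation} then gives implementation.

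\emph{Main obstacle.} I expect the delicate step to be the $-\infty$ bookkeeping in Steps 2 and 3. The implemented support $\supp\gamma_\theta(\hat t)$ depends on the state and on co-players' signals, while the agent evaluates events under its own belief $\beta_i[t_i]$ and not under $\Gamma$. The argument has to make sure that prohibited actions are penalized only on events the agent itself regards as possible, so that obedience is not undone by the agent exploiting a designer-null event. It also has to ensure that the truthful path never incurs $-\infty$ with positive $\beta_i$-probability. I would handle this using the absolute-continuity assumption together with belief closure, and, if necessary, by defining the support of $\gamma_\theta$ relative to the agent's joint law of states and signals.
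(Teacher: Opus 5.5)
Your proposal is correct and follows the paper's proof essentially step for step. The truthful payoff reduces to the action-independent quadratic score, off-target actions and off-support reports receive $-\infty$, and a supported misreport loses $\Lambda\lVert\beta_i[t_i]-\beta_i[\hat t_i]\rVert_2^2\geq\Lambda\delta$ in expected score while gaining at most $M$, so $\Lambda\geq M/\delta$ suffices and your factor of $2$ is unnecessary. The bookkeeping you flag as the main obstacle is not actually delicate: $\gamma_\theta(t)$ is computed with the same kernel $\pi$ the agent uses, so the truthful path is finite almost surely under $\beta_i[t_i]$; and after a truthful report the on-support payoff does not depend on the action, so no appeal to absolute continuity is needed.
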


\begin{proof}[Proof of \cref{prop:full-discipline}]
For a type $t_i \in T^0_i$ in the support of $\Gamma$, truthful reporting and obedience give a finite payoff: the utility terms cancel and the quadratic score is finite. Every action outside the target support gives $-\infty$, so obedience is optimal. Every report outside the support of $\Gamma$ also gives $-\infty$ after every action and is therefore strictly worse. 

It remains to compare misreports that are supported on $\Gamma$. If type $t_i$ instead reports a distinct $\hat t_i\in T_i^0$, the quadratic-score identity in \cref{eq:peerreward} gives the expected score loss
\[
\Lambda
\left\lVert
\beta_i[t_i]-\beta_i[\hat t_i]
\right\rVert_2^2.
\]
If no agent $i$ has two supported reports we are done. Otherwise, finiteness and \eqref{eq:coplayer-separation} imply
\[
\delta
:=
\min_{\substack{i,\,t_i,\hat t_i\in T_i^0\\
\hat t_i\neq t_i}}
\left\lVert
\beta_i[t_i]-\beta_i[\hat t_i]
\right\rVert_2^2
>0.
\]
Because $A$, $\Theta$, and the supported report sets are finite and utilities are real-valued,
\[
M
:=
\max_{\substack{i,\,t_i,\hat t_i\in T_i^0,\ \hat t_i\neq t_i\\
\hat t_{-i}\in T_{-i}^0,\ \theta\in\Theta\\
\ba\in\supp\gamma_\theta(\hat t)}}
\left|
u_i[t_i](\ba,\theta)-u_i[\hat t_i](\ba,\theta)
\right|
<\infty.
\]
Any supported false report that yields a finite payoff lowers the expected score by at least $\Lambda\delta$ and raises the remaining utility term by at most $M$. Its payoff gain is therefore at most $M-\Lambda\delta$. Any $\Lambda\geq M/\delta$ makes truth-telling optimal for every supported type. Honest and obedient play then induces $\gamma$ under $\Gamma$.
\end{proof}

\paragraph{A binary state.} We now illustrate \cref{prop:full-discipline} with a simple two agents and two state example. $\mathcal{N} = \{1,2\}$ and $\theta\in\{0,1\}$ with $\Pr(\theta=1)=1/2$. If $\theta=1$, both agents have type $H$. If $\theta=0$, the type profile is $(H,L)$ or $(L,H)$ with equal probability. The types are the agents' private signals i.e., there is no further interim learning. Their joint distribution is
\[
\begin{array}{c|cc}
&t_2=H&t_2=L\\ \hline
t_1=H&1/2&1/4\\
t_1=L&1/4&0.
\end{array}
\]
Each agent chooses $a_i\in\{0,1\}$. The human wants the agents to agree on the state, while each agent prefers the high action. Hence, payoffs are 
\[
v(\ba,\theta)=\mathbbm 1\{a_1=a_2=\theta\} \quad \text{and} \quad 
u_i(\ba,\theta)=a_i.
\]
The human therefore wants both agents to take action $1$ if and only if $t = (H,H)$ and this gives the human expected payoff equal to one. Write $a^*(\hat t)=1$ if $\hat t=(H,H)$ and $a^*(\hat t)=0$ otherwise. %Ignoring both reports gives value $1/2$, while using one report gives value $3/4$. Thus the second report has strict value.

Now observe that types $H$ and $L$ have the following belief about their co-players' type: 
\[
\big(\beta_i[H](H),\beta_i[H](L)\big)=\left(\frac23,\frac13\right)
\quad \text{and} \quad 
\big(\beta_i[L](H),\beta_i[L](L)\big)=(1,0).
\]
The unscaled quadratic score in \eqref{eq:peerreward} is therefore
\[
\begin{array}{c|cc}
q(\hat t_i,\hat t_j)&\hat t_j=H&\hat t_j=L\\ \hline
\hat t_i=H&7/9&1/9\\
\hat t_i=L&1&-1.
\end{array}
\]
For $j\neq i$ and any $\Lambda>0$, set
\[
r_i(\hat t,\ba,\theta)
=
\begin{cases}
\Lambda q(\hat t_i,\hat t_j)-a_i,
&\ba=(a^*(\hat t),a^*(\hat t)),\\
-\infty,&\text{otherwise}.
\end{cases}
\]
The term $-a_i$ cancels the agent's preference for action $1$, and the off-target penalty enforces the common action. An $H$-type obtains expected score $5/9$ from reporting $H$ and $1/3$ from reporting $L$. An $L$-type obtains score $1$ from reporting $L$ and $7/9$ from reporting $H$. Truth raises either type's score by $2/9$, so the schedule implements the human's favorite rule to get a maximum payoff of $1$. 

Because we work on the universal type space, the first-order beliefs $\beta_i[t_i]$ over co-player reports might contain higher-order beliefs in the spirit of \citep{miller2005eliciting,prelec2004bayesian}. This procedure is in the spirit of work by \citet{qiu2026truthfulness} who show both theoretically and empirically that using scoring rules can elicit truthful reporting among language models under a common prior. They show that peer-score-ranked answer pairs for direct preference optimization can recover most of the accuracy lost to deceptive fine-tuning. Importantly, they also find that peer scoring performs better when the participant models are larger relative to the expert model.% while direct LLM-as-a-judge scores deteriorate in this regime.

Our broad conceptual point is quite similar, though it is worth highlighting some contrasts. First, our designer (by the construction of types) knows type $t_i$'s beliefs $\beta_i[t_i] \in \Delta(\Theta \times \mathbb{T}_{-i})$ and our procedure allows for non-common priors.\footnote{We also focus on disciplining actions (what the designer values) rather than eliciting truthful information though this is simple in the presence of unbounded rewards.} By contrast, \citet{qiu2026truthfulness} do not give the designer this conditional-belief map. They instead use an expert to supply \(\Pr(A_j\mid A_i)\)---the expert predicted probability of player $j$'s answer given player $i$'s answer---that is, in turn, used to score $i$'s answer. Second, our procedure also uses peer prediction to elicit agents' capabilities and preferences---not just information about the fundamental state $\theta \in \Theta$. 
\end{romanexample}

\begin{romanexample}{Competition through coupled rewards}\label{sec:multi-delegation}
We now show how competition can make every action close to the human's ideal action, even when the human does not know the agents' biases.

\textbf{Setup.} A human faces two agents. Nature draws $(b_1,b_2)\sim\Gamma$, and agent $i$ observes its own bias $b_i$. The human observes neither bias. It is common knowledge that the biases lie on a curve:
\begin{equation*}
\supp\Gamma\subseteq
\mathcal C_f:=\{(b_1,b_2)\in\Re^2:b_2=f(b_1)\},
\tag{Curve}\label{cond:bias-curve}
\end{equation*}
where $f:\Re\to\Re$ is continuous and strictly decreasing, with $f(0)=0$. Thus the agents have countervailing biases away from the origin. Because $f$ is one-to-one, each agent can infer the other agent's bias from its own. Assume also that $\Ex_\Gamma[(b_2-b_1)^2]<\infty$.

Both agents observe the state $\theta$ and simultaneously choose actions $a_1,a_2\in\Re$. The human and agent $i$ have payoffs
\[
v(\ba,\theta)=-\sum_{i=1}^2(a_i-\theta)^2
\quad\text{and}\quad
u_i(a_i,\theta)=-(a_i-\theta-b_i)^2.
\]
Thus every action enters the human's payoff, while each agent cares about its own action.

\Cref{fig:countervailing-bias} shows the nonlinear bias curve and the best-responses without the mechanism. Since agent $i$'s payoff is independent of $j$'s action, the best response curve is flat/vertical. 

\begin{figure}[H]
\centering
\begin{tikzpicture}[
    font=\small,
    axis/.style={-{Stealth[length=2.1mm]}, thick, black!65},
    curve/.style={violet!75!black, very thick},
    gapline/.style={black!55, thick, dashed},
    agentone/.style={red!75!black, very thick},
    agenttwo/.style={blue!75!black, very thick},
    lossline/.style={black!30, thin},
    guide/.style={black!35, densely dotted}
]
% Panel (a): the action difference reveals the hidden bias pair
\begin{scope}[xshift=-3.45cm,x=1.20cm,y=1.20cm]
\node[font=\small\bfseries] at (0,2.25) {(a) The action difference reveals bias};
\draw[axis] (-2.20,0) -- (2.20,0) node[below left=1pt] {$b_1$};
\draw[axis] (0,-1.90) -- (0,1.85) node[below left=1pt] {$b_2$};
\draw[curve, domain=-2.8:2.8, samples=120, variable=\s]
    plot ({-0.5*\s-0.3*\s*\s*\s/(1+\s*\s)},
          { 0.5*\s-0.3*\s*\s*\s/(1+\s*\s)});
\node[text=violet!75!black, fill=white, inner sep=1.5pt]
    at (1.30,-0.96) {$b_2=f(b_1)$};
\draw[gapline] (-1.85,-0.85) -- (0.85,1.85);
\fill[black!80] (-0.65,0.35) circle (2.5pt);
\node[font=\scriptsize, fill=white, inner sep=1pt]
    at (-1.18,0.20) {$(b_1,b_2)$};
\node[rotate=45, anchor=south, font=\scriptsize, fill=white, inner sep=1pt]
    at (-0.13,0.88) {$s^*=b_2-b_1$};
\node[below right=1pt, font=\scriptsize] at (0,0) {$0$};
\end{scope}

% Panel (b): best responses without coupled rewards
\begin{scope}[xshift=3.45cm,x=1.20cm,y=1.20cm]
\node[font=\small\bfseries] at (0,2.25) {(b) Without coupled rewards};
\draw[axis] (-2.20,0) -- (2.20,0) node[below left=1pt] {$a_1-\theta$};
\draw[axis] (0,-1.90) -- (0,1.85) node[below left=1pt] {$a_2-\theta$};
\draw[lossline] (0,0) circle[radius=0.738];
\draw[agentone] (-0.65,-1.65) -- (-0.65,1.65);
\draw[agenttwo] (-1.85,0.35) -- (1.85,0.35);
\node[text=red!75!black, font=\scriptsize\bfseries, fill=white, inner sep=1pt]
    at (-0.46,1.30) {$BR_1$};
\node[text=blue!75!black, font=\scriptsize\bfseries, fill=white, inner sep=1pt]
    at (1.42,0.53) {$BR_2$};
\fill[violet!85!black] (-0.65,0.35) circle (2.8pt);
\node[font=\scriptsize, align=left, fill=white, inner sep=1pt]
    at (-1.22,0.74) {unregulated\\equilibrium};
\fill[black!80] (0,0) circle (1.8pt);
\node[below right=2pt, font=\scriptsize] at (0,0) {first best};
\node[font=\scriptsize] at (0,-2.18)
    {$b_1=-0.65,\ b_2=0.35$};
\end{scope}
\end{tikzpicture}
\caption{Biases without coupled rewards}%. Panel (a) uses the nonlinear curve $b_1=g(s)$ and $b_2=g(s)+s$, where $g(s)=-s/2-3s^3/[10(1+s^2)]$. The diagonal line selects the stated bias pair.}
\label{fig:countervailing-bias}
\end{figure}

\textbf{Coupled rewards.} We now introduce dependence in the reward schedule such that player $i$'s incentives are shaped by $j$'s actions.  Write $D:=a_2-a_1$ for the difference between the actions. The map $h(x):=f(x)-x$ is continuous, strictly decreasing, and onto $\Re$; let $g:=h^{-1}$ denote its inverse. For any $\eta>0$, give the agents the rewards
\begin{equation*}
\begin{split}
r_1(D)
&=2\eta\int_0^{D/\eta}g(s)\,\de s+\frac{D^2}{2},\\
r_2(D)
&=-2\eta\int_0^{D/\eta}\big(g(s)+s\big)\,\de s+\frac{D^2}{2}.
\end{split}
\tag{CR}\label{eq:curve-rewards}
\end{equation*}
These schedules depend only on the known curve. The parameter $\eta$ controls the size of the action difference that reveals the realized bias pair. Agent $i$ receives $u_i(a_i,\theta)+r_i(D)$.

\begin{proposition}[Coupled rewards make loss arbitrarily small]\label{prop:bias-curve}
If condition~\eqref{cond:bias-curve} holds, then, for every $\eta>0$, the rewards in \eqref{eq:curve-rewards} induce a unique Nash equilibrium for every realized bias pair and state. The equilibrium actions are 
\[
a_1^*=\theta-\frac{\eta}{2}(b_2-b_1)
\quad \text{and} \quad 
a_2^*=\theta+\frac{\eta}{2}(b_2-b_1).
\]
The loss is $
L^*=\frac{\eta^2}{2}(b_2-b_1)^2$ hence  $\Ex_\Gamma[L^*]\to0$ as $\eta\downarrow0$. 
\end{proposition}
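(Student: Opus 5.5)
The plan is to reparametrize each agent's choice by the action difference $D=a_2-a_1$, holding the opponent's action fixed. In this variable each agent's objective turns out to be strictly concave and coercive. Best responses are then unique and characterized by first-order conditions, and the equilibrium is the unique solution of the resulting two-equation system.

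\emph{Step 1 (agent 1).} Fix $a_2$ and write $a_1=a_2-D$. Agent 1's payoff becomes
\[
-(a_2-D-\theta-b_1)^2+r_1(D).
\]
By \eqref{eq:curve-rewards}, $r_1'(D)=2g(D/\eta)+D$. Hence the derivative in $D$ is
\[
2(a_2-\theta-b_1)-2D+2g(D/\eta)+D=2(a_2-\theta-b_1)-D+2g(D/\eta).
\]
This is strictly decreasing in $D$ because $g$ is decreasing, and it tends to $\mp\infty$ as $D\to\pm\infty$. So there is a unique best response, characterized by
\[
a_1-\theta-b_1=-g(D/\eta)-D/2.
\]

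\emph{Step 2 (agent 2).} Fix $a_1$ and write $a_2=a_1+D$. Here $r_2'(D)=-2\big(g(D/\eta)+D/\eta\big)+D$, so the derivative of agent 2's payoff is
\[
-2(a_1+D-\theta-b_2)-2\big(g(D/\eta)+D/\eta\big)+D.
\]
The term $-2g(D/\eta)$ on its own is increasing, so concavity is not immediate. The key observation is that $g(s)+s=f(g(s))$, which follows from $h(g(s))=s$. Since $f$ and $g$ are both decreasing, $f\circ g$ is increasing. The derivative is therefore
\[
-D-2f\big(g(D/\eta)\big)+\text{const},
\]
which is strictly decreasing and coercive. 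The unique best response satisfies
\[
a_2-\theta-b_2=-g(D/\eta)-D/\eta+D/2.
\]
I expect this step to be the main obstacle, since it is where the curve structure is needed to secure strict concavity.

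\emph{Step 3 (equilibrium and loss).} Subtracting the two first-order conditions gives
\[
D-(b_2-b_1)=D-D/\eta,
\]
so $D/\eta=b_2-b_1=:s^*$. Because $h(b_1)=f(b_1)-b_1=b_2-b_1$ on the curve \eqref{cond:bias-curve}, we have $g(s^*)=b_1$. Substituting back yields
\[
a_1^*=\theta-\tfrac{\eta}{2}s^*\quad\text{and}\quad a_2^*=\theta+\tfrac{\eta}{2}s^*.
\]
Since best responses are unique and the system has this unique solution, the Nash equilibrium is unique. The loss is
\[
2\big(\tfrac{\eta}{2}s^*\big)^2=\tfrac{\eta^2}{2}(b_2-b_1)^2,
\]
and the assumption $\Ex_\Gamma[(b_2-b_1)^2]<\infty$ gives $\Ex_\Gamma[L^*]=\tfrac{\eta^2}{2}\Ex_\Gamma[(b_2-b_1)^2]\to0$ as $\eta\downarrow0$.
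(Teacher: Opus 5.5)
Your proposal is correct and follows essentially the same route as the paper. Both arguments get strict concavity from $g$ being decreasing and from $g(s)+s=f(g(s))$ being increasing, derive the same first-order conditions, subtract them to obtain $D/\eta=b_2-b_1$, and use the curve to conclude $g(s^*)=b_1$. Your reparametrization in $D$ (an affine change of each agent's own action) and the coercivity check are cosmetic additions rather than a different approach.
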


\begin{proof}
We first verify concavity. We then show how the difference between the actions reveals the two biases. Since $g$ is strictly decreasing, the first integral in \eqref{eq:curve-rewards} is concave in $D$. Since $g(s)+s=f(g(s))$ is strictly increasing, the negative integral term in $r_2$ is concave in $D$. Holding the other action fixed, the intrinsic quadratic has second derivative $-2$, while $D^2/2$ has second derivative $1$. Each agent's payoff is therefore strictly concave in its own action, so its first-order condition is necessary and sufficient.

Let $s:=D/\eta$. Differentiating \eqref{eq:curve-rewards} gives
\[
r_1'(D)=2g(s)+D,
\qquad
r_2'(D)=-2\big(g(s)+s\big)+D.
\]
The two first-order conditions reduce to
\[
a_1-\theta=b_1-g(s)-\frac D2,
\qquad
a_2-\theta=b_2-g(s)-s+\frac D2.
\]
Because $D=(a_2-\theta)-(a_1-\theta)$, subtracting these equations gives $s=b_2-b_1$. Condition~\eqref{cond:bias-curve} then gives
\[
s=f(b_1)-b_1=h(b_1),
\qquad
g(s)=b_1.
\]
It follows that
\[
D^*=\eta(b_2-b_1),
\qquad
a_1^*-\theta=-\frac{D^*}{2},
\qquad
a_2^*-\theta=\frac{D^*}{2}.
\]
The first-order conditions have a unique solution, and strict concavity makes it the unique Nash equilibrium. Substitution gives the stated loss $L^*$ and we can take expectation and send $\eta$ to zero. 
\end{proof}

\textbf{Coupling induces competition.} Coupling rewards to penalize disagreement pushes the agents toward similar actions. But this, \emph{per se}, need not push them toward choosing $a_i=\theta$; they might simply choose the same fixed action. The schedule \eqref{eq:curve-rewards} works because the schedule simultaneously (i) reveals agents' biases; and (ii) correct agents' marginal incentives. To see this plainly, observe agents' first-order conditions are
\[
a_1-\theta=b_1-g(D/\eta)-\frac D2 \quad \text{and} \quad 
a_2-\theta=b_2-g(D/\eta)-\frac D\eta+\frac D2.
\]
Subtracting these conditions and using $D=a_2-a_1$ gives
\[
\frac D\eta=b_2-b_1
\]
in equilibrium. Because \(b_2=f(b_1)\) and \(g=[f(x)-x]^{-1}\), it follows that
\[
g(D/\eta)=b_1 \quad \text{and} \quad 
g(D/\eta)+D/\eta=b_2.
\]
Substituting these identities back into the first-order conditions cancels the realized biases and gives
\[
a_1-\theta=-\frac D2 \quad \text{and} \quad 
a_2-\theta=\frac D2.
\]
Thus the reward schedule uses the equilibrium disagreement to infer each bias and splits the remaining action errors symmetrically around $\theta$.

\Cref{fig:coupled-best-responses} plots these best responses for two values of $\eta$. Each red curve contains the action pairs that satisfy agent~1's first-order condition. Each blue curve does the same for agent~2. The two curves cross once, at the equilibrium in \cref{prop:bias-curve}. 
As $\eta$ is smaller, it preserves the relation $D^*/\eta=b_2-b_1$ but moves both actions towards $\theta$. 

\begin{figure}[H]
\centering
\begin{tikzpicture}[
    font=\small,
    axis/.style={-{Stealth[length=2.1mm]}, thick, black!65},
    agentone/.style={red!75!black, very thick},
    agenttwo/.style={blue!75!black, very thick},
    lossline/.style={black!30, thin},
    guide/.style={black!35, densely dotted}
]
% Panel (a): a larger eta
\begin{scope}[xshift=-3.45cm,x=1.20cm,y=1.20cm]
\node[font=\small\bfseries] at (0,2.25) {(a) Higher $\eta$: more disagreement};
\draw[axis] (-2.20,0) -- (2.20,0) node[below left=1pt] {$a_1-\theta$};
\draw[axis] (0,-1.90) -- (0,1.85) node[below left=1pt] {$a_2-\theta$};
\draw[lossline] (0,0) circle[radius=0.566];
\draw[guide] (-1.75,1.75) -- (1.75,-1.75);
\draw[agentone, domain=-0.9:2.1, samples=100, variable=\s]
    plot ({-0.65+0.10*\s+0.30*\s*\s*\s/(1+\s*\s)},
          {-0.65+0.90*\s+0.30*\s*\s*\s/(1+\s*\s)});
\draw[agenttwo, domain=-0.8:3.0, samples=100, variable=\s]
    plot ({ 0.35-0.90*\s+0.30*\s*\s*\s/(1+\s*\s)},
          { 0.35-0.10*\s+0.30*\s*\s*\s/(1+\s*\s)});
\node[text=red!75!black, font=\scriptsize\bfseries, fill=white, inner sep=1pt]
    at (0.20,1.36) {$BR_1$};
\node[text=blue!75!black, font=\scriptsize\bfseries, fill=white, inner sep=1pt]
    at (-1.36,0.55) {$BR_2$};
\fill[violet!85!black] (-0.40,0.40) circle (2.8pt);
\node[font=\scriptsize, align=left, fill=white, inner sep=1pt]
    at (-1.06,0.82) {equilibrium};
\fill[black!80] (0,0) circle (1.8pt);
\node[below right=2pt, font=\scriptsize] at (0,0) {first best};
\node[font=\scriptsize] at (0,-2.18) {$\eta_H=0.8$};
\end{scope}

% Panel (b): a smaller eta
\begin{scope}[xshift=3.45cm,x=1.20cm,y=1.20cm]
\node[font=\small\bfseries] at (0,2.25) {(b) Lower $\eta$: less disagreement};
\draw[axis] (-2.20,0) -- (2.20,0) node[below left=1pt] {$a_1-\theta$};
\draw[axis] (0,-1.90) -- (0,1.85) node[below left=1pt] {$a_2-\theta$};
\draw[lossline] (0,0) circle[radius=0.141];
\draw[guide] (-1.75,1.75) -- (1.75,-1.75);
\draw[agentone, domain=-0.8:2.4, samples=100, variable=\s]
    plot ({-0.65+0.40*\s+0.30*\s*\s*\s/(1+\s*\s)},
          {-0.65+0.60*\s+0.30*\s*\s*\s/(1+\s*\s)});
\draw[agenttwo, domain=-0.9:3.5, samples=100, variable=\s]
    plot ({ 0.35-0.60*\s+0.30*\s*\s*\s/(1+\s*\s)},
          { 0.35-0.40*\s+0.30*\s*\s*\s/(1+\s*\s)});
\node[text=red!75!black, font=\scriptsize\bfseries, fill=white, inner sep=1pt]
    at (0.82,1.16) {$BR_1$};
\node[text=blue!75!black, font=\scriptsize\bfseries, fill=white, inner sep=1pt]
    at (-1.19,-0.10) {$BR_2$};
\fill[violet!85!black] (-0.10,0.10) circle (2.8pt);
\node[font=\scriptsize, align=left, fill=white, inner sep=1pt]
    at (-0.72,0.48) {equilibrium};
\fill[black!80] (0,0) circle (1.8pt);
\node[below right=2pt, font=\scriptsize] at (0,0) {first best};
\node[font=\scriptsize] at (0,-2.18) {$\eta_L=0.2$};
\end{scope}
\end{tikzpicture}
\caption{Coupled rewards to steer best responses}
\label{fig:coupled-best-responses}
\end{figure}

We emphasize that although the reward schedule can be thought of as eliciting the bias before correcting it, this procedure is fully simultaneous. The reason is that a schedule \eqref{eq:curve-rewards} with any $\eta > 0$ induces an equilibrium outcome that reveals the bias difference since $(a^*_2 - a^*_1) / \eta = b_2 - b_1$. Moreover, because of our assumption that biases lie on a lower-dimensional manifold $\mathcal{C}_f$, this can exactly identify $b_1$ and $b_2$ which can then enter agents' marginal rewards to correct their incentives. Our construction is in the spirit of \citet{koh2025prices} who exploit a homeomorphism between a random network of externalities and the corresponding efficient allocation to construct nonlinear taxes that can implement the first best outcome realization-by-realization. Here, biases are supported on the curve $f$ which makes the action difference reveal the bias pair. But that difference is costly because both actions enter human welfare. Our mechanism thus uses the curvature of reward (taking $\eta$ small) to approach the first best. Of course, biases might, in practice, be supported on the two-dimensional plane so the reward schedule can no longer be designed to achieve the first-best outcome. We think an interesting open problem is how reward should be optimally coupled then---and more broadly whether such principles might scale to more complicated environments. 

Our result is related in spirit to the communication game of \citet{collina2025emergent} who show how competition among biased agents can induce truthtelling in a persuasion game (see also \cite{battaglini2002multiple} for an early important result). Here, the point is to shape agents' actions and the point is how rewards can be coupled to discipline behavior. 
\end{romanexample}

\begin{romanexample}{Scalable oversight and reward shaping}\label{sec:weak-strong} Finally, we show how our framework can shed light on the increasingly common practice of using a weak but trusted or interpretable monitor to control or reward a strong but untrusted or hard-to-interpret actor.

\textbf{Setup.} The human faces a strong actor $S$ and a weak monitor $W$. The state is $\theta\sim U[0,1]$, and the strong actor's bias is the random variable $b\sim\tau$, where $\Ex[b^2]<\infty$. The state and bias of the strong actor are independent. The monitor $W$ has a known bias $w\in\Re$.

Before the state is realized, the weak monitor $W$ privately observes the bias $b$ of the strong actor $S$. This might be because the weak monitor, despite its inability to do what the strong actor does, might be able to monitor the chain of thought and infer misaligned behavior. The strong actor knows its bias and later observes the state $\theta$ perfectly in the interim stage. The implemented action lies in $\Re$. The players' payoffs are:
\begin{itemize}
    \item \textbf{Human:} $-(a-\theta)^2$.
    \item \textbf{Strong actor $S$:} $-(a-\theta-b)^2$.
    \item \textbf{Weak monitor $W$:} $-(a-\theta-w)^2$.
\end{itemize}

Before the bias is realized, the human chooses one of three oversight regimes. After privately observing $b$, the weak monitor chooses an instrument from that regime. The strong actor then observes the instrument and the state $\theta$ and acts.

Some notation is in order to formalize the set of instruments the monitor $W$ can choose. Call a state-independent reward $r:\Re\to\Re\cup\{-\infty\}$ \emph{admissible} if
\[
\argmax_{a\in\Re}\left\{-(a-\theta-b)^2+r(a)\right\}
\]
is nonempty for every $(\theta,b)\in[0,1]\times\supp(\tau)$.\footnote{We use an arbitrary but fixed tiebreaking rule. The instruments constructed below induce a unique action almost surely.} Let $\mathcal{R}$ denote the set of admissible rewards. A special case of rewards is when actions are either allowed or not. A nonempty closed set $D \subseteq \mathbb{R}$ can be thought of as a permission set that simply corresponds to 
\[
r(a) = \begin{cases}
0 \quad &\text{if $a \in D$} \\ 
-\infty &\text{otherwise.}
\end{cases}
\]
Under reward design, the weak monitor's strategy maps the strong actor's bias $b$ into a reward schedule in $\mathcal{R}$.

\begin{itemize}
    \item \emph{Regime I: binary approval.} The weak monitor chooses between approval and rejection. Approval gives the strong actor the permission set $D=\Re$. Rejection gives it the singleton permission set $D=\{1/2\}$ and therefore implements the human's optimal fixed action.
    \item \emph{Regime II: permission-set delegation.} The weak monitor chooses any nonempty closed permission set $D\subseteq\Re$. The strong actor then chooses its preferred action in $D$.
    \item \emph{Regime III: reward design.} The weak monitor chooses any admissible state-independent reward $r$. The strong actor then maximizes $-(a-\theta-b)^2+r(a)$ over $a\in\Re$. Rewards do not enter human welfare.
\end{itemize}

\begin{figure}[H]
\centering
\begin{tikzpicture}[
    font=\small,
    stage/.style={draw=black, rounded corners=2pt, align=center,
        minimum width=4.8cm, minimum height=1.05cm},
    instrument/.style={draw=black!55, fill=black!2, rounded corners=2pt,
        align=center, font=\footnotesize, text width=3.4cm, minimum height=1.15cm},
    outcome/.style={draw=black, rounded corners=2pt, align=center,
        font=\footnotesize, minimum width=3.5cm, minimum height=0.7cm},
    move/.style={-{Stealth[length=2.3mm]}, thick, black,
        shorten <=2pt, shorten >=2pt},
    lab/.style={font=\scriptsize, align=center}
]
\node[stage] (H) at (0,3.4) {\textbf{human}\\ chooses an oversight regime};
\node[stage, draw=blue!60!black] (W) at (0,1.8)
    {\textbf{weak monitor} $W$\\ privately observes $b$, not $\theta$};
\node[instrument] (I) at (-4.15,-0.35)
    {\textbf{I. Binary approval}\\$D=\Re$ or $D=\{1/2\}$\\use $S$ or the safe action};
\node[instrument] (II) at (0,-0.35)
    {\textbf{II. Delegation}\\nonempty closed $D\subseteq\Re$\\restrict feasible actions};
\node[instrument] (III) at (4.15,-0.35)
    {\textbf{III. Reward design}\\state-independent $r(a)$\\reshape incentives};
\node[stage, draw=red!65!black, minimum width=5.4cm] (S) at (0,-2.55)
    {\textbf{strong actor} $S$\\ observes $(\theta,b)$ and the chosen instrument; acts};
\node[outcome] (A) at (0,-3.95) {implemented action $a$};
\draw[move] (H.south) -- node[right=4pt,lab] {delegates an instrument class} (W.north);
\coordinate (fork) at (0,0.9);
\draw[thick] (W.south) -- (fork);
\draw[thick] (-4.15,0.9) -- (4.15,0.9);
\draw[move] (-4.15,0.9) -- (I.north);
\draw[move] (0,0.9) -- (II.north);
\draw[move] (4.15,0.9) -- (III.north);
\draw[move] (I.south) -- ($(S.north)+(-1.55,0)$);
\draw[move] (II.south) -- (S.north);
\draw[move] (III.south) -- ($(S.north)+(1.55,0)$);
\draw[move] (S.south) -- (A.north);
\end{tikzpicture}
\caption{Illustration of oversight game}
\label{fig:oversight-game}
\end{figure}

\cref{fig:oversight-game} illustrates the game. Notice that regimes are nested. Regime I restricts the monitor to two permission sets, so it is a special case of Regime II. Since delegation is a special case of rewards in which the image is $\{0,-\infty\}$, 
Regime II is therefore a special case of Regime III.
\paragraph{Regime I: binary approval.} Approval makes $S$ choose $a=\theta+b$ and gives $W$ loss $(b-w)^2$. Rejection gives $W$ loss $1/12+w^2$. Thus, breaking its ties in favor of approval, the monitor approves if and only if
\[
(b-w)^2\leq \frac1{12}+w^2.
\]
The monitor uses its diagnosis only to decide whether to use the strong actor $S$ but, if it does, cannot shape how $S$ acts. This is broadly analogous to trusted-monitor control, where a monitor decides whether to use an untrusted model's output or defer to a trusted fallback \citep{greenblatt2024control,xiao2026bootstrapped}.

\paragraph{Regime II: delegation.} The monitor now controls the feasible actions. One optimal representative is
\[
D^{II}(b)=
\begin{cases}
(-\infty,1+2w-b], & 0<b-w<1/2,\\
\Re, & b=w,\\
[2w-b,\infty), & -1/2<b-w<0,\\
\{w+1/2\}, & |b-w|\geq1/2.
\end{cases}
\]
When $S$ wants a larger action than $W$, the monitor imposes a cap. When $S$ wants a smaller action, it imposes a floor. Once relative bias reaches $1/2$, the monitor removes all state discretion and implements its own optimal fixed action $w+1/2$. Fine-grained privilege-control systems use the same logic when they restrict permissible tool calls and specify fallbacks for blocked calls \citep{shi2025progent}.

\paragraph{Regime III: reward design.} The affine schedule
\[
r_b^{III}(a)=-2(b-w)a
\]
makes the strong actor choose $a=\theta+w$. This action is first best for $W$, so no other reward can improve its payoff. This regime is in the spirit of scalable reward modeling: a weaker system helps shape the reward signal that trains or steers a stronger system when direct human evaluation is hard \citep{leike2018scalable,baker2025monitoring}. Variations of these ideas have been recently used. For instance, \Citet{kenton2026debate} use second model critiques the acting model before an LLM judge assigns the training reward and find that this design reduces reward hacking. Our regime is in this spirit, and asks how how AI-provided feedback changes the incentives of a stronger model.

\Needspace{0.44\textheight}
\begin{proposition}[Losses across oversight regimes]\label{prop:weak-strong}
Conditional on the realized bias $b$, the monitor's losses are
\[
\begin{aligned}
L_W^I(b)
&=\min\left\{(b-w)^2,\frac1{12}+w^2\right\},\\[3pt]
L_W^{II}(b)
&=
\begin{cases}
|b-w|^2-\dfrac43|b-w|^3, & |b-w|<1/2,\\[3pt]
\dfrac1{12}, & |b-w|\geq1/2,
\end{cases}\\[7pt]
L_W^{III}(b)
&=0.
\end{aligned}
\]
The human's losses are
\[
\begin{aligned}
L_H^I(b)
&=
\begin{cases}
b^2, & (b-w)^2\leq\dfrac1{12}+w^2,\\[3pt]
\dfrac1{12}, & (b-w)^2>\dfrac1{12}+w^2,
\end{cases}\\[5pt]
L_H^{II}(b)
&=L_W^{II}(b)+2w(b-w)(1-2|b-w|)_++w^2,\\[3pt]
L_H^{III}(b)
&=w^2,
\end{aligned}
\]
where $x_+:=\max\{x,0\}$.
Ex-ante losses are the expectations of these expressions over $b\sim\tau$.
\end{proposition}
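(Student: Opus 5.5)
The plan is to handle the three regimes separately, computing first the monitor's loss conditional on $b$ and then the human's loss. Regime~II is the only one that needs real work. Throughout I write $d:=b-w$ for the relative bias and shift actions by $w$, setting $a'=a-w$. In shifted coordinates the monitor's ideal action is $\theta$ and the strong actor's ideal action is $\theta+d$. The monitor's problem is therefore a classical single-agent delegation problem with a uniform state on $[0,1]$ and a known constant bias $d$.

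\emph{Regimes I and III.} These are direct. In Regime~I approval yields $a=\theta+b$, so the monitor's loss is $(b-w)^2$ and the human's loss is $b^2$. Rejection yields $a=1/2$, so the monitor's loss is $\Ex(1/2-\theta-w)^2=1/12+w^2$ and the human's loss is $1/12$. Taking the minimum, with ties broken toward approval, gives $L_W^I$ and $L_H^I$. In Regime~III the strong actor's first-order condition under $r_b^{III}$ is $-2(a-\theta-b)-2(b-w)=0$, so $a=\theta+w$. This is strictly concave, so the action is unique, and it gives the monitor zero loss, which is the minimum possible. The human's loss is then $w^2$.

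\emph{Regime II, value of the proposed set.} I first verify that $D^{II}(b)$ attains the claimed loss. Take $0<d<1/2$ and a cap at $1-d$ in shifted coordinates. For $\theta\le 1-2d$ the actor plays $\theta+d$, with loss $d^2$. For $\theta>1-2d$ the cap binds, with loss $\int_{1-2d}^1(1-d-\theta)^2\de\theta=\tfrac23d^3$. The total is $d^2-\tfrac43d^3$. The case $d<0$ is the mirror image, using a floor at $-|d|$ in shifted coordinates, i.e.\ $[2w-b,\infty)$ in original units. For $|d|\ge1/2$ the singleton $\{w+1/2\}$ gives loss $1/12$.

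\emph{Regime II, optimality over all closed sets.} This is the main obstacle. I would argue it in the style of Holmstr\"om and \citet{alonso2008optimal}. Any closed $D$ induces a weakly increasing action rule $a'(\theta)$, namely the actor's closest point in $D$ to $\theta+d$. First I would show that a gap in $D$ inside the range of played actions can be filled, or replaced by the single midpoint-type action, without raising the monitor's loss. Because the state density is uniform and the bias is constant, the monitor's loss conditional on a pooling interval is minimized by matching the conditional mean. On that basis I would argue that the optimal set is an interval, with a floor irrelevant when $d>0$. Among caps $c$ the monitor's loss is strictly convex, and its first-order condition $\int_{c-d}^1(c-\theta)\,\de\theta=0$ gives $c=1-d$ whenever $c-d\ge 0$, i.e.\ $d<1/2$. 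For $d\ge1/2$ the first-order condition would force $c-d<0$, so full pooling at the mean state $1/2$ is optimal. If the gap-filling step becomes messy, I would instead cite the uniform--quadratic case of \citet{alonso2008optimal}, which covers exactly this setting.

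\emph{Human's loss.} I use the decomposition
\[
\Ex(a-\theta)^2=\Ex(a-\theta-w)^2+2w\,\Ex(a-\theta-w)+w^2 .
\]
The first term is $L_W^{II}$. For $0<d<1/2$,
\[
\Ex(a-\theta-w)=d(1-2d)+\int_{1-2d}^1(1-d-\theta)\,\de\theta = d(1-2d),
\]
since the integral is symmetric and vanishes. The case $d<0$ gives $d(1-2|d|)$ by symmetry, and the pooling case gives mean zero. These combine to $2w(b-w)(1-2|b-w|)_+$. Because optimal sets need not be unique, I would state this for the given representative $D^{II}$. Finally, taking expectations over $b\sim\tau$ gives the ex-ante losses, using $\Ex[b^2]<\infty$ to guarantee finiteness.
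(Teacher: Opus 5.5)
Your treatment of Regimes I and III, your direct evaluation of $D^{II}(b)$, and your decomposition of the human's loss through $\Ex[a-\theta-w\mid b]=d(1-2|d|)_+$ all match the paper's proof. There is one slip: for $d<0$ the floor in shifted coordinates sits at $|d|=w-b$, not $-|d|$, although your original-units set $[2w-b,\infty)$ is correct.

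Where you genuinely differ is in showing that no closed set beats $D^{II}(b)$. The paper does not fill gaps. It reuses the single-type reduction from the proof of \cref{prop:cap}:
\begin{itemize}
\item for any closed $D$, take the cap $K_c$ that gives the actor the same expected loss;
\item use the distance inequality $(*)$, which is tight for caps, to get $A_D'\leq A_{K_c}'$ at the matched relative bias;
\item conclude from the identity $H=A-\delta A'+\delta^2$ that the cap weakly lowers the monitor's loss whenever $\delta\geq0$;
\item minimize over caps, and reflect to handle $\delta<0$.
\end{itemize}
That argument is self-contained and covers every closed set at once, wherever its gaps lie. Your route instead imports the classical uniform--quadratic constant-bias delegation result (Holmstr\"om, Melumad--Shibano, Alonso--Matouschek). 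That is legitimate and ties the regime to the standard literature. But on your route the citation, not your sketch, carries the proof.

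The reason is that your gap-filling lemma is false as stated. Filling a gap $(a_1,a_2)$ does save exactly $\tfrac23h^3$, with $h$ the half-gap, when the indifference region $[a_1-d,a_2-d]$ lies inside $[0,1]$. A gap with $a_2-d>1$, however, acts as a partial cap, and filling it can hurt. Take $d=0.4$ and $D=(-\infty,0.6]\cup[a_2,\infty)$ with $a_2$ slightly below $2.2$. Both endpoints are played with positive probability, and the loss is about $0.075$, essentially that of the optimal cap $1-d$. Filling the gap yields full discretion and loss $d^2=0.16$. A self-contained version must compare such boundary gaps with a cap, which is exactly what the paper's loss-matching argument does.

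Relatedly, the monitor's loss over caps is not strictly convex: it is flat for $c\geq1+d$. What you need, and what holds, is that its derivative ($2c-1$ for $c\leq d$, and $d^2-(1-c)^2$ for $d<c<1+d$) is negative below the claimed cap and nonnegative above it.
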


\begin{proof}[Proof of \cref{prop:weak-strong}]
For this proof, write $\delta:=b-w$. We solve each regime. In Regime I, approval yields the monitor error $\delta$, while rejection yields the error $1/2-\theta-w$. Their squared expectations are $\delta^2$ and $1/12+w^2$. The human losses from the same choices are $b^2$ and $1/12$. This proves the Regime-I formulas under the stated tie-break.

For Regime II, translate actions by the monitor's bias: set $y=a-w$. The monitor wants $y=\theta$, while the strong actor wants $y=\theta+\delta$. The one-type interval reduction used in the proof of \cref{prop:cap} applies pointwise. If $\delta>0$, extending an interval's lower end cannot hurt the monitor, so we can optimize an upper cap $y\leq c$. Its loss has derivative
\[
\frac{\de L(c)}{\de c}
=
\begin{cases}
2c-1, & c\leq\delta,\\
\delta^2-(1-c)^2, & \delta<c<1+\delta,\\
0, & c\geq1+\delta.
\end{cases}
\]
Thus $c=1-\delta$ is optimal when $0<\delta<1/2$, while $c=1/2$ is optimal when $\delta\geq1/2$. Reflecting both the state and action gives the floor for $\delta<0$. Translating back by $w$ gives the displayed permission sets. Direct integration gives
\[
L_W^{II}(b)
=
\begin{cases}
|\delta|^2-\dfrac43|\delta|^3, & |\delta|<1/2,\\[3pt]
\dfrac1{12}, & |\delta|\geq1/2,
\end{cases}
\qquad
\Ex[a-\theta-w\mid b]=\delta(1-2|\delta|)_+.
\]
Since $a-\theta=(a-\theta-w)+w$, expanding the square gives
\[
L_H^{II}(b)
=L_W^{II}(b)+2w\delta(1-2|\delta|)_++w^2.
\]

In Regime III, strict concavity under $r_b^{III}$ gives the unique action $a=\theta+w$. The monitor obtains zero loss and the human obtains loss $w^2$.
\end{proof}

\cref{prop:weak-strong} compares payoffs for the human and weak monitor across three regimes. Since the instrument classes are nested and the weak monitor's payoff depends only on the action of the strong actor $S$ and the state, we have 
\[
L^I_W \geq L^{II}_W \geq L^{III}_W=0.  
\] 
But this ordering need not hold for the human---giving the weak monitor $W$ more freedom to shape the rewards or permissions of the strong actor $S$ need not improve payoffs precisely because $W$ might, itself, be biased.\footnote{Similar concerns were raised by the Roman author Juvenal; see \cite{hurwicz2008guard}.} The basic tradeoff here is familiar to the overarching theme of delegation: giving $W$ more discretion allows it to use its superior information about the strong agent's bias more precisely, and tailor rewards and permissions. But $W$ shapes the behavior of $S$ to its \emph{own} preference which, itself, is biased. Thus, the optimal regime must depend on the relative strengths of these biases.

\begin{figure}[H]
\centering
\includegraphics[width=0.98\textwidth]{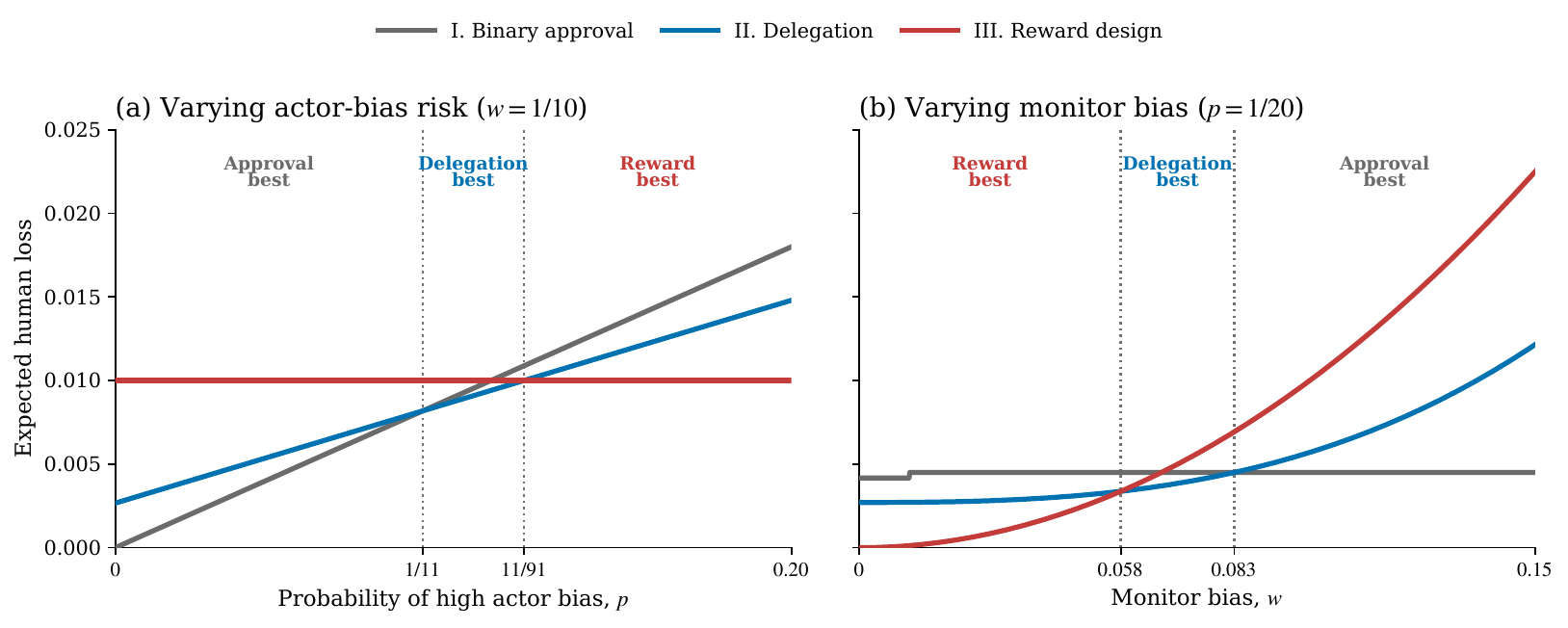}
\caption{\textbf{Each authority regime can be optimal for the human.} In both panels, the strong actor has bias $3/10$ with probability $p$ and zero bias otherwise. Panel~(a) fixes $w=1/10$ and varies $p$. Panel~(b) fixes $p=1/20$ and varies $w$. The dotted lines mark the values at which the human's preferred regime changes.}
\label{fig:instrument-choice-curves}
\end{figure}

We illustrate with the simplest numerical example. Let the strong actor's bias take values $b\in\{0,3/10\}$ and let $p=\Pr(b=3/10)$. Let  the weak monitor's known bias be $w > 0 $. The human's expected losses are depicted in \cref{fig:instrument-choice-curves}. Panel (a) fixes the weak monitor's bias at $w = 1/10$ and varies the probability that the strong actor is biased. We see that the lower-envelope is attained by different regimes, depending on how biased the strong actor is. As the probability that the strong actor is biased increases, we want to give the monitor more discretion to shape its rewards. Panel (b) fixes the distribution over the strong actor's bias and varies the monitor's (known) bias. Here the opposite logic holds---as the weak monitor becomes more biased, we want to give it less ability to shape rewards. 
%\textbf{Relation to the general framework.} Fold the private diagnosis $b$ into $W$'s type and let its input be the chosen approval decision, permission set, or reward. The communication rule sends that instrument to $S$, whose interim signal is $\theta$. The monitor can have a singleton object-action space while its utility depends on $S$'s action. Instrument authority is therefore a choice of the monitor's input space.

\paragraph{Optimal oversight.} Regimes I--III are specific oversight mechanisms. More generally, an oversight mechanism can be thought of as a set of reward schedules $\mathcal{M}\subseteq\mathcal{R}$ from which the weak monitor chooses.\footnote{Thus, $\mathcal{R}$ can be thought of as $W$'s action space, and in our framework the (human) mechanism designer delegates a subset.} Recall that $\mathcal{R}$ contains the reward schedules $r:\mathbb{R}\to\mathbb{R}\cup\{-\infty\}$ under which the strong actor has an optimal action. For instance, Regime III allowed the monitor to pick any schedule in $\mathcal R$ and always implemented $a = \theta + w$ which is the weak monitor's first best. 

We now construct a family $\mathcal{M}^* \subseteq \mathcal{R}$ that attains the human first best. Our construction will not require the human to know the monitor's bias $w$; it only uses a bound $\bar w\geq|w|$. Fix a pair of extreme actions $\bar a$ and $1-\bar a$ with $\bar a>1+\bar w+\sqrt{\bar w^2+2\bar w}$. We construct $\mathcal{M}^* =\{r_{\hat b}:\hat b\in\supp(\tau)\}$, where
\[
r_{\hat b}(a)=
\begin{cases}
-2\hat b a, & a\in[0,1],\\
(\bar a-1)^2-2\bar a\hat b, & a=\bar a,\\
(\bar a-1)^2-2(1-\bar a)\hat b, & a=1-\bar a,\\
-\infty, & \text{otherwise.}
\end{cases}
\]

\Needspace{0.18\textheight}
\begin{proposition}[Robust first best]\label{prop:optimal-oversight}
    Suppose the monitor's bias $w$ has a known bound $|w|\leq\bar w$. Under $\mathcal{M}^* \subseteq \mathcal{R}$, for every $w\in[-\bar w,\bar w]$ and every realization of the strong actor's bias $b$, the weak monitor uniquely chooses a reward $r \in \mathcal{M}^*$ that induces $a=\theta$ almost surely.
\end{proposition}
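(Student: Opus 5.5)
The plan is a two-step backward induction. First I solve the strong actor's problem under an arbitrary menu item $r_{\hat b}$ when its true bias is $b$. Then I show the monitor strictly prefers the item with $\hat b=b$. Write $\delta:=b-\hat b$.

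\emph{Step 1: the strong actor's best response.} On $[0,1]$ the objective $-(a-\theta-b)^2-2\hat b a$ is strictly concave, with unconstrained maximizer $a=\theta+\delta$. Its interior value at that point is $\hat b^2-2\hat b\theta-2\hat b b$. Plugging in the two isolated actions, I expect the comparisons to reduce to
\[
V(\bar a)-V(\theta+\delta)=(\bar a-1)^2-(\bar a-\theta-\delta)^2,
\qquad
V(1-\bar a)-V(\theta+\delta)=(\bar a-1)^2-(\bar a-1+\theta+\delta)^2 .
\]
Hence the extreme action $\bar a$ is strictly preferred exactly when $\theta>1-\delta$, and $1-\bar a$ exactly when $\theta<-\delta$. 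These are precisely the regions where the interior optimum would leave $[0,1]$, so the clipped interior points are never chosen there. The best response is therefore
\begin{itemize}
\item $a=\theta+\delta$ on $\{0\le\theta+\delta\le1\}$;
\item $a=\bar a$ on $\{\theta>1-\delta\}$;
\item $a=1-\bar a$ on $\{\theta<-\delta\}$;
\end{itemize}
with ties only on a null set of $\theta$. In particular, when $\hat b=b$ the actor plays $a=\theta$ for every $\theta\in(0,1)$, so $a=\theta$ almost surely. This also confirms that every $r_{\hat b}$ is admissible.

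\emph{Step 2: the monitor strictly prefers truthful selection.} Choosing $r_b$ gives the monitor loss $\Ex[(\theta-\theta-w)^2]=w^2$. Take $\hat b\neq b$ and, by symmetry (reflect $\theta\mapsto1-\theta$, $a\mapsto1-a$, $w\mapsto-w$), assume $\delta>0$.

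If $\delta\ge1$, every state triggers $\bar a$. The loss is then at least $(\bar a-1-\bar w)^2$, which exceeds $\bar w^2\ge w^2$ because $\bar a>1+2\bar w$.

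If $0<\delta<1$, I split the state space. On $[0,1-\delta]$ the error is $\delta-w$, contributing $(1-\delta)(\delta-w)^2$. On $(1-\delta,1]$ the error is $\bar a-\theta-w\ge\bar a-1-\bar w>0$, contributing at least $\delta(\bar a-1-\bar w)^2$. Subtracting $w^2$ and dividing by $\delta$ gives
\[
\frac{L(\hat b)-w^2}{\delta}\ \ge\ (1-\delta)(\delta-2w)+(\bar a-1-\bar w)^2-w^2
\ >\ -2\bar w-\bar w^2+(\bar a-1-\bar w)^2 .
\]
Here I use $(1-\delta)(\delta-2w)>-2\bar w$ for $\delta\in(0,1)$. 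The right-hand side is strictly positive exactly when $\bar a>1+\bar w+\sqrt{\bar w^2+2\bar w}$, which is the maintained choice of $\bar a$. So every $\hat b\neq b$ gives a strictly larger loss. The monitor therefore uniquely selects $r_b$, which induces $a=\theta$ almost surely. This holds for every $w$ with $|w|\le\bar w$ and every $b$.

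\emph{Main obstacle.} The delicate step is the threshold computation in Step 1. The argument needs the deviation to the extreme actions to occur \emph{exactly} on the clipping region. Otherwise a small misreport could shift actions by roughly $w$ at first order, lowering the monitor's loss below $w^2$, without triggering the extreme penalty. I would verify this identity carefully first. Once it holds, the lower-bound inequality in Step 2 is routine and is where the bound on $\bar a$ enters.
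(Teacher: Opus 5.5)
Your overall route matches the paper's proof. You solve the actor's problem under $r_{\hat b}$ and note that truthful selection costs the monitor $w^2$. You then bound the excess loss from understatement by the same quantity $(\bar a-1-\bar w)^2-\bar w^2-2\bar w$ in both cases $\delta\geq 1$ and $0<\delta<1$, and handle overstatement by reflecting around $1/2$. Step~2 is correct as written. Your bound $(\bar a-\theta-w)^2\geq(\bar a-1-\bar w)^2$ together with $w^2\leq\bar w^2$ is a harmless variant of the paper's $(\bar a-1-w)^2-w^2\geq(\bar a-1-\bar w)^2-\bar w^2$.

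The gap is in Step~1, at the point you flagged yourself. Your two identities are correct. However, they compare the extreme actions with $V(\theta+\delta)$, the value at the \emph{unconstrained} maximizer, and that point is infeasible exactly on the regions where you draw conclusions.

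On $\{\theta+\delta>1\}$, the inequality $V(\bar a)>V(\theta+\delta)$ is sufficient but not necessary for $\bar a$ to beat all of $[0,1]$. Your identity delivers it only when $\theta+\delta<2\bar a-1$. For $\theta+\delta\geq 2\bar a-1$ it gives $V(\bar a)\leq V(\theta+\delta)$, which says nothing about $\bar a$ versus the feasible action $a=1$. This region is nonempty whenever $\delta\geq 2(\bar a-1)$. So it matters in your case $\delta\geq1$, since the support of $\tau$ is unrestricted. When $\bar a<3/2$ (allowed for small $\bar w$), it matters already for some $\delta<1$.

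Likewise, your second identity cannot exclude $1-\bar a$ on $\{\theta+\delta>1\}$, nor select it when $\theta+\delta\leq-2(\bar a-1)$. So ``strictly preferred exactly when $\theta>1-\delta$'' does not follow from what you wrote, and both cases of Step~2 rest on it.

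The repair is to compare with the best \emph{feasible} regular action. By strict concavity this is the clipped point, and the endpoint identities
\[
V(\bar a)-V(1)=2(\bar a-1)(\theta+\delta-1),
\qquad
V(1-\bar a)-V(0)=-2(\bar a-1)(\theta+\delta)
\]
hold for all $\theta$ and $\delta$.
\begin{itemize}
\item On $\{\theta+\delta>1\}$ they give $V(1-\bar a)<V(0)<V(1)<V(\bar a)$, with $V(a)\leq V(1)$ on $[0,1]$.
\item On $\{\theta+\delta<0\}$ the argument is symmetric.
\item On $\{0\leq\theta+\delta\leq1\}$ your own identities suffice.
\end{itemize}
These are the two displays the paper's proof rests on. With them, your best-response description holds for every $\delta$, and the rest of your argument goes through unchanged.
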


\cref{prop:optimal-oversight} states that whenever the weak monitor's bias is known to lie within some bounds, we can pick the menu of reward schedules in such a way that implements the first-best action $a = \theta$. Three reward schedules $r_{\hat b} \in \mathcal{M}^*$ are illustrated in \cref{fig:optimal-oversight-menu}. 

\begin{figure}[H]
\centering
\begin{tikzpicture}[font=\small]
\begin{scope}[x=4.2cm,y=1.5cm]
    \draw[-{Stealth[length=2mm]},thick] (-0.95,-2.0)--(2.0,-2.0);
    \draw[-{Stealth[length=2mm]},thick] (-0.95,-2.0)--(-0.95,1.6);
    \node[above=1pt,font=\scriptsize] at (1.97,-2.0) {$a$};
    \node[above right=0pt,font=\scriptsize] at (-0.95,1.52) {$r_{\hat b}(a)$};
    \draw[densely dotted] (-0.95,0)--(1.95,0);

    % linear tilts on [0,1]
    \draw[red!65!black,densely dashed,thick] (0,0)--(1,-0.30);
    \draw[thick] (0,0)--(1,-0.60);
    \draw[blue!65!black,densely dotted,thick] (0,0)--(1,-0.90);
    \draw[red!65!black,thick,fill=white] (1,-0.30) circle (1.7pt);
    \fill (1,-0.60) circle (1.8pt);
    \node[fill=blue!65!black,inner sep=1.6pt] at (1,-0.90) {};

    % rewards at the high extreme action
    \draw[red!65!black,thick,fill=white] (1.8,0.10) circle (1.7pt);
    \fill (1.8,-0.44) circle (1.8pt);
    \node[fill=blue!65!black,inner sep=1.6pt] at (1.8,-0.98) {};
    \node[font=\scriptsize,anchor=south] at (1.8,0.24) {$r_{\hat b}(\bar a)$};

    % rewards at the low extreme action
    \draw[red!65!black,thick,fill=white] (-0.8,0.88) circle (1.7pt);
    \fill (-0.8,1.12) circle (1.8pt);
    \node[fill=blue!65!black,inner sep=1.6pt] at (-0.8,1.36) {};
    \node[font=\scriptsize,anchor=west] at (-0.72,1.30) {$r_{\hat b}(1-\bar a)$};

    % schedule labels
    \node[font=\scriptsize,red!65!black,anchor=west] at (1.06,-0.30) {$\hat b=3/20$};
    \node[font=\scriptsize,anchor=west] at (1.06,-0.60) {$\hat b=b=3/10$};
    \node[font=\scriptsize,blue!65!black,anchor=west] at (1.06,-0.90) {$\hat b=9/20$};

    % prohibited regions
    \draw[densely dotted] (-0.8,0.74)--(-0.8,-1.62);
    \draw[densely dotted] (0,-0.06)--(0,-1.62);
    \draw[densely dotted] (1,-0.96)--(1,-1.62);
    \draw[densely dotted] (1.8,-1.10)--(1.8,-1.62);
    \draw[line width=1.2pt] (-0.92,-1.62)--(-0.83,-1.62);
    \draw[line width=1.2pt] (-0.77,-1.62)--(-0.03,-1.62);
    \draw[line width=1.2pt] (1.02,-1.62)--(1.77,-1.62);
    \draw[line width=1.2pt] (1.83,-1.62)--(1.95,-1.62);
    \node[font=\scriptsize] at (1.42,-1.80) {prohibited: $r_{\hat b}=-\infty$};

    \draw (-0.8,-2.02)--(-0.8,-1.98) node[below=2pt,font=\scriptsize] {$1-\bar a$};
    \draw (0,-2.02)--(0,-1.98) node[below=2pt,font=\scriptsize] {$0$};
    \draw (1,-2.02)--(1,-1.98) node[below=2pt,font=\scriptsize] {$1$};
    \draw (1.8,-2.02)--(1.8,-1.98) node[below=2pt,font=\scriptsize] {$\bar a$};
    \draw (-0.962,0)--(-0.938,0) node[left=2pt,font=\scriptsize] {$0$};
\end{scope}
\end{tikzpicture}
\caption{The reward menu $\mathcal M^*$ ($\bar a=9/5$; normalized so that $r_{\hat b}(0)=0$)}
\label{fig:optimal-oversight-menu}
\end{figure}

Each schedule $r_{\hat b} \in \mathcal{M}^*$ has three parts with distinct roles. First, the linear tilt of slope $-2\hat b$ over $[0,1]$ cancels a bias of exactly $\hat b$ which, recall, is $W$'s report of $S$'s bias. To see this, we can complete the square: 
\[
-(a-\theta-b)^2+r_{\hat b}(a)=-\bigl(a-\theta-(b-\hat b)\bigr)^2+\text{terms independent of }a. 
\]
Hence, facing the reward schedule $r_{\hat b}$, a strong actor with true bias $b$ behaves as if its bias were the \emph{residual} $b-\hat b$ i.e., it chooses the truncation of $\theta+(b-\hat b)$. Hence, if the weak monitor $W$ reports truthfully, this debiases the strong actor fully and implements $a = \theta$. 

Second, the reward at the isolated extreme action $\bar a$ is calibrated so that, comparing $\bar a$ with the top permitted regular action $a=1$,
\[
\underbrace{\bigl[-(\bar a-\theta-b)^2+r_{\hat b}(\bar a)\bigr]}_{\text{$S$'s payoff from choosing $\bar a$}}
-\underbrace{\bigl[-(1-\theta-b)^2+r_{\hat b}(1)\bigr]}_{\text{$S$'s payoff from choosing $a=1$}}
=2(\bar a-1)\bigl(\theta+(b-\hat b)-1\bigr).
\]
The actor therefore takes $\bar a$ exactly when its residual ideal point $\theta+(b-\hat b)$ exceeds $1$---which never happens after a truthful report. This happens precisely when the monitor $W$ understates its bias ($\hat b<b$) \emph{and} the state is high ($\theta>1-(b-\hat b)$).

Third, the reward at the other isolated extreme action $1-\bar a$ mirrors this: comparing $1-\bar a$ with the bottom permitted regular action $a=0$,
\[
\underbrace{\bigl[-(1-\bar a-\theta-b)^2+r_{\hat b}(1-\bar a)\bigr]}_{\text{$S$'s payoff from choosing $1-\bar a$}}
-\underbrace{\bigl[-(\theta+b)^2+r_{\hat b}(0)\bigr]}_{\text{$S$'s payoff from choosing $a=0$}}
=-2(\bar a-1)\bigl(\theta+(b-\hat b)\bigr).
\]
The actor therefore takes $1-\bar a$ exactly when its residual ideal point $\theta + (b - \hat b)$ falls below $0$. This happens precisely when the monitor $W$ \emph{overstates} its bias ($\hat b>b$) \emph{and} the state is low ($\theta<\hat b-b$). \cref{fig:optimal-oversight-ic} depicts the behavior the schedules induce in the strong agent, and the monitor's payoff from each schedule.

The basic reason the reward schedules include these extreme actions $\bar a$ and $1 - \bar a$ is to counteract the weak monitor's temptation to under or over-report the strong actor's bias. First suppose we had neither extreme action but the reward schedule retained the linear piece. Then, a weak monitor with bias $w>0$ would like to understate the strong actor's bias and report $\hat b = b - w$. Doing so would leave the strong actor with residual bias $w$ so the weak monitor obtains its first best $a = \theta + w$. Thsi was exactly the schedule chosen by the weak monitor in Regime III. Symmetrically, a monitor with $w<0$ would like to overstate and report $\hat b = b - w > b$. The way to deter these deviations is to force the reward schedule to tempt the strong actor with these distant actions---$\bar a$ to deter understatement, and $1-\bar a$ to deter overstatement. This induces a strong actor with bias $b$ to take these extreme actions when facing reward schedule $r_{\hat b}$ where  $\hat b \neq b$.

\begin{figure}[H]
\centering
\begin{tikzpicture}[font=\small]
% Panel (a): induced actions, state by state
\begin{scope}[x=4.6cm,y=1.5cm]
    \draw[-{Stealth[length=2mm]},thick] (0,0)--(1.12,0);
    \draw[-{Stealth[length=2mm]},thick] (0,-0.95)--(0,1.98);
    \node[font=\small] at (0.56,2.30) {(a) Induced actions of $S$};
    \node[below=3pt,font=\scriptsize] at (1.10,0) {$\theta$};
    \node[above=2pt,font=\scriptsize] at (0,1.98) {$a$};

    % guides
    \draw[densely dotted] (0,1)--(0.85,1);
    \draw[densely dotted] (0,1.8)--(0.83,1.8);
    \draw[densely dotted] (0.85,0)--(0.85,1);

    % truthful report: a = theta
    \draw[thick] (0,0)--(1,1);
    \fill (1,1) circle (1.8pt);
    % understatement bhat=3/20 (red)
    \draw[red!65!black,densely dashed,thick] (0,0.15)--(0.85,1);
    \draw[red!65!black,densely dotted,thick] (0.85,1)--(0.85,1.8);
    \draw[red!65!black,densely dashed,thick] (0.85,1.8)--(1,1.8);
    \draw[red!65!black,thick,fill=white] (1,1.8) circle (1.7pt);
    % overstatement bhat=9/20 (blue)
    \draw[blue!65!black,densely dotted,thick] (0,-0.8)--(0.15,-0.8);
    \draw[blue!65!black,densely dotted,thick] (0.15,-0.8)--(0.15,0);
    \draw[blue!65!black,densely dotted,thick] (0.15,0.012)--(1,0.862);
    \node[fill=blue!65!black,inner sep=1.6pt] at (1,0.85) {};

    % legend, upper left
    \draw[red!65!black,densely dashed,thick] (0.03,1.62)--(0.15,1.62);
    \draw[red!65!black,thick,fill=white] (0.09,1.62) circle (1.7pt);
    \node[font=\scriptsize,red!65!black,anchor=west] at (0.17,1.62) {$\hat b=3/20$ (understate)};
    \draw[thick] (0.03,1.46)--(0.15,1.46);
    \fill (0.09,1.46) circle (1.8pt);
    \node[font=\scriptsize,anchor=west] at (0.17,1.46) {$\hat b=b$ (truth): $a=\theta$};
    \draw[blue!65!black,densely dotted,thick] (0.03,1.30)--(0.15,1.30);
    \node[fill=blue!65!black,inner sep=1.6pt] at (0.09,1.30) {};
    \node[font=\scriptsize,blue!65!black,anchor=west] at (0.17,1.30) {$\hat b=9/20$ (overstate)};

    \draw (0.15,-0.03)--(0.15,0.03) node[below right=0pt and -2pt,font=\scriptsize] {$\tfrac3{20}$};
    \draw (0.85,-0.03)--(0.85,0.03) node[below=2pt,font=\scriptsize] {$\tfrac{17}{20}$};
    \draw (1,-0.03)--(1,0.03) node[below=2pt,font=\scriptsize] {$1$};
    \draw (-0.009,1)--(0.009,1) node[left=2pt,font=\scriptsize] {$1$};
    \draw (-0.009,1.8)--(0.009,1.8) node[left=2pt,font=\scriptsize] {$\bar a$};
    \draw (-0.009,-0.8)--(0.009,-0.8) node[left=2pt,font=\scriptsize] {$1-\bar a$};
\end{scope}

% Panel (b): S's payoff over actions, with and without the truthful schedule
\begin{scope}[xshift=7.8cm,yshift=2.32cm,x=2.05cm,y=1.53cm]
    \draw[-{Stealth[length=2mm]},thick] (-0.95,-2.45)--(2.02,-2.45);
    \draw[-{Stealth[length=2mm]},thick] (-0.95,-2.45)--(-0.95,0.42);
    \node[font=\small] at (0.52,0.74) {(b) $S$'s payoff under $r_b$};
    \node[above=1pt,font=\scriptsize] at (1.99,-2.45) {$a$};
    \draw[densely dotted] (-0.95,0)--(1.95,0);

    % state theta1 = 1/4 (red): raw payoff (dashed) and rewarded payoff (solid + dots)
    \draw[red!65!black,densely dashed,thick] plot[domain=-0.95:1.95,samples=90]
        (\x,{-(\x-0.55)^2});
    \draw[red!65!black,thick] plot[domain=0:1,samples=60]
        (\x,{-(\x-0.25)^2-0.24});
    \fill[red!65!black] (0.25,-0.24) circle (1.6pt);
    \fill[red!65!black] (1.8,-2.0025) circle (1.8pt);
    \fill[red!65!black] (-0.8,-0.7025) circle (1.8pt);

    % state theta2 = 3/4 (blue)
    \draw[blue!65!black,densely dashed,thick] plot[domain=-0.51:1.95,samples=90]
        (\x,{-(\x-1.05)^2});
    \draw[blue!65!black,thick] plot[domain=0:1,samples=60]
        (\x,{-(\x-0.75)^2-0.54});
    \fill[blue!65!black] (0.75,-0.54) circle (1.6pt);
    \fill[blue!65!black] (1.8,-1.0025) circle (1.8pt);
    \fill[blue!65!black] (-0.8,-2.3025) circle (1.8pt);

    % peak labels
    \node[font=\scriptsize,red!65!black,anchor=south] at (0.55,0.03) {$\theta_1\!+\!b$};
    \node[font=\scriptsize,blue!65!black,anchor=south] at (1.05,0.03) {$\theta_2\!+\!b$};
    \node[font=\scriptsize,red!65!black,anchor=north] at (0.25,-0.31) {$\theta_1$};
    \node[font=\scriptsize,blue!65!black,anchor=north] at (0.75,-0.61) {$\theta_2$};

    % guides to the extreme actions
    \draw[densely dotted] (1.8,-2.40)--(1.8,-1.06);
    \draw[densely dotted] (-0.8,-2.40)--(-0.8,-0.76);

    % legend: states and line styles
    \draw[red!65!black,thick] (0.10,-1.72)--(0.26,-1.72);
    \node[font=\scriptsize,red!65!black,anchor=west] at (0.30,-1.72) {$\theta_1=\tfrac14$};
    \draw[blue!65!black,thick] (0.10,-1.90)--(0.26,-1.90);
    \node[font=\scriptsize,blue!65!black,anchor=west] at (0.30,-1.90) {$\theta_2=\tfrac34$};
    \draw[thick] (0.10,-2.08)--(0.26,-2.08);
    \node[font=\scriptsize,anchor=west] at (0.30,-2.08) {under the mechanism};
    \draw[densely dashed,thick] (0.10,-2.26)--(0.26,-2.26);
    \node[font=\scriptsize,anchor=west] at (0.30,-2.26) {without the mechanism};

    \draw (-0.8,-2.48)--(-0.8,-2.42) node[below=2pt,font=\scriptsize] {$1-\bar a$};
    \draw (0,-2.48)--(0,-2.42) node[below=2pt,font=\scriptsize] {$0$};
    \draw (1,-2.48)--(1,-2.42) node[below=2pt,font=\scriptsize] {$1$};
    \draw (1.8,-2.48)--(1.8,-2.42) node[below=2pt,font=\scriptsize] {$\bar a$};
    \draw (-0.962,0)--(-0.938,0) node[left=2pt,font=\scriptsize] {$0$};
\end{scope}
\end{tikzpicture}
\caption{Strong actor's best response and payoffs}
\label{fig:optimal-oversight-ic}
\end{figure} 
Notice that the set $\mathcal{M}^*$ uses no information about the monitor's bias beyond the bound $\bar w$. This is because incentives here are asymmetric: the monitor's gain from misreporting is at most of order $|w|$, while its loss from the strong actor taking an extreme action $\in \{\bar a, 1- \bar a\}$ is of order $\bar a^2$. Hence, sufficiently extreme actions can deter every monitor with $|w|\leq\bar w$.

\begin{proof}[Proof of \cref{prop:optimal-oversight}]
Each schedule is admissible because its finite-valued restriction is continuous on the compact set $\{1-\bar a\}\cup[0,1]\cup\{\bar a\}$, so the actor's payoff attains a maximum. Fix a true bias $b\in\supp(\tau)$ and a report $\hat b\in\supp(\tau)$. By the three displays above, the actor chooses the truncation of $\theta+(b-\hat b)$ to $[0,1]$ when $\theta+(b-\hat b)\in[0,1]$, chooses $\bar a$ when $\theta+(b-\hat b)>1$, and chooses $1-\bar a$ when $\theta+(b-\hat b)<0$.

If $\hat b=b$, the actor chooses $a=\theta$ at every state except $\theta\in\{0,1\}$, where it is indifferent with an extreme action; these are null events. The monitor's expected loss is $w^2$.

If $\hat b<b$, write $\delta:=b-\hat b$ and suppose first that $\delta\leq1$. The low extreme action is never taken: the actor chooses $\theta+\delta$ on $\theta<1-\delta$ and $\bar a$ on $\theta>1-\delta$, so the monitor's expected loss in excess of $w^2$ is
\begin{align*}
\underbrace{(1-\delta)\bigl[(\delta-w)^2-w^2\bigr]}_{\text{shifted interior actions}}
\;&+\;
\underbrace{\int_{1-\delta}^{1}\bigl[(\bar a-\theta-w)^2-w^2\bigr]\de\theta}_{\text{states sent to }\bar a} \\ 
\;&\geq\;
\delta\Bigl[(1-\delta)(\delta-2w)+(\bar a-1-w)^2-w^2\Bigr] \\ 
\;&\geq\;
\delta\Bigl[(\bar a-1-\bar w)^2-\bar w^2-2\bar w\Bigr]
\;>\;0.
\end{align*}
The first inequality uses $\bar a-\theta-w\geq\bar a-1-w>0$; the second uses $(1-\delta)(\delta-2w)\geq-2|w|\geq-2\bar w$, which follows by considering $w\geq0$ and $w<0$, together with $(\bar a-1-w)^2-w^2=(\bar a-1)^2-2(\bar a-1)w\geq(\bar a-1-\bar w)^2-\bar w^2$; the third is the choice of $\bar a$. Whatever the monitor gains from shifting interior actions toward $\theta+w$, it loses more from the states sent to $\bar a$. If $\delta>1$, the actor takes $\bar a$ in every state, and the monitor's loss is at least $(\bar a-1-\bar w)^2>\bar w^2+2\bar w\geq w^2$.

If $\hat b>b$, the deviation is the mirror image: with $e:=\hat b-b\leq1$, the actor chooses $\theta-e$ on $\theta>e$ and $1-\bar a$ on $\theta<e$. Reflecting states and actions around $1/2$ maps this case to understatement with biases $(-b,-w)$, so the same bounds apply, including when $e>1$.

Truth is therefore the monitor's uniquely optimal choice in $\mathcal M^*$ for every $|w|\leq\bar w$, and it induces $a=\theta$ almost surely for every $b\in\supp(\tau)$. This proves the claim.
\end{proof}

\end{romanexample}

\section{Discussion}\label{sec:literature}
Just as how economists study and design human incentives without understanding how the brain produces each choice, we have taken a functionalist view of AI systems as agents that respond to incentives. In so doing, we have made behavior---rather than the internal workings of neural nets---the basic object of analysis. This `Humean' view of AI behavior as being driven by beliefs and preferences might be viewed literally, metaphorically, or normatively. 

On the literal view, modern AI systems are post-trained for goal-directed behavior \citep{ouyang2022training,bai2022training}. Reinforcement learning increases the probability of actions that receive high rewards and decreases the probability of actions that receive low rewards. Indeed, a recent turn to make AI systems economically valauble---able to perform tasks in the world---has also made them behave as optimizers. On this view, their `preferences' are simply whatever has been reinforced.  

%If the resulting objectives persist at deployment, preferences describe the goals that emerge from training. This possibility motivates work on goal misgeneralization \citep{shah2022goal}.

On the metaphorical view, preferences need not describe any internal mental state. Instead they are, in the revealed preference tradition,  nothing more than a compact account of stable choices \citep{gul2008case}. This reading has some empirical support. \citet{chen2023emergence} find that GPT choices often satisfy standard consistency tests in allocation problems. \citet{mazeika2025utility} find that pairwise AI choices become more transitive and fit utility models better as models scale. On this view, our conception of AI systems as having preferences are simply a useful analytical tool to characterize how AI systems behave.\footnote{Measured preferences can depend on the prompt, so our results can be interpreted as conditional on the prompt.}

On the normative view, preference maximization is a benchmark for coherent agency. We should expect strong AI systems that are deployed in markets and compete against each other to exhibit coherent preferences (e.g., such as to not be money-pumped).

\paragraph{Future directions.} We conclude with some (specualtive) directions for future work. 

First, our framework, while general, is ultimately static. In practice, AI agents act over long trajectories, performing complex sequences of actions that depend on each other. Of course, these trajectories of actions $(a_t)_t$ can be thought of as a single choice i.e., a member $\in A$. There is no conceptual difficulty here since we can simply set $A$ as the space of all trajectories and, on this view, the reward $r: A \to \mathbb{R}$ is over the full path of actions.\footnote{This is similar to how, in multi-stage games, we can associate a payoffs with `within period' play and take the discounted sum, or associate payoffs directly with the path of all players' actions.} But we expect there to be analytical and computational difficulties in analyzing optimal mechanisms when the action space is large, and more structure of how time-$t$ actions aggregate into `total payoffs' (e.g., discounting the sum) can lend more tractability. Perhaps more fundamentally, the designer might re-optimize the mechanism based on agents' past actions. We think this calls for a more serious treatment of dynamic mechanism and information design. 

Second, AI agents might, in practice, be uncertain about whether it is facing a test or is in deployment (what is often called `evaluation awareness.') These kinds of \emph{de se} uncertainty about where oneself is can be exploited to discipline behavior and screen misaligned agents \citep{chen2024imperfect}. Self-locating information could even be designed to implement good behavior \citep{koh2025memory}. Indeed, a model's belief about whether it is in testing or deployment depends on the realism of the evaluation \citep{openai2026deployment,krakovna2026honeypot} which suggests that the design of self-locating information is empirically relevant.

Third, we might want mechanisms to be robust to what AI agents can do (in the spriit of \cite{carroll2015robustness}) and/or what information they have about the world or about each other (in the spirit of \cite{bergemann2005robust}). Such notions of robustness have been used in the context of using AI's recommendations \citep{fudenberg2025friend,dworczak2026robust}. We think the tools of robust mechanism design can be similarly brought to bear on shaping the preferences and permissions of agentic AI systems.

\setlength{\bibsep}{0pt}
\bibliography{ref}

\clearpage
\appendix 

\section{Omitted Proofs}\label[appendix]{sec:proofs}

\begin{proof}[Proof of \cref{lem:continuum-ironing}]
We first restrict the range. We then compare the two objectives on finite partitions, pass to the continuum, and choose a pointwise feasible version of the limit.

Let the least-squares loss be \(Q\), and let the transformed loss be \(J\):
\[
Q(z)
:=
\int\bigl[z(c)-b(c)^2\bigr]^2\,\de F(c),
\qquad
J(z)
:=
\int\left[
\frac{z(c)^{3/2}}{3}
-b(c)^2\sqrt{z(c)}
\right]\de F(c).
\]

\emph{Step 1: restrict the range.}
Define
\[
\overline z(c)
:=
\left(1-\min\{c,1/2\}\right)^2,
\qquad
(Tz)(c)
:=
\min\{z(c),\overline z(c)\}.
\]
The assumptions on bias imply
\[
b(c)^2
\leq
\begin{cases}
c^2\leq(1-c)^2,&c\leq1/2,\\[3pt]
1/4,&c\geq1/2,
\end{cases}
=
\overline z(c),
\]
while \((1-c)^2\leq\overline z(c)\leq1\). The minimum of two weakly decreasing functions is weakly decreasing, so \(Tz\) is feasible whenever \(z\) is feasible in problem~\eqref{eqn:iron}. On the set where \(z>\overline z\),
\[
\bigl[z-b(c)^2\bigr]^2
-\bigl[\overline z-b(c)^2\bigr]^2
=
\bigl[z-\overline z\bigr]
\bigl[z+\overline z-2b(c)^2\bigr]
>0.
\]
Thus truncation weakly lowers \(Q\), and lowers it strictly if it changes \(z\) on a set of positive \(F\)-probability. We can restrict attention to
\[
\mathcal Z
:=
\left\{
\begin{array}{l}
z:\ z\text{ is measurable and pointwise weakly decreasing},\\
(1-c)^2\leq z(c)\leq\overline z(c)
\end{array}
\right\}.
\]
For every \(z\in\mathcal Z\),
\[
\begin{aligned}
z(c)&=(1-c)^2
&&\text{for }c\leq1/2,\\
(1-c)^2&\leq z(c)\leq\frac14
&&\text{for }c\geq1/2.
\end{aligned}
\]

\emph{Step 2: compare the finite problems.}
If \(\overline c\leq1/2\), the bounds fix \(z(c)=(1-c)^2\), so the result follows. Suppose instead that \(\overline c>1/2\). Take a finite ordered partition \((I_k)_{k=1}^m\) of
\[
\left[\max\{\underline c,1/2\},\overline c\right]
\]
into intervals of positive \(F\)-probability. For each cell, define
\[
w_k:=F(I_k),
\qquad
y_k:=\Ex\!\left[b(c)^2\mid c\in I_k\right],
\qquad
\ell_k:=\sup_{c\in I_k}(1-c)^2.
\]
The feasible cell values form the compact convex set
\[
\mathcal Z_m
:=
\left\{
(z_1,\ldots,z_m):
z_1\geq\cdots\geq z_m,
\quad
\ell_k\leq z_k\leq\frac14
\text{ for every }k
\right\}.
\]
Each \(\ell_k\) is positive. Up to a term in \(Q\) that does not depend on the cell values, the two objectives for cell-constant functions are
\[
Q_m(z)
:=
\sum_{k=1}^m w_k(z_k-y_k)^2,
\qquad
J_m(z)
:=
\sum_{k=1}^m w_k\left[
\frac{z_k^{3/2}}3-y_k\sqrt{z_k}
\right].
\]
For every feasible \(z_k>0\),
\[
\frac{\partial J_m}{\partial z_k}
=\frac{1}{4\sqrt{z_k}}
\frac{\partial Q_m}{\partial z_k},
\qquad
\frac{\partial^2 J_m}{\partial z_k^2}
=\frac{w_k(z_k+y_k)}{4z_k^{3/2}}>0.
\]

The positive weights make \(Q_m\) strictly convex. Let \(z^m=(z_1,\ldots,z_m)\) be its unique minimizer on \(\mathcal Z_m\). Because the feasible set is a convex polyhedron, its first-order conditions give nonnegative multipliers \(\alpha_k\), \(\lambda_k\), and \(\mu_k\) for the order, lower-bound, and upper-bound constraints. With \(\alpha_0=\alpha_m=0\), stationarity requires
\[
\frac{\partial Q_m}{\partial z_k}
+\alpha_{k-1}-\alpha_k-\lambda_k+\mu_k
=0
\qquad(k=1,\ldots,m),
\]
with the corresponding complementary-slackness conditions. Set
\[
s_k:=\frac{1}{4\sqrt{z_k}},
\qquad
\widetilde\alpha_k:=s_k\alpha_k,
\qquad
\widetilde\lambda_k:=s_k\lambda_k,
\qquad
\widetilde\mu_k:=s_k\mu_k.
\]
Set \(\widetilde\alpha_0=\widetilde\alpha_m=0\). 
If \(\alpha_k>0\), complementary slackness gives \(z_k=z_{k+1}\), and hence \(s_k=s_{k+1}\). Multiplying each stationarity equation by \(s_k\) therefore gives
\begin{align*}
\frac{\partial J_m}{\partial z_k}
+\widetilde\alpha_{k-1}-\widetilde\alpha_k
-\widetilde\lambda_k+\widetilde\mu_k
&=
s_k\left[
\frac{\partial Q_m}{\partial z_k}
+\alpha_{k-1}-\alpha_k-\lambda_k+\mu_k
\right]\\
&=0.
\end{align*}
The scaled multipliers remain nonnegative and preserve complementary slackness. They therefore satisfy the first-order conditions for \(J_m\). The displayed positive second derivative makes \(J_m\) strictly convex, so \(z^m\) is also its unique minimizer.

\emph{Step 3: pass to the continuum.}
Take nested interval partitions \((\mathcal P_m)_m\) whose mesh converges to zero, and let \(z_m\) be the unique minimizer of both finite problems. Set \(z_m(c)=(1-c)^2\) for \(c\leq1/2\). Bounded monotone functions have a pointwise convergent subsequence at every continuity point of a monotone limit. Thus, along a subsequence, \(z_m\) converges to a weakly decreasing function \(z^\circ\). A monotone function has at most countably many discontinuities, so atomlessness makes this convergence \(F\)-almost sure. Boundedness then gives
\[
Q(z_m)\longrightarrow Q(z^\circ),
\qquad
J(z_m)\longrightarrow J(z^\circ),
\]
where we relabel the subsequence.

Fix any \(z\in\mathcal Z\). For the cell \(I_m(c)\in\mathcal P_m\) that contains \(c>1/2\), define the upper step approximation
\[
z^{(m)}(c)
:=
\begin{cases}
(1-c)^2,&c\leq1/2,\\[3pt]
\displaystyle\sup_{x\in I_m(c)}z(x),&c>1/2.
\end{cases}
\]
This function belongs to the finite feasible set and converges to \(z\) at every continuity point of \(z\), hence \(F\)-almost surely. Finite optimality and dominated convergence imply
\[
Q(z^\circ)\leq Q(z),
\qquad
J(z^\circ)\leq J(z).
\]
Thus \(z^\circ\) minimizes both objectives over \(\mathcal Z\), and its bounds hold \(F\)-almost surely.

\emph{Step 4: construct a pointwise solution.}
Define
\[
\widehat z(c)
:=
\begin{cases}
(1-c)^2,&c\leq1/2,\\[4pt]
\displaystyle\lim_{x\downarrow c}z^\circ(x),
&1/2<c<\overline c,\\[7pt]
\displaystyle\lim_{x\uparrow\overline c}z^\circ(x),
&c=\overline c>1/2.
\end{cases}
\]
The one-sided limits exist, and \(\widehat z\) differs from \(z^\circ\) only at countably many points. It is weakly decreasing on each side of \(1/2\). If \(1/2\) lies in the capability interval, the almost-sure upper bound and full support imply
\[
\lim_{x\downarrow1/2}z^\circ(x)\leq\frac14=\widehat z(1/2),
\]
so \(\widehat z\) is weakly decreasing on the full interval.

Suppose that either bound fails at some \(c_0\). One-sided continuity of \(\widehat z\) and continuity of the bounds then give a relative one-sided interval on which the same bound fails. Full support gives this interval positive probability, contradicting the almost-sure bounds. Therefore
\[
(1-c)^2
\leq
\widehat z(c)
\leq
\left(1-\min\{c,1/2\}\right)^2
\qquad\text{for every }c.
\]
Hence \(\widehat z\in\mathcal Z\). Since \(\widehat z=z^\circ\) \(F\)-almost surely, it minimizes both objectives.

Finally, \(Q\) is strictly convex. The transformed integrand is also strictly convex wherever \(z>0\), since
\[
\frac{\partial^2}{\partial z^2}
\left[
\frac{z^{3/2}}3-b(c)^2\sqrt z
\right]
=
\frac{1}{4\sqrt z}
+\frac{b(c)^2}{4z^{3/2}}
>0.
\]
The lower bound gives \(z(c)>0\) for every \(c<1\), and atomlessness gives \(F(\{1\})=0\). Each objective therefore has at most one minimizer up to \(F\)-null sets. Step~1 shows that every minimizer of problem~\eqref{eqn:iron} lies in \(\mathcal Z\) up to a null set. Setting \(z^*:=\widehat z\) proves all the claims.
\end{proof}

\begin{proof}[Proof of \cref{prop:cap}]
We first reduce the mechanism to one cap. We then optimize that cap.

\emph{Step 1: reduce the mechanism to one cap.} Fix any incentive-compatible mechanism and let \(D_b\) be the nonempty closed permission set assigned to report \(b\). Because \(D_b\) is nonempty and closed, every target has a nearest point in \(D_b\). Condition~\eqref{eq:cap-interiority} implies
\[
0\leq b\leq1-\sqrt{\Ex[b^2]}\leq1
\qquad \text{$\tau$-almost surely},
\]
so every bias moment used below is finite. We now prove directly that one cap weakly dominates the assigned permission sets.
For a closed set $D$, let $d_D(x)$ be the distance from $x$ to $D$. Define the agent's loss and the human's loss when the bias is $t$ by
\[
A_D(t):=\int_t^{t+1}d_D(x)^2\de x,
\qquad
H_D(t):=\int_0^1\bigl(a_D(\theta,t)-\theta\bigr)^2\de\theta,
\]
where $a_D(\theta,t)$ is a nearest point in $D$ to $\theta+t$. Ties occur only at gap midpoints and do not affect either integral.

We first need one distance inequality. If $d$ is a nonnegative, one-Lipschitz function on $[0,1]$ and $y=\int_0^1d(x)^2\de x$, then
\[
d(1)^2-d(0)^2\leq \Phi(y),
\qquad
\Phi(y):=
\begin{cases}
(3y)^{2/3},&0\leq y\leq 1/3,\\
2\sqrt{y-1/12},&y\geq1/3.
\end{cases} \tag{*}
\]
To prove this inequality, put $p=d(1)$. If $p\leq1$, the Lipschitz property gives
\[
y\geq\int_{1-p}^1(p-1+x)^2\de x=\frac{p^3}{3}.
\]
Thus $d(1)^2-d(0)^2\leq p^2\leq(3y)^{2/3}$ when $y\leq1/3$; when $y\geq1/3$, the same difference is at most $1\leq2\sqrt{y-1/12}$. If $p\geq1$, then $d(0)\geq p-1$ and
\[
y\geq\int_0^1(p-1+x)^2\de x
=p^2-p+\frac13.
\]
It follows that
\[
d(1)^2-d(0)^2
\leq p^2-(p-1)^2
=2p-1
\leq2\sqrt{y-1/12}.
\]
This proves~$(*)$.

Fix a set $D$ and a bias $b$. There is a unique $c\leq b+1$ such that the cap $K_c:=(-\infty,c]$ gives this type the same loss:
\[
A_{K_c}(b)=A_D(b).
\]
Existence follows because $A_{K_c}(b)$ rises continuously from zero to infinity as $c$ falls from $b+1$ to $-\infty$. We claim that
\[
A_D(t)\leq A_{K_c}(t)
\qquad\text{for every }t\geq b. \tag{**}
\]
Indeed, differentiation under the integral gives
\[
A_D'(t)=d_D(t+1)^2-d_D(t)^2\leq\Phi\bigl(A_D(t)\bigr),
\]
where the inequality applies~$(*)$ to the translated distance function on $[t,t+1]$. The cap attains equality because its distance function is the upper ramp $x\mapsto(x-c)_+$. Hence
\[
A_{K_c}'(t)=\Phi\bigl(A_{K_c}(t)\bigr)
\qquad\text{for }t\geq b.
\]
Let $Q(y):=\int_0^y1/\Phi(v)\de v$. This function is finite and strictly increasing. Wherever the loss is positive,
\[
\frac{\de}{\de t}Q\bigl(A_D(t)\bigr)\leq1
=\frac{\de}{\de t}Q\bigl(A_{K_c}(t)\bigr).
\]
The two losses agree at $t=b$, so integration proves~$(**)$; zero-loss points follow by continuity.

The same cap also weakly lowers the human's loss for type $b$. The envelope formula gives
\[
A_D'(b)
=-2\int_0^1\bigl(a_D(\theta,b)-\theta-b\bigr)\de\theta,
\]
and therefore
\[
H_D(b)=A_D(b)-bA_D'(b)+b^2.
\]
At $t=b$, inequality~$(*)$ gives $A_D'(b)\leq A_{K_c}'(b)$. Since $b\geq0$ and the two agent losses agree at $b$, we obtain
\[
H_{K_c}(b)\leq H_D(b). \tag{***} \label{eq:triplestar}
\]

Apply this cap replacement to every assigned set $D_b$ and write $c(b)$ for its endpoint. Take any $b<b'$ in $\supp(\tau)$. Completeness lets type $b'$ report $b$. Incentive compatibility and~$(**)$ therefore imply
\[
A_{K_{c(b')}}(b')
=A_{D_{b'}}(b')
\leq A_{D_b}(b')
\leq A_{K_{c(b)}}(b').
\]
The loss from a cap is strictly decreasing in its endpoint below $b'+1$. Since $c(b)\leq b+1<b'+1$ and $c(b')\leq b'+1$, this inequality requires
\[
c(b)\leq c(b').
\]
Thus $c(b)$ is weakly increasing and hence measurable on $\supp(\tau)$. Meanwhile, ($***$) shows that each loss-matched cap weakly lowers the human's loss for its matched type. We use these caps only as comparison objects; their report-dependent profile need not preserve every incentive constraint.

It remains to pool these ordered caps. The human's loss for known bias $b$ is minimized at
\[
q(b):=\max\{1/2,1-b\},
\]
as the derivative calculated in Step~2 below confirms. The function $q(b)$ is weakly decreasing, while $c(b)$ is weakly increasing. For each $b$, consider the closed interval between $c(b)$ and $q(b)$. If $b<b'$, then
\[
\min\{c(b),q(b)\}
\leq c(b')
\leq\max\{c(b'),q(b')\}
\]
and
\[
\min\{c(b'),q(b')\}
\leq q(b)
\leq\max\{c(b),q(b)\}.
\]
These inequalities show that the lower endpoint of any such interval is no greater than the upper endpoint of any other. Taking the supremum of the lower endpoints and the infimum of the upper endpoints gives
\[
\sup_{b\in\supp(\tau)}\min\{c(b),q(b)\}
\leq
\inf_{b\in\supp(\tau)}\max\{c(b),q(b)\}.
\]
Both bounds are finite because $1/2\leq q(b)\leq1$. Choose a common endpoint $c$ between them. It lies between $c(b)$ and $q(b)$ for every supported bias. Moving a cap toward $q(b)$ weakly lowers the human's loss, so ($***$) gives
\[
H_{K_c}(b)
\leq H_{K_{c(b)}}(b)
\leq H_{D_b}(b)
\qquad\text{for every }b\in\supp(\tau).
\]
Assigning $K_c$ after every report is incentive compatible. We can therefore restrict attention to one cap $(-\infty,\bar a]$.

\emph{Step 2: optimize the cap.} Let $L_b(\bar a)$ denote the human's expected loss for type $b$. For $\bar a\leq1$, direct integration and differentiation give
\[
\frac{\de L_b(\bar a)}{\de \bar a}=
\begin{cases}
2\bar a-1,&\bar a\leq b,\\
b^2-(1-\bar a)^2,&b<\bar a.
\end{cases}
\]
Condition~\eqref{eq:cap-interiority} bounds $b$ by one, so dominated convergence justifies taking expectations of these derivatives.
Set $\bar a^*=1-\sqrt{\Ex_{b\sim\tau}[b^2]}$. Condition~\eqref{eq:cap-interiority} gives $b\leq\bar a^*\leq1$ almost surely. If $\bar a<0$, then $\frac{\de L_b(\bar a)}{\de \bar a}=2\bar a-1<0$ for every type. If $0\leq\bar a<\bar a^*$, then
\[
\Ex_{b\sim\tau}\left[\frac{\de L_b(\bar a)}{\de \bar a}\right]
\leq
\Ex_{b\sim\tau}[b^2]-(1-\bar a)^2<0,
\]
because $2\bar a-1\leq b^2-(1-\bar a)^2$ whenever $b\geq\bar a$. If $\bar a^*<\bar a\leq1$, interiority gives $b<\bar a$ almost surely, so
\[
\Ex_{b\sim\tau}\left[\frac{\de L_b(\bar a)}{\de \bar a}\right]
=\Ex_{b\sim\tau}[b^2]-(1-\bar a)^2>0.
\]
For $\bar a\geq1$, a binding cap has derivative $b^2-(\bar a-1)^2>0$, while a cap that does not bind has derivative zero. Loss is therefore nondecreasing above one. These inequalities prove part~(i).

It remains to calculate the value. Condition~\eqref{eq:cap-interiority} gives $0\leq b\leq\bar a^*\leq1$. More generally, whenever $b\leq\bar a\leq1$ almost surely,
\[
L(\bar a)
=\Ex_{b\sim\tau}\!\left[
\int_0^{\bar a-b}b^2\,d\theta
+\int_{\bar a-b}^1(\bar a-\theta)^2\,d\theta
\right]
=\Ex_{b\sim\tau}\!\left[
\bar a b^2-\frac{2}{3}b^3+\frac{(1-\bar a)^3}{3}
\right].
\]
Substituting $\bar a^*=1-\sqrt{\bar b^2+\sigma^2}$ gives part~(ii). Assigning $D^*$ after every report is feasible and incentive compatible, so the bound is attained.
\end{proof}

\end{document}